\numberwithin{equation}{section}
\newcommand{\nocontentsline}[3]{}
\newcommand{\tocless}[2]{\bgroup\let\addcontentsline=\nocontentsline#1{#2}\egroup}
\newcommand{\changelocaltocdepth}[1]{%
  \addtocontents{toc}{\protect\setcounter{tocdepth}{#1}}%
  \setcounter{tocdepth}{#1}%
}
\newcommand{\sym}{\mathrm{Sym}}
\newcommand{\define}{\overset{\rm def}{=}}
\DeclareMathOperator{\coker}{coker}
\DeclareMathOperator{\Nlet}{Nilp}
\def\Nilpp(#1,#2){Y\qty({#1};{#2})}
\def\Nilp(#1,#2){C\Nlet\qty({#1};{#2})}
\begin{document}
\onehalfspacing

\title{Perspectives on the pure spinor superfield formalism}

\author{Richard Eager}
\address{Kishine Koen \\ 222-0034 Yokohama, Japan}
\email{eager@mathi.uni-heidelberg.de}

\author{Fabian Hahner}
\address{Mathematisches Institut der Universit\"at Heidelberg \\ Im Neuenheimer Feld 205 \\ 69120 Heidelberg, Deutschland}
\email{fhahner@mathi.uni-heidelberg.de}

\author{Ingmar Saberi}
\address{Ludwig-Maximilians-Universit\"at M\"unchen, Fakult\"at f\"ur Physik \\ Theresienstra\ss{}e 37 \\ 80333 M\"unchen, Deutschland}
\email{i.saberi@physik.uni-muenchen.de}

\author{Brian R. Williams}
\address{School of Mathematics\\ University of Edinburgh \\ Edinburgh, UK}
\email{brian.williams@ed.ac.uk}

\begin{abstract}
  In this note, we study, formalize, and generalize the pure spinor superfield formalism from a rather nontraditional perspective. To set the stage, we review the notion of a multiplet for a general super Lie algebra, working in the context of the BV and BRST formalisms. Building on this, we explain how the pure spinor superfield formalism can be viewed as constructing a supermultiplet out of the input datum of an equivariant graded module over the ring of functions on the nilpotence variety. We use the homotopy transfer theorem and other computational techniques from homological algebra to relate these multiplets to more standard component-field formulations. Physical properties of the resulting multiplets can then be understood in terms of algebrogeometric properties of the nilpotence variety. We illustrate our discussion with many examples in various dimensions.
\end{abstract}

\maketitle
\thispagestyle{empty}

\setcounter{tocdepth}{2}
\newpage
\tableofcontents

\section{Introduction}

Speaking broadly, a classical field theory concerns itself with the study of the sheaf of solutions to particular partial differential equations on the spacetime manifold, or more properly on the site of manifolds equipped with appropriate structure. Over an open set $U$, one considers solutions to the equations of motion of the theory on~$U$, considered up to gauge equivalence; since the equations of motion that are of physical interest tend to arise from variational principles, we will refer to it with the suggestive notation $\Crit(S)/\G$, where $\G$ refers to the group of local gauge transformations.

In general, this sheaf has several properties: First and foremost, its sections over~$U$ can be thought of as a covariant version of the phase space associated to~$\partial U$~\cite{CPS}, and thus have the structure of a symplectic space. (We are passing over numerous technical subtleties in silence; in particular, degeneracies of various kinds can and do occur, notably in the theory of constrained systems. Such examples arise naturally in our context~\cite{CederwallM5,twist20}, though we do not treat degeneracies in any detail here.) As  already indicated above, it may not consist just  of  the space of solutions to the equations of motion, but of its quotient by gauge equivalences. Lastly, since the degrees of freedom of many quantum field theories include fermions, it should most properly be understood as a (possibly singular,  stacky, or infinite-dimensional) supermanifold or graded space. 

In studying field theories, \emph{symmetries} play a crucial  role. Let $\lie{g}$ be a sheaf of Lie algebras. A classical theory has a {symmetry} by $\lie{g}$ when it is equipped with a map 
\deq{
	\rho: \lie{g} \to \Vect(\Crit(S)/\G)
}
of sheaves of Lie algebras. Usually, $\lie{g}$ is either a constant or a locally free sheaf (though other examples are possible, notably in holomorphic field theories). 
In the former case, one refers to a ``global'' symmetry, and in the latter to a ``local'' symmetry. By Noether's second theorem, local symmetries correspond to degeneracies in the variational problem of precisely the kind we ruled out above; as such, local symmetries are usually only relevant when gauged, and the terms ``local symmetry'' and ``gauge symmetry'' are often used interchangeably. 
Examples of symmetries abound; for example, any field theory on affine space should admit the Lie  algebra of infinitesimal affine transformations (the ``Poincar\'e  algebra'') as a symmetry, reflecting the coordinate  invariance (homogeneity and isotropy) of its dynamics. 

Since fermions are typically present in the theory, $\Vect(\Crit(S)/\G)$ is most naturally not a Lie algebra, but a \emph{graded} or \emph{super} Lie algebra. The most important examples of super Lie algebras extend the Poincar\'e symmetry by odd spacetime symmetries transforming in the spin representation of the Lorentz group; a field theory that admits an action of such an algebra is called \emph{supersymmetric}. 
The problem of constructing supersymmetric field theories has a long history in physics, dating back to the first explorations of the subject in the seventies~\cite{GolfandLikhtman,GervaisSakita,VolkovAkulov}. 

It is common wisdom in physics that representations of supersymmetry algebras in typical field theory models can be quite intricate. Often, the supersymmetry algebra closes only on-shell or up to gauge transformations. In other words, while a symmetry of the theory in the above sense can be defined, it does not arise in a straightforward manner from an action on the larger space of fields inside  of which  the equations of motion are solved. This leads, among other issues, to difficulties in quantizing the theory.  

In typical field theory models the structure of supersymmetry transformations roughly falls into four distinct cases:
\begin{itemize}
  \item There is a set of fields on which the supersymmetry algebra is represented on the nose. This is the case, for example, for the four-dimensional $\cN = 1$ chiral multiplet.
  \item The supersymmetry algebra is only represented after taking the quotient by the action of the gauge group. This happens, for example, for the four-dimensional vector multiplet.
	\item The supersymmetry algebra is represented only after imposing the equations of motion. Here, the six-dimensional hypermultiplet is an example.
        \item The supersymmetry algebra is represented only after taking the quotient by gauge transformation and imposing the equations of motion. This most general case appears in ten-dimensional super Yang--Mills theory, among other examples.
\end{itemize}
The first objective of this note is to formalize these considerations using the language of homotopical algebra; we work in the context of the BRST and BV formalisms, which seek to respectively replace the quotient by gauge symmetries and the imposition of equations of motion by appropriate derived analogues. In~\S\ref{sec: prelim}, we set up some necessary preliminaries for this context; in particular, we give a definition of a multiplet that is designed to capture all these different aspects of symmetry in our context.

Once this terminology is established, we turn our attention towards the construction of supermultiplets via the pure spinor superfield formalism; see~\cite{BerkovitsSuperparticle}, and especially the review~\cite{Cederwall} and references therein. 
Our perspective is somewhat nontraditional. In~\S\ref{sec: formalism} we set up the formalism in a generalized setting (without restricting to supersymmetry algebras of physical interest), clarify its relation to various standard constructions in homological algebra, and give an explicit account of calculational techniques from commutative algebra. 

In our interpretation, which builds on that in~\cite{NV}, the pure spinor superfield formalism constructs a supermultiplet out of the datum of an equivariant module over the ring of functions $\cO_Y$ on the nilpotence variety $Y$ of the relevant superalgebra. Speaking roughly, the output of the formalism is a rather large cochain complex that is automatically equipped with a strict action of the supersymmetry algebra---indeed, which is quasi-isomorphic to a standard component-field description of the multiplet in the BRST or BV formalism, but which is free over \emph{superspace} rather than just over the spacetime manifold. We can then recover the usual component-field description by moving from this large resolution to a smaller, quasi-isomorphic cochain complex of vector bundles over spacetime, which is in a certain sense ``minimal'' resolution of this kind. A particular filtration on the pure spinor cochain complex produces the component-field formulation in canonical fashion; the set of component fields is identified with the vector bundle associated to the representation of Lorentz and $R$-symmetry  on the Koszul homology of the input module.

One can then transfer the various structures present on the large complex to the component fields, using the homotopy transfer theorem. As we will see, this procedure links the component field description of the multiplet closely to the minimal free resolution of the equivariant module over the ambient polynomial ring. In particular, we find that the non-derivative part of the supersymmetry transformations can be read off directly from the resolution differential. This provides a proof for a conjecture made by Berkovits in~\cite{BerkovitsSupermembrane}.

Given our presentation of the pure spinor superfield formalism, it is natural to ask questions how algebraic properties of $\cO_Y$-modules are related to physical properties of the resulting multiplet. In~\S\ref{sec: data}, we point out that the Gorenstein property ensures the existence of a pairing on the multiplet; this pairing, however, can admit various different physical interpretations. We furthermore study dualizing modules and explain how the Cohen--Macaulay property is related to antifield multiplets.

Throughout the text we illustrate the procedure with examples in different dimensions and with various amounts of supersymmetry. In particular, we provide a detailed discussion of ten-dimensional super Yang--Mills theory showing how all the different structures present in the component field formulation arise via homotopy transfer.

\changelocaltocdepth{1}
\subsection*{Acknowledgements}
We would like to give special thanks to J.~Walcher for conversations and collaboration on related projects, and to M.~Cederwall for conversation and inspiration. We also gratefully acknowledge conversations with I.~Brunner, C.~Elliott, B.~Haake, and S.~Raghavendran. This work is funded by the Deutsche Forschungsgemeinschaft (DFG, German Research Foundation) under Germany's Excellence Strategy EXC 2181/1 --- 390900948 (the Heidelberg STRUCTURES Excellence Cluster). I.S. is supported by the Free State of Bavaria.  The work of B.W. is supported by the University of Edinburgh.

\changelocaltocdepth{2}

\section{Preliminaries}
\label{sec: prelim}
\subsection{Gradings and basic definitions}
Throughout this paper, we work with objects (be they vector spaces, vector bundles, associative algebras, or Lie algebras) that are graded by $\Z\times \Z/2\Z$. We will use the abbreviation ``dgs,'' for ``differential graded super,'' to refer to objects of this sort, at least for emphasis. 
\begin{dfn} 
  A \emph{dgs vector space} is a $\Z \times \Z/2\Z$-graded vector space $E^\bu$, equipped with a square-zero differential $d$ of bidegree $(1,+)$. Equivalently, $E^\bu$ is a cochain complex in the category of super vector spaces. 
  We can thus write
  \deq{
    E = \bigoplus_k \left( E^k_+ \oplus E^k_- \right).
  }
  The \emph{total parity} $|v| \in \Z/2\Z$ of a homogeneous element $v \in E^n$ is defined by 
  \deq{
    |v| = \begin{cases} n \bmod 2, & v \in E^n_+; \\
      n + 1 \bmod 2, & v \in E^n_-.
    \end{cases}
  }
\end{dfn}
We remark that each of these gradings has a clear physical meaning: the integer grading corresponds to the ghost number or homological degree, whereas the $\Z/2\Z$ grading corresponds to the intrinsic parity (fermion number modulo two). In some contexts, it will be useful for us to think about defining an integer-valued fermion number---in other words, lifting the $\Z/2\Z$ grading to a second integer grading. This motivates the following definition:
\begin{dfn} 
  A \emph{lift} of a dgs vector space $E^\bu$ is an integer grading 
  \deq{ E^k = \bigoplus_i E^k_i }
  on each graded component of~$E^\bu$ that lifts the intrinsic parity, such that the differential $d$ has bidegree $(1,0)$. In other words,
  \deq{
    E^k_+ = \bigoplus_{i \text{ even}} E^k_i, \qquad
    E^k_- = \bigoplus_{i\text{ odd}} E^k_i.
  }
  A lifted dgs vector space is thus a cochain complex in the category of graded vector spaces---in other words, a bigraded vector space with a differential of degree~$(1,0)$.
\end{dfn}
\begin{dfn}
  A \emph{commutative dgs algebra}, or cdgsa, is a dgs vector space $A^\bu$ equipped with a bilinear multiplication 
  \deq{
    m_2: A^\bu \otimes A^\bu \to A^\bu.
  }
  The multiplication is required to be a cochain map of bidegree $(0,+)$; furthermore, it should be commutative with respect to the Koszul sign rule determined by the total parity. That is, 
  \deq{
    ab = (-1)^{|a||b|} ba.
  }
We remark that a cdgsa is a commutative differential graded algebra in the category of super vector spaces. There is also an obvious notion of a lift of a cdgsa, such that a lifted cdgsa is a commutative differential graded algebra in the category of graded vector spaces.
Finally, we can extend our definitions to encompass super $A_\infty$ algebras: a (lifted) super $A_\infty$ algebra $A^\bu$ is an $A_\infty$ algebra in the category of super (or graded) vector spaces. That is, it is a collection 
\deq{
  A^\bu = \bigoplus_k A^k
}
of super (or graded) vector spaces, equipped with maps $m_n$ of arity $n$ and bidegree $(2-n,+)$ or $(2-n,0)$ that satisfy the usual $A_\infty$ relations.
\end{dfn}
\begin{eg}
  Let $V^\bu$ be a dgs vector space. The \emph{polynomial algebra} $\Sym(V^\bu)$ is the free dgs-commutative algebra generated by~$V^\bu$. Concretely, it is the quotient 
\deq{
  \Sym(V^\bu) = T(V^\bu)/\langle x y - (-1)^{|x||y|} y x \rangle
}
of the tensor algebra by the ideal generated by all (anti)commutators of homogeneous elements, where (anti)commutativity is determined by the Koszul sign rule for the total parity.
\end{eg}

Of course, all of the notions we have introduced for associative algebras have parallels for Lie algebras, which we now quickly introduce.
Let $x_1,\ldots, x_n$ be homogeneous elements of a dgs vector space $V^\bu$, and $\sigma \in S_n$ a permutation. Then the \emph{Koszul sign} 
$\epsilon(x_1,\ldots, x_n ; \sigma)$ of the permutation is defined by the relation 
\deq{
x_1 \cdots x_n = \epsilon(x_1,\ldots, x_n ; \sigma) x_{\sigma(1)} \cdots x_{\sigma(n)}  .
}
in the algebra $\Sym(V^\bu)$.
Furthermore define $\chi(\sigma) = (-1)^{{\rm sgn}(\sigma)} \epsilon(x_1,\ldots,x_n ; \sigma)$.

\begin{dfn}
  Let $\fg$ be a (lifted) dgs vector space.
  A \emph{(lifted) super $L_\infty$ algebra structure} on $\fg$ is a collection of multilinear maps 
\begin{equation}
	\mu_k \colon \fg^{\times k} \to \fg
\end{equation}
for $k \geq 1$, of bidegree $(2-k,+)$ (or $(2-k,0)$, respectively), such that the following two conditions hold:
\begin{itemize}
\item[(1)] Graded skew symmetry. 
For all $\sigma \in S_k$, $x_i \in \fg$ one has 
\begin{equation}
	\mu_k\left( x_{\sigma(1)},\ldots, x_{\sigma(k)} \right) = \chi(\sigma) \mu_k \left( x_1,\ldots, x_k \right) \: .
\end{equation}

\item[(2)] Higher Jacobi identities. 
For all $x_i \in \fg$ one has
\begin{equation}
	\sum_{i+j = k+1} \ \sum_{\sigma \in S(i;k)} (-1)^{i(j-1)} \chi(\sigma) \mu_j \left( \mu_i \left( x_{\sigma(1)}, \ldots, x_{\sigma(i)} \right) , x_{\sigma(i+1)}, \ldots, x_{\sigma(k)} \right) = 0 \: .
\end{equation}
\end{itemize}
Here $S(i;k) \subset S_k$ denotes all permutations such that $\sigma(1) \leq \dots \leq \sigma(i)$ and $\sigma(i+1) \leq \dots \leq \sigma(k)$.  We remark that a (lifted) super $L_\infty$ algebra is just an $L_\infty$ algebra in super (respectively, in graded) vector spaces. We further remark that the datum of a (lifted) super $L_\infty$ algebra structure is equivalent to a square-zero derivation of bidegree $(1,+)$ (or $(1,0)$ in the lifted case) on the free dgs commutative algebra $\Sym(\fg^\vee[-1])$. This derivation $\d_\fg$ defines the complex computing Lie algebra cohomology,
\deq{
\clie^\bu(\fg) := \left( \Sym(\fg^\vee[-1]) \: , \: \d_\fg \right).
}
The shift is with respect to the homological degree.
\end{dfn}

There are some special cases of this definition that we point out. When $\lie{g}$ is supported purely in even parity, we recover the 
ordinary notion of an $L_\infty$ algebra~\cite{HinichSchechtman, LadaMarkl}. 
On the other hand, when $\lie{g}$ is supported in degree zero, we recover the notion of a super Lie algebra (or, in the lifted case, a graded Lie algebra). When $\mu_k = 0$ for all $k>2$, we obtain the notion of a \emph{dg super Lie algebra}.

\begin{eg} \label{ex: End}
  Let $V^\bu$ be a (lifted) dgs vector space. Then $\End(V^\bu)$ is a dg super Lie algebra; the bracket $\mu_2$ is given by the commutator 
  \deq{
    \mu_2(x,y) = [x,y] = xy - (-1)^{|x||y|} yx
  }
  whereas the differential arises via
  \deq{
  d_{\mathrm{End}(V^\bu)} = [d,-] \: .
  }
  We remark that $\End(V^\bu)$ is in fact naturally a dgs associative algebra; the dgs Lie structure is obtained by applying the usual forgetful functor. 
\end{eg}

\begin{dfn} \label{dfn:Lmap}
A {\em $L_\infty$ map} between super $L_\infty$ algebras 
\[
\Phi \colon \fg \rightsquigarrow \fh
\]
is a map of graded super commutative algebras 
\deq{
  \Phi^*: \clie^\bu(\fh) \rightarrow \clie^\bu(\fg).
}
that preserves the augmentation map to constants in degree zero.
\end{dfn}

\begin{dfn} \label{dfn:Lmodule}
  Let $\fg$ be a super $L_\infty$ algebra. An \emph{$L_\infty$ dgs module} is a dgs vector space $V^\bu$, together with an $L_\infty$ map
  \deq{
    \lie{g} \rightsquigarrow \End(V^\bu).
  }
\end{dfn}

\subsection{Homotopy transfer}

We will repeatedly make use of the \emph{homotopy transfer theorem} in various contexts. We refrain from giving a general review of homotopy algebraic structures here; the reader is referred to~\cite{Vallette,LadaMarkl}. Nonetheless, we will quickly recall the general idea.

It is common knowledge that various mathematical objects---for example sheaves or modules---admit interesting ``higher structures.'' This might include higher sheaf cohomology groups, for example, or more generally other derived functors such as $\Ext$ and $\Tor$. These higher structures originate, in some sense, from the ``constraints" imposed on these objects: for example, the failure of a module to be free. 

To compute higher derived functors, one technique is to replace the object one wants to study by a ``resolution.'' This is a cochain complex of simpler objects (for example, free modules) that is quasi-isomorphic to the complicated object one wants to study. In derived geometry, one views this cochain complex as a replacement of the underlying object. 

Just as the equations defining a non-free module lead to higher structures and need to be resolved, many algebraic structures are defined by collections of structure morphisms that satisfy certain strict equations. (For example, one requires associativity in the form $((ab)c) = (a(bc))$, or the Jacobi identity for a Lie bracket.) When such equations are imposed in a cochain complex, they do not play well with homotopy-theoretic operations or notions of equivalence such as quasi-isomorphism. The remedy consists of ``resolving'' the equations that are imposed on the defining maps of the algebraic structure. In technical language, one resolves the operad defining the algebraic structure one is interested in by a free dg operad. (See~\cite{Markl} for discussion of this perspective.) 

There is then a collection of general results, which state that a homotopy algebraic structure may be transferred along \emph{homotopy data} between two quasi-isomorphic cochain complexes (for example, a deformation retract) by summing over all marked trees in a consistent fashion. Vertices are to  be labeled with operations of the structure to be transferred, and internal edges with the homotopy. See~\cite{Vallette} for more details.

The phenomenon of homotopy transfer is very broad, and encompasses many examples from throughout mathematics, both more and less familiar. We mention some examples:
\begin{itemize}
  \item A cochain complex is defined by a grading, together with a single endomorphism $D$ of degree $+1$, satisfying the equation $D^2 = 0$.
  A cochain complex in cochain complexes is a bicomplex: we give a second grading on $(C^\bu, d)$, together with a square-zero cochain map $D$. Resolving the equation $D^2 = 0$ gives rise to an operad known as the $D_\infty$ operad: it encodes a sequence of maps $D_i$ of bidegree $(1-i,i)$, which obey the relations 
    \deq{
      d D_n + (-1)^n D_n d = \sum_{i+j=n} (-1)^i D_i D_j.  
    }
    Homotopy transfer of~$D$ to $H^\bu(C,d)$ generates a $D_\infty$ module structure whose constituent maps encode the higher differentials of the spectral sequence of the bicomplex. This will play a role for us in describing the relation of pure spinor superfields to their component-field descriptions; see~\S\ref{sec: formalism}.
  \item The operad governing associative algebras is resolved by the $A_\infty$ operad, which has operations $\{m_n\}$ of arity $n$ and degree $2-n$ for all $n\geq 1$. Similarly, the operad governing Lie algebras is resolved by the $L_\infty$ operad, which has bracket operations $\mu_n$ of arity $n$ and degree $2-n$ for all $n\geq 1$ as we reviewed explicitly above. Transferring the associative algebra structure on de Rham forms to cohomology produces an $A_\infty$ structure with vanishing $m_1$ and $m_2$ the ordinary cup product. Higher $m_n$'s correspond to the classical Massey product operations.
  \item In the BV formalism, a perturbative classical field theory is described by a cyclic local $L_\infty$ algebra whose differential encodes the linearized equations of motion and gauge invariances of the free theory. Homotopy transfer to the cohomology of the differential is related to the interaction picture in quantum field theory; the diagrams that describe the transferred $L_\infty$ structure on on-shell states are precisely tree-level Feynman diagrams, where the homotopy is the Feynman propagator. The operations of the transferred $L_\infty$ structure correspond to tree-level amplitudes~\cite{Kajiura, MacrelliScattering}. Homotopy transfer of~$L_\infty$ structures will be relevant for us when discussing interactions for pure spinor superfields and their relation to the component-field formalism; see~\S\ref{sec: 10d} for an example.
  \item In the BV or BRST formalism, the symmetries of a field theory are encoded as $L_\infty$ module structures on the complex of fields. Moving to another quasi isomorphic complex of fields (e.g by integrating out an auxiliary field), one can obtain the new module structure via homotopy transfer. We will use this to derive the action of the supersymmetry algebra on the component fields in the pure spinor superfield formalism. An explicit account on the homotopy transfer for module structures is given in Appendix~\ref{ap:modules}.
    
\end{itemize}

\subsection{Maurer--Cartan elements and nilpotence varieties}
We recall that the Maurer--Cartan equation in an $L_\infty$ algebra $\lie{g}$ takes the form 
\begin{equation}\label{eq:MC}
  \sum_{k\geq 1} \frac{1}{k!} \mu_k \left( x,\ldots,x \right) = 0.
\end{equation}
  
Here $x \in \lie{g}$ is an element of degree one; each of the terms in the above equation thus carries degree two.
We can clearly generalize this definition to super $L_\infty$ algebras by asking for Maurer--Cartan elements $x$ of bidegree $(1,+)$. It is straightforward to see that Maurer--Cartan elements of this form correspond precisely to deformations of the super $L_\infty$ algebra structure. We will write $\MC(\lie{g})$ for the space of Maurer--Cartan elements; note that we do not quotient this space by the action of degree-zero elements, preferring to think of $\MC(\lie{g})$ as a space equipped with a $\lie{g}_0$-action by vector fields.

Now, given any super $L_\infty$ algebra, we can forget the $\ZZ \times \ZZ/2\ZZ$-grading down to a $\Z/2\Z$-grading by remembering only the total parity. This is enough information to define the appropriate Koszul signs, and $\mu_k$ is then simply a multilinear operation with appropriate symmetry properties and parity $(-1)^k$. We will call the resulting object a $\Z/2\Z$-graded $L_\infty$ algebra. We can then ask about the space $Y(\lie{g})$ of \emph{odd} elements satisfying the Maurer--Cartan equation~\eqref{eq:MC}. Elements of this space will correspond to deformations of~$\lie{g}$ as a $\Z/2\Z$-graded $L_\infty$ algebra and there will be an injective map $\MC(\lie{g}) \hookrightarrow Y(\lie{g})$. We call $Y(\lie{g})$ the \emph{nilpotence variety} of~$\lie{g}$; when $\lie{g}$ is a super Lie algebra, this agrees with the notion given in~\cite{NV}.

For the purposes of this paper, $Y$ is an affine scheme; we take 
\deq{
Y = \Spec \O_Y, \qquad
\O_Y = R/I,
}
where $R=\Sym(\Pi \fg_-^\vee)$ is a polynomial ring in commuting variables, and $I$ is the ideal generated by the Maurer--Cartan equations~\eqref{eq:MC}. In this work we will only be concerned with the case where $\fg$ is a super Lie algebra, thus $I$ will be generated by quadratic equations and $\cO_Y$ is a graded ring. Since we view $Y$ as an affine scheme, we will move back and forth freely between discussing the geometry of~$Y$ and the graded ring $\O_Y$; hopefully, no confusion should arise. (Sometimes we may also consider the geometry of the projective scheme $\Proj \O_Y$; in either case, the essential object is the graded ring $\O_Y$.)
The distinction between a variety and a scheme will, in fact, play an important role in applications; see~\S\ref{ssec: schemes}.

\subsection{Multiplets and local modules}

In this section, we move towards the setting of field theory by introducing the new ingredient of \emph{locality}.

\subsubsection{Local modules}

Let $X$ be a manifold thought of as spacetime. 
There is an obvious notion of a \emph{dgs vector bundle:} concretely, we mean a $\ZZ \times \ZZ/2$-graded vector bundle 
\begin{equation}
	E = \bigoplus_k E^k = \bigoplus_k \left(E_+^k \oplus E_-^k\right)
\end{equation}
equipped with a collection of differential operators $D \colon \cE_{\pm}^k \to \cE_{\pm}^{k+1}$ such that $D \circ D = 0$.
Here,~$\cE^k_\pm = \Gamma(X, E^k_\pm)$ denotes the $C^\infty$ sections of $E^k_\pm$.\footnote{This is close to, but differs from, the notion of a ``flat superconnection''. 
For one, our operator $D$ is required to be {\em even}.}

Suppose that $\fg$ is a super $L_\infty$ algebra. 
We will define a local $\fg$-module to be a dgs vector bundle on $X$ 
equipped with a sufficiently local homotopy action of~$\fg$. 

To give the precise definition we first need a small bit of background.
Consider the $\ZZ \times \ZZ/2$-graded vector space $\cE = \Gamma(X , E)$. As explained in Example~\ref{ex: End}, the endomorphisms $\mathrm{End}(\cE)$ naturally form a dg super Lie algebra: the structure maps consist of the commutator and the differential $[D,-]$.
Inside $(\mathrm{End(\cE)},[D,-])$ there is a sub dg super Lie algebra consisting of all endomorphisms which are differential operators. 
We will denote it by $(\cD(E) , [D,-])$.

\begin{dfn}
A {\em local (super $L_\infty$) $\fg$-module} is a dgs vector bundle $(E, D)$ equipped with a super $L_\infty$-map (see Definition \ref{dfn:Lmap}):
\begin{equation}
	\rho \colon \fg \rightsquigarrow \big(\cD(E)\; , \; [D,-] \big) \:.
\end{equation}

We will refer to the data of a local $\fg$-module by a triple $(E, D, \rho)$. 
\end{dfn}

The space of sections of any dgs vector bundle is a dgs vector space. 
The space of sections (over any open set) of a local $\fg$-module $(\cE, D)$ is a dgs $L_\infty$ module for the super Lie algebra $\fg$, see Definition~\ref{dfn:Lmodule}. 

Concretely, the data of $\rho$ consists is a collection maps
\begin{equation}
\rho^{(j)} : \fg^{\otimes j} \longrightarrow \cD(E)[1-j] ,\qquad j \geq 1
\end{equation}
satisfying some compatibility relations, the lowest of which reads
\begin{equation} \label{L-action}
[\rho^{(1)}(x),\rho^{(1)}(y)] - \rho^{(1)}([x,y]) = [D, \rho^{(2)}(x,y)] \: .
\end{equation}
Note that if the left hand side were zero, then we would have a strict Lie algebra action. Thus, $\rho^{(2)}$ provides a homotopy correcting the failure of $\rho^{(1)}$ to be strict.

One way to unravel this definition is in terms of the cochain complex computing the Lie algebra cohomology of $\fg$. 
The map $\rho$ is equivalent to an element 
\begin{equation}
	\rho = \sum_k \rho^{(k)} \in \clie^\bu(\fg) \otimes \cD(E) , \quad \rho^{(k)} \in \clie^k(\fg) 
e\end{equation}
of bidegree $(1,+)$ which satisfies the Maurer--Cartan equation
\begin{equation}
  \d_\fg \rho + \frac12 [\rho, \rho] = 0 \: .
\end{equation}
Here $\d_{\fg}$ denotes the Chevalley--Eilenberg differential of $\fg$ and $[\cdot,\cdot]$ is the commutator of differential operators. 

We observe that $\rho$ determines a super $L_\infty$ structure on $\fg \oplus \cE$ in such a way that there is a short exact sequence of $L_\infty$ algebras
\begin{equation}
	0 \to \cE \to \fg \oplus \cE \to \fg \to 0 
\end{equation}
where $\cE$ is thought of as an $L_\infty$ algebra with $\mu_k = 0$ for $k > 1$. 
 
Let us take some time to reflect on this definition from the physics point of view. It is well known that the supersymmetry algebra is sometimes only realized on-shell or up to gauge transformations. This is precisely captured in the fact that we used a \textit{super $L_\infty$-map} $\fg \rightsquigarrow \big(\cD(E)\; , \; [D,-]\big)$ to define a multiplet instead of a super Lie map. The higher order terms $\rho^{(j)}$ for $j \geq 2$ precisely correspond to closure terms correcting $\rho^{(1)}$ by a gauge transformation or contributions proportional to an equation of motion.

This discussion explains how the supersymmetry algebra acts on the fields of the theory. The operators of the theory consist of functionals of the fields and are denoted by $\cO(\cE)$. For any point $x \in X$, we can define the local operators via
\begin{equation}
	\cO_x(\cE) = \sym^\bu(J^\infty E|_x)^\vee \: ,
\end{equation}
where $J^\infty E$ denotes the jet bundle of $E$. In other words, the local operators at $x$ evaluate polynomials in the fields and derivatives of fields at $x$. Given a map
\begin{equation}
\rho \colon \fg \rightsquigarrow \big(\cD(E)\; , \; [D,-] \big) \:,
\end{equation}
the dual maps $(\rho^{(j)})^\vee$ define an action on the linear local operators, which extends to $\cO(\cE)_x$ via the Leibniz rule. Fixing an element $Q \in \fg$ we can define a map
\begin{equation}
	\delta_Q = \sum_j \rho^{(j)}(Q, \dots, Q)^\vee : \cO_x(\cE) \longrightarrow \cO_x(\cE) \: ,
\end{equation}
which defines the action of $Q \in \fg$ on the operators of the theory.

\subsubsection{Local algebras}
For completeness let us briefly remark that there is a natural way to make the symmetry algebra $\fg$ local as well. This is relevant if $\fg$ encodes a gauge symmetry.
\begin{dfn}
  A \emph{local super $L_\infty$ algebra} on a manifold $X$ is a dgs vector bundle $L \to X$, equipped with a collection of polydifferential operators
  \deq{
    \mu_k: (\cL)^{\times k} \to \cL
  }
  of bidegree $(2-k,+)$ that satisfy the relations of a super $L_\infty$ algebra structure. Here $\cL = \Gamma(X,L)$ are the smooth sections of $L$.
\end{dfn}
The definition of a local module structure generalizes in obvious fashion. We note that, given a super $L_\infty$ algebra $\lie{g}$, the constant sheaf $\ul{\fg}$ is \emph{not} an example of a local super $L_\infty$ algebra for $d>0$, since it is not given as the smooth sections of any dgs vector bundle. However, we can remedy this by resolving the constant sheaf by the de Rham complex: $\Omega^\bu(X)\otimes \lie{g}$ is a local super $L_\infty$ algebra on~$X$. (This example is relevant to Chern--Simons theory.)

Furthermore the above definition is important in the general context of the BV formalism: A perturbative classical field theory in the BV formalism will be equivalent to a local super $L_\infty$ algebra on~$X$, equipped with a trace map of degree $-3$. We will further review this perspective in what follows.

\subsubsection{Multiplets}
In the context of supersymmetry, we are interested in local modules that satisfy an additional  compatibility condition.
For now, let  $X = V_\RR = \RR^d$ be a  $d$-dimensional affine space and let $V = \CC^d$ be its complexification.\footnote{For the pure spinor superfield formalism it will be useful for us to use complex Lie algebras.}
The Poincar\'{e} group is the group of affine transformations of this space; it is of the form 
\begin{equation}\label{key}
	  \Aff(V) = V \rtimes \Spin(V) \: .
\end{equation}

The complexified Lie algebra $\aff(V)$ is
\begin{equation}
  V \rtimes \lie{spin}(V) \cong V \rtimes \wedge^2 V.
\end{equation}

A \emph{multiplet} is a local module structure for a dgs $L_\infty$ algebra on an \emph{affine}\footnote{A dgs vector bundle is called affine if the total space carries an action of the affine group such that the projection is equivariant with respect to the action of the affine group on $\R^d$.} dgs vector bundle on~$\R^d$, where the $\lie{g}$-action is required to be compatible with the action of the affine algebra in a certain sense. We make this precise with the following definition.

\begin{dfn} \label{def:supermultiplet}
  Let $E$ be an affine dgs vector bundle on~$X = V_\R$, and $\fg$ a super $L_\infty$ algebra equipped with a map 
  \deq{
    \phi: \aff(V) \to \fg \: .
  }
  A \emph{$\fg$-multiplet} is a local $\fg$-module structure on~$E$, such that the pullback of the module structure along~$\phi$ agrees with the natural action on sections of the affine vector bundle. Concretely, this means that the following diagram commutes.
  \begin{equation}
  \begin{tikzcd}
  	\fg \arrow[r, "\rho^{(1)}"] & \cD(E) \\
  	\aff(V) \arrow[ru] \arrow[u,"\phi"]
  \end{tikzcd}
  \end{equation}
\end{dfn}
We think of a multiplet as a derived replacement for the (not necessarily locally free) ``supersymmetric sheaf'' $H^\bu(E)$. Even though this sheaf could be regarded as the central object of study in physics, it is more natural from either the BRST/BV perspective or the perspective of derived geometry to just work at the cochain level. There is again a generalization of this definition to local super $L_\infty$ algebras, where the global affine algebra is replaced by a local $L_\infty$ algebra modeling local isometries. We do not pursue this further here.

We briefly note that this definition implies that the image of $\phi$ is represented strictly on the fields. Furthermore, since the natural action of the affine algebra is effective, the above definition requires implicitly that $\phi$ be injective. So multiplets naturally lead us to study superalgebras that contain the affine algebra as a subalgebra.

We take note of the following examples:
\begin{itemize}
  \item Let $\fh$ be a Lie algebra, and consider the product $\lie{g} = \lie{h} \oplus \aff(V)$, equipped with the obvious choice of~$\phi$. Then a $\lie{g}$-multiplet contains a collection of fields transforming in a local representation of~$\fh$. Flavor symmetry multiplets are an example of this kind.
  \item Let $\conf(V)$ be the Lie algebra of conformal vector fields on~$V$. There is a canonical embedding of~$\aff(V)$ in~$\conf(V)$. Then a $\conf(V)$-multiplet encodes the notion of a conformally invariant multiplet of fields. 
  \item Let $\fp$ be the super-Poincar\'e algebra. It contains $\aff(V)$ as a subalgebra, and a $\fp$-multiplet recovers the usual notion of a supermultiplet.
\end{itemize}
Historically speaking, the construction of interesting multiplets for algebras that were not products was the motivation that led to the origin of supersymmetry; we return to this point (and construct examples of the relevant algebras of physical interest) below. 

\subsection{Further structures on multiplets}
\label{sec: bv and brst}
As we will see in the following sections, the pure spinor superfield formalism naturally produces \emph{multiplets} for the supersymmetry algebra. Some extra data is required to produce a theory out of a multiplet; furthermore, depending on whether or not supersymmetry closes off-shell, the resulting theory may be a BRST or a BV theory, so that the additional data required may differ. There are also conditions on the additional data that ensure that the theory is nondegenerate in an appropriate sense. We set up some formalism for the required additional structure in this section.

\subsubsection{BRST data}

In the BRST formalism, a perturbative field theory is described by a local  super $L_\infty$ algebra $\cL$ equipped with a BRST action functional $S$, which is invariant for the $L_\infty$ structure. 
The $L_\infty$ structure describes the (higher) infinitesimal gauge transformations and the variation of the BRST action gives rise to the equations of motion. 

\begin{dfn}
	A {\em BRST datum} on the $\fg$-multiplet $(F, D, \rho)$ consists of
	\begin{itemize}
          \item a local super $L_\infty$ structure $\{\mu_k\}$ on $L \define F[-1]$ such that $\mu_1 = D$, and whose associated 
		Chevalley--Eilenberg differential we denote by $Q_\text{BRST}$; and
              \item a local functional $S_0 \in \oloc(F)$ of bidegree $(0,+)$, called the ``BRST action functional,'' which is closed with respect to~$Q_\text{BRST}$. 
	\end{itemize}
	This data should be such that all maps in the short exact sequence 
	\begin{equation}
		0 \to \cL \to \fg \oplus \cL \to \fg \to 0
	\end{equation}
	are $L_\infty$ maps, and $S_0$ is invariant for the $L_\infty$ action $\rho$. 
\end{dfn}

\subsubsection{BV data}
\label{sec:BV}

The (classical) Batalin--Vilkovisky (BV)~\cite{BV1,BV2,BV3} formalism is a generalization of the BRST formalism whereby the equations of motion are encoded in a derived way. 
For a comprehensive review of the classical BV formalism we refer to \cite{CG2} (see also~\cite{JurcoReview,MnevBV}). We recall the general idea briefly.
 
Perturbatively, a BV theory is described by a local super $L_\infty$ algebra $L_{\rm BV}$ equipped with an invariant, skew-symmetric, non-degenerate, local pairing of degree $-3$ (see the definition below). 
The space of ``BV fields" is the space of sections of the bundle $E = L_{\rm BV}[1]$ given by the shift in $\ZZ$-degree of the $L_\infty$ algebra encoding the BV theory.

The local pairing on $L_{\rm BV}$ defines a local pairing of degree $-1$ on the space of BV fields $\cE$.
There is an equivalent way to encode the BV formalism in terms of the {\em BV action}. In turn, the $L_\infty$ structure on $L_{BV}$ is encoded by the BV action functional $S_{\rm BV}$ in the sense that the Lie algebra cohomology of $L_{\rm BV}$ can be expressed as
\begin{equation}
	\clie^\bu(L_{\rm BV}) = \big(\cO(\cE) \; , \; \{S_{\rm BV}, -\} \big) \: .
\end{equation}
Here $\{-,-\}$ is the BV bracket defined by the non-degenerate pairing on the fields. 

The zeroth cohomology of this cochain complex is the space of functions on the critical locus of the BRST action modulo gauge equivalence. 
The condition that $\{S_{\rm BV},-\}$ define a differential is equivalent to the ``classical master equation''
\begin{equation}
	\{S_{\rm BV}, S_{\rm BV}\} = 0 \: .
\end{equation}

Of course, proper care must be taken to make rigorous sense of the BV complex above. 
As we are working perturbatively, the space of BV fields will arise as the space of sections of some graded vector bundle on spacetime. 
Furthermore, the BV action will be given as the integral of a Lagrangian density of the fields. 
More details on the BV formalism can be found in~\cite{CG2}.

For any multiplet, we will define a notion of ``BV datum,'' which  consists  of the set of data necessary to construct a BV theory (a $(-1)$-shifted invariant symplectic pairing, with respect to which the homotopy $\fg$-action is defined by Hamiltonian vector fields, and a BV action functional that is compatible with the action of~$\fg$). A BV theory will then consist of a multiplet equipped with a BV datum that satisfies an additional nondegeneracy condition.
\begin{samepage}
	\begin{dfn}
		\label{dfn:BVdatum}
		A {\em BV datum} on a $\fg$-multiplet $(E, D, \rho)$ consists of:
		\begin{itemize}
			\item a graded antisymmetric map
			\begin{equation}
			\langle -,- \rangle_\text{loc} : E \otimes E \longrightarrow \mathrm{Dens}_X
			\end{equation}
			of bidegree $(-1, +)$, which is fiberwise non-degenerate; and
			\item a $\clie^\bu(\fg)$-valued BV action
			\begin{equation}
			S_{{\rm BV}, \fg} = \sum_k S_{{\rm BV}, \fg}^{(k)} \in \clie^\bu(\fg) \otimes \oloc(E), \quad S_{{\rm BV}, \fg}^{(k)} \in \clie^k(\fg) \otimes \oloc(E)
			\end{equation}
			of bidegree $(0, +)$ of the form
			\begin{equation}
			S^{(0)}_{{\rm BV}, \fg} (\Phi) = \int_X \langle \Phi, D \Phi \rangle_\text{loc} + I_{\rm BV} (\Phi)
			\end{equation}
			where $I_{\rm BV}(\Phi)$ is a Lagrangian that is at least cubic in the fields and where
			\begin{equation}
			S^{(k)}_{{\rm BV}, \fg} (x_1,\ldots, x_k ; \Phi) = \int_X \langle \Phi, \rho^{(k)} (x_1,\ldots , x_k) \Phi \rangle_{\rm loc} 
			\end{equation}
		\end{itemize}
		such that 
		\begin{itemize}
			\item[(i)]
			$\<-,-\>_{\rm loc}$ is invariant for the $L_\infty$ action $\rho$;
			\item[(ii)] the total action $S_{{\rm BV}, \fg}$ satisfies the $\fg$-equivariant master equation
			\begin{equation}
			\d_{\fg} S_{{\rm BV}, \fg} + \frac12 \{S_{{\rm BV}, \fg}, S_{{\rm BV}, \fg}\} = 0 .
			\end{equation}
		\end{itemize}
	\end{dfn}
\end{samepage}

If $D$ is elliptic, then $S_{{\rm BV}, \fg}^{(0)} (\Phi)$ is a $\fg$-equivariant perturbative BV theory in the sense of~\cite{CG2}. 
According to the terminology in loc.~cit., this total action $S_{{\rm BV}, \fg}$ endows $S_{{\rm BV}, \fg}^{(0)}(\Phi)$ with the structure of a $\fg$-equivariant theory. We will refer to a multiplet equipped with a BV datum for which $D$ is elliptic as a $\fg$-equivariant \emph{BV theory}.

To go from a multiplet with BRST datum to a multiplet with BV datum, one considers 
\begin{equation}
	L_{\rm BV} = L \oplus L^\vee[-3]
\end{equation}
which is equipped with a canonical evaluation pairing of degree $(-3)$. 
The BRST action deforms the obvious $L_\infty$ structure on the direct sum of $L$ with $L^\vee[-3]$, thus giving rise to an $L_\infty $ structure on $L_{\rm BV}$ for which the evaluation pairing is invariant (after an application of the homological perturbation lemma, which can be thought of as solving the classical master equation for $S_{BV}$ order by order). 

We will say that a multiplet equipped with a BRST datum is a \emph{BRST theory} when the corresponding BV datum itself defines a BV theory, meaning that the kinetic term in the BV action involves an elliptic operator.

Note that, in the way we have set things up, any multiplet can be equipped with a trivial BRST datum, whereas a BV datum may not always exist. In~\S\ref{sec: data} we will see that some of the multiplets produced in the pure spinor formalism can be naturally equipped with nondegenerate BV data, while this is not possible for others. Of course, a degenerate BRST datum does not, in itself, define a BRST theory.

For a multiplet with BV datum $(E,D,\rho, \langle.,.\rangle )$, the inner product always allows us to write
\begin{equation}
	E  = F \oplus F^\vee[-1] \: ,
\end{equation}
where $F = \oplus_{k\leq 0} E_{\rm BV}^k$. 
This induces a splitting on the space of sections
\begin{equation}
	\cE = \cF \oplus \cF^![-1] \: . 
\end{equation}
Note that this is a splitting on the level of super vector spaces.

\begin{dfn} \label{admitsBRST}
  A BV multiplet $(E,D,\rho)$ is \emph{off-shell} if the above splitting exists on the level of $\fg$-modules. Then $F$ is naturally a BRST multiplet, and $(F^\vee[-1],D|_{\cF^!},\rho|_{\cF^!})$ is called the antifields multiplet for $(F,D|_\cF,\rho|_\cF)$.
\end{dfn}

Intuitively, this definition means that it is possible to consider the $\fg$-action separately on the fields and antifields. Then, the equations of motions are not needed to close the algebra and the only corrections for the action come from gauge transformations.

\section{The pure spinor superfield formalism}
\label{sec: formalism}
\subsection{A universal construction}
Let $\fg$ be a super Lie algebra, and $Y$ its nilpotence variety, viewed as an affine scheme as discussed above. Let $M$ be any (dgs) $\fg$-module, and $\Gamma$ any graded module for the graded ring~$\O_Y$. (We can equivalently view $\Gamma$ as defining a sheaf on $\Proj \O_Y$.) Then there is a map
\deq{
\rho: \fg \to \End(M) 
}
defining  the $\fg$-module structure, and an obvious map  
\deq{
m: \fg_-^\vee \to \End(\Gamma)
}
given by left multiplication (after recalling that $\fg_-^\vee$ includes into~$\O_Y$ in weight one). 
If we consider the tensor product
$M \otimes \Gamma$,
the above two maps define a map
\deq{
\rho\cdot m: \fg_- \otimes \fg_-^\vee \to \End(M\otimes \Gamma) 
}
as explained in the following diagram.
\begin{equation}
\begin{tikzcd}
\fg_- \otimes \fg_-^\vee \arrow[r, "\rho \cdot m"] \arrow[d, "\rho \otimes m"] & \End(M \otimes \Gamma) \\
\End(M) \otimes \End(\Gamma)  \arrow[hookrightarrow, d,] \\
\End(M \otimes \Gamma) \otimes \End(M\otimes \Gamma) \arrow[uur, "\cdot"]
\end{tikzcd}
\end{equation}

That is, we apply $\rho \otimes m$, include the resulting element into $\End(M\otimes \Gamma) \otimes \End(M\otimes \Gamma)$ and finally multiply to obtain an endomorphism of $M \otimes \Gamma$.

The map $\rho\cdot m$ equips $M \otimes \Gamma$ with a canonical square-zero differential $\cD$, defined to be the image of the canonical element 
\deq{
1 \in \fg_- \otimes \fg_-^\vee \cong \End(\fg_-).
}
The square of this differential sits in the defining ideal of~$\O_Y$, and thus is zero for any $\O_Y$-module $\Gamma$. 

\begin{rmk}
In the case that $\Gamma = \O_Y$ is the structure sheaf, we note that this construction is closely related to the following construction: As in derived geometry, we define the ``classifying space'' of a super $L_\infty$ algebra $\fg$ to be the derived scheme $B\fg$ whose ring of functions consists of the Lie algebra cochains $\clie^\bu(\fg)$. Then a version of the associated bundle construction associates a sheaf on~$B\fg$ to any $\fg$-module $M$; the global sections of this sheaf are $\clie^\bu(\fg;M)$. 
In the cases we are interested in, there is a close connection between $\O_Y$ and~$\clie^\bu(\fg)$; see~\S\ref{ssec: CE}.
\end{rmk}

\subsection{The case of interest: from sheaves to multiplets}
We now consider a graded Lie algebras $\fg$ which is concentrated in degrees $0$, $1$, and~$2$. 
In keeping with the above discussion, we regard this as a lifted super $L_\infty$ algebra that is concentrated in homological degree zero; as such, only the binary bracket operation can be nonvanishing for degree reasons.

Spelling this out, we have a decomposition
\begin{equation}
	\mathfrak{g} = \mathfrak{g}_0 \oplus \mathfrak{g}_1 \oplus \mathfrak{g}_2  ,
\end{equation}
where $\mathfrak{g}_0$ is an ordinary Lie algebra which acts on $\mathfrak{g}_1$ and $\mathfrak{g}_2$, $\mathfrak{g}_2$ is an abelian Lie algebra, and there is a $\fg_0$-equivariant anticommutator map
\begin{equation}
	\{\cdot,\cdot\} : \mathfrak{g}_1 \otimes \mathfrak{g}_1 \longrightarrow \mathfrak{g}_2.
\end{equation}
Note that there is a subalgebra $\fg_{>0}$ which is an extension of~$\fg_1$ by~$\fg_2$.

The most important examples will be super-Poincar\'{e} algebras; we review how these are constructed in detail below, but just remark here that $\lie{g}_2$ consists of translations and $\lie{g}_1$ of supersymmetries in that case.

As we will see momentarily, $\fg_2$ will play the role of the spacetime on which the multiplet is constructed. We note that much of the construction would go through if $\fg$ were any nonnegatively graded Lie algebra. In such a case, however, the bosonic part of $\fg_{>0}$ may not be abelian, and an interpretation of the construction in terms of multiplets on an affine supermanifold will not be immediate. As such, we do not study any examples of this sort here.  

To be very explicit, if we choose a basis $d_\alpha$ of $\mathfrak{g}_1$ and a basis $e_\mu$ of $\mathfrak{g}_2$, we can express the anticommutator map in terms of structure constants $\Gamma^\mu_{\alpha \beta}$
\begin{equation}
	\{d_\alpha,d_\beta\} = \Gamma^\mu_{\alpha \beta} e_\mu \: .
\end{equation}
We denote by $\lambda^1,\dots,\lambda^n$ coordinates on $\mathfrak{g}_1$ dual to the basis $d_\alpha$. Then the defining ideal $I$ of the nilpotence variety is generated by the equations
\begin{equation}
	I = (\lambda^\alpha \Gamma^\mu_{\alpha \beta} \lambda^\beta).
\end{equation}
Its ring of functions is then the quotient ring
\begin{equation}
	R/I = \mathbb{C}[\lambda^1,\dots,\lambda^n]/I.
\end{equation}
We emphasize again that, if $I$ is a radical ideal, then $R/I = \cO(Y)$ coincides with the ring of functions on $Y$ in the sense of classical algebraic geometry. However, this does not need to be the case.

We are interested in a particular example of the construction above, where $M$ is taken to be the $\fg$-module consisting of smooth functions on $\fg_{>0}$ (viewed as a supermanifold). Concretely, 
\deq{
M = C^\infty(\fg_{\geq 0}) = C^\infty(X) \otimes_\C \wedge^\bu( \fg_1^\vee) \: ,
}
where we already identified $X = \fg_2$. There are now two commuting actions of $\fg$ on~$M$, on the left and the right; we denote these by
\begin{equation}
	R,L : \mathfrak{g} \longrightarrow \mathrm{End}(M) .
\end{equation}
Now, for any graded $R/I$-module $\Gamma$ that is equivariant for the $\mathfrak{g}_0$-action, applying the construction above to~$M$ (with respect to the \emph{right} action of~$\fg$) produces a cochain complex
\begin{equation}
	A^\bu(\Gamma) = ( \Gamma \otimes_\C M , \,  {\cD}).
\end{equation}
Explicitly, let $x^\mu$ be linear coordinate functions on~$\fg_2$ and~$\theta^\alpha$ be (odd) linear coordinate functions on~$\fg_1$, dual to the basis $(e_\mu, d_\alpha)$ above. Then the differential is given in coordinates by
\deq{
\cD = \lambda^\alpha R(d_\alpha) = \lambda^\alpha \left( \pdv{ }{\theta^\alpha} - \Gamma^\mu_{\alpha\beta} \theta^\beta \pdv{ }{x^i} \right),
}
where the differential operators act in~$M = C^\infty(\fg_{>0})$ and $\lambda^\alpha$ acts on~$\Gamma$ via the $\O_Y$-module structure.
Checking that $\cD$ squares to  zero explicitly is also straightforward:
\begin{equation}
\begin{aligned}
	\cD^2 &=
	 \lambda^\alpha \lambda^\beta R(d_\alpha) R(d_\beta) 
	= \frac{1}{2} \lambda^\alpha \lambda^\beta \{R(d_\alpha),R(d_\beta)\} \\
	&= \frac{1}{2} \lambda^\alpha \lambda^\beta R(\{d_\alpha,d_\beta\}) 
	= \frac{1}{2} \lambda^\alpha \lambda^\beta \Gamma^\mu_{\alpha \beta} R(e_\mu)
	= 0  .
\end{aligned}
\end{equation}

$A^\bu(\Gamma)$ naturally has the structure of a dgs vector space: we assign bidegree $(1,-)$ to $\lambda^\alpha$, $(0,-)$ to~$\theta^\alpha$, and $(0,+)$ to $x^i$. Since $\fg$ admits a natural lift, there is also a natural  candidate for  a lifted dgs vector  space structure, in which $\lambda^\alpha$ carries bidegree $(1,-1)$, $\theta^\alpha$ bidegree $(0,-1)$, and $x^i$ bidegree $(0,-2)$. However, this lift only defines a sensible bigrading on polynomial  functions on $\fg_2$, rather than on  all smooth functions.  This bigrading is  often  referenced in the pure spinor  superfield literature, often under the names ``ghost number'' and  ``dimension.'' We will  not need it in what  follows,  and  will  view $A^\bu(\Gamma)$ just as a dgs vector space. However, a filtration related to the dimension will play an important role for us.

From this discussion, it is clear that $A^\bu(\Gamma)$ can be viewed as the global sections of an affine dgs vector bundle $E \rightarrow X$ over $X = \fg_2$ with typical fiber
\begin{equation}
	E_x^k \cong \wedge^\bu \fg_1^\vee \otimes_\C (\Gamma)^k \: .
\end{equation}
This is the underlying vector bundle of the multiplet we aim to construct.

We note some properties of this construction below:
\begin{enumerate}
  \item By construction, the left action of $\fg_{>0}$ commutes with the differential $\cD$. As such, the left action defines a \emph{strict} $\fg_{>0}$-module structure, which is equivariant with respect to~$\Aut(\fg_{>0})$ and as such can be extended to a strict action of~$\fg$.
  \item There is an obvious sense in which (a subgroup of) the even part of~$\fg$ consists of affine transformations acting on~$M$. The $\fg$-action is compatible with this inclusion map, and thus makes $A^\bu(\Gamma)$ into a $\fg$-multiplet.
  \item In the definition given above, the notion of a multiplet was designed to capture the notion of a sheaf over spacetime admitting an action of supersymmetry. In physical terms, this sheaf could be thought of as either on-shell or off-shell field configurations up to gauge equivalence. A multiplet, that is a cochain complex of vector bundles with a homotopy action of supersymmetry, can be thought of as a resolution of this sheaf. (This corresponds to studying off-shell supersymmetric theories in the BRST formalism, and on-shell theories in the BV formalism.) The multiplet $A^\bu(\Gamma)$ goes one step further: it resolves a supersymmetric sheaf not just freely over spacetime, but \emph{freely over superspace}. The action of the supersymmetry algebra is thus just the obvious one on functions on superspace, which is both strict and geometric in nature.
  \item It is apparent that $A^\bu(\O_Y)$ has the structure of a commutative algebra, and therefore that the supersymmetric sheaf $H^\bu(A^\bu(\O_Y))$ is also an $A_\infty$ algebra in a canonical way. $A^\bu(\O_Y)$ is a  strict model of this $A_\infty$ structure. 
  \item To sum up, we have constructed a canonical way of associating a multiplet to any equivariant sheaf on~$Y$. Schematically, we depict the construction as an assignment
\begin{equation}
\{\text{Graded equivariant } R/I\text{-modules} \} \ \xrightarrow{\text{Pure spinor formalism}} \ \{\mathfrak{g}\text{-Multiplets} \} \: .
\end{equation}
Better yet, $A^\bu$ defines a functor from the category of (chain complexes of) equivariant $\O_Y$-modules to the category of dgs $A^\bu(\O_Y)$-modules. 
\end{enumerate}

In many examples, there is further structure available, and $(\Gamma \otimes M, \cD)$ can be equipped with a collection of higher brackets endowing it with the structure of an $L_\infty$ algebra. By homotopy transfer this yields an $L_\infty$-structure on the cohomology. In physically relevant examples, such $L_\infty$ structures precisely correspond to those appearing in the BV or BRST description of the underlying field theory. 

To give one example, the ten-dimensional super Yang--Mills multiplet is constructed by considering $A^\bu(\O_Y)$ for the ten-dimensional $\N=1$ supersymmetry algebra. Since $A^\bu(\O_Y)$ is a commutative dgs algebra, we can tensor with any finite-dimensional Lie algebra $\fh$. Then $A^\bu(\O_Y)  \otimes \fh$ is a  dgs Lie algebra that freely resolves the $L_\infty$ structure of the BV description of interacting $\N=1$ super Yang--Mills theory. This description is well-known from work of Berkovits and Cederwall, but we review it in our language below in~\S\ref{sec: 10d} and explicitly derive the standard structures using homotopy transfer.

The general construction we have outlined so far produces a ``large'' multiplet, which, as outlined above, can be thought of resolving a sheaf over spacetime with an action of supersymmetry. Of course we can just move to the cohomology of our multiplet to recover this sheaf; however, one might wonder whether and how a smaller multiplet resolving the same sheaf can be extracted. For example, is there any way of connecting a pure spinor multiplet to the typical component field multiplets, that is to a finite-rank resolution by vector bundles over spacetime? In fact, there is a  general technique for producing ``minimal'' resolutions of this kind, which was discussed in~\cite{MovshevSchwarzXu,LosevCS}. We review it in our language below and give a proof that highlights the relation to standard constructions in algebraic geometry and homological algebra. After that, we will construct our first examples of physically relevant algebras and multiplets.

\subsection{Filtrations and Koszul homology}
\label{ssec: filter}

The object $A^\bu(\Gamma)$ that we have constructed admits a natural filtration $F^\bu A^\bu(\Gamma)$; understanding the spectral sequence associated to this filtration will allow us to relate the multiplets we are constructing to finite-rank vector bundles over the spacetime $X$. The filtration is associated to a second integer grading; we will find that, while not all of the structures we are interested in preserve this second grading, they do play nicely with the associated filtration. 
The filtration degree is defined by the assignments in the following table:
\deq{
  \begin{array}{c|c|c|c}
    & \text{homological degree} & \text{intrinsic parity} & \text{filtered weight} \\ \hline
    x & 0 & + & 0 \\
    \lambda & 1 & - & 1 \\
    \theta & 0 & - & 1 
  \end{array}
}
(These conventions for the filtration follow those used in~\cite{spinortwist}.)

Since $C^\infty(X)$ plays no role in the filtration, we are exhibiting $A^\bu(\Gamma)$ as a filtered dgs vector bundle over~$X$. Moreover, since the filtration plays well with the product structure on the algebra $A^\bu(\O_Y)$, it gives rise to the structure of a filtered commutative dgs algebra there. However, 
we observe that the tautological differential does not respect the integer grading by filtration weight. Recall that, in coordinates,
\deq{
  \cD = \cD_0 + \cD_1 = \lambda^\alpha  \pdv{ }{\theta^\alpha} - \lambda^\alpha  \Gamma^\mu_{\alpha\beta} \theta^\beta \pdv{ }{x^\mu} .
}
As the notation suggests, the differential is the sum of two terms, which have filtered weight zero and two respectively. The associated graded is thus equipped only with the differential $\cD_0$, which is independent of smooth functions on $X$. We could then write the resulting complex in the following form:
\deq{
  \Gr A^\bu(\Gamma) = \left( C^\infty(X) \otimes_\C \left( \Gamma \otimes_\C \C[\theta^\alpha]\right) , \cD_0 = \lambda^\alpha \pdv{ }{\theta^\alpha} \right) 
  \cong C^\infty(X) \otimes_\C K^\bu(\Gamma).
}
Here we have defined the \emph{Koszul homology} of any $R$-module in standard fashion:
\deq{
  K^\bu(\Gamma) := \left( \Gamma \otimes_\C \C[\theta^\alpha] \: , \: \cD_0 = \lambda^\alpha \pdv{ }{\theta^\alpha} \right).
}
The fact that $\Gamma$ is an $R/I$-module is of course vitally important for our construction, but Koszul homology makes sense for any $R$-module. 
In the pure spinor superfield literature, the cohomology of $\Gr A^\bu$ is often referred to as ``zero mode cohomology''~\cite{Cederwall}. 

If we consider the spectral sequence associated to this filtration, we find that the $E_1$ page is just given by 
\deq{
  H^\bu(\Gr A^\bu(\Gamma)) = C^\infty(X) \otimes_\C H^\bu(K^\bu(\Gamma)).
}
Since $\Gamma$ is a graded module, the Koszul homology of~$\Gamma$ is a finite-dimensional bigraded representation of the Lorentz group. As such, $H^\bu(\Gr A^\bu(\Gamma))$ determines a vector bundle over $X = \fg_2 \cong \R^n$ with fibers
\begin{equation}
	(E'_x)^{k} \cong H^\bu(K^\bu(\Gamma))^{(k)} \: . 
\end{equation}
We emphasize that the homological degree of $E'$ is determined by the internal (weight) grading on $\Gamma$, whereas the parity is determined by the homological degree in Koszul homology modulo two.
$\cD_1$ induces a new differential $\cD'$ acting on the sections of this vector bundle via homotopy transfer of $D_\infty$-algebras. In addition, the $\fg$-module structure transfers as well such that $(E',\cD',\rho')$ is again a multiplet. This multiplet precisely corresponds to the component field description of multiplets as they are known from the physics literature. The transferred differentials play the role of BRST or BV differentials.

Of course one could go on and consider the full cohomology of $A^\bu(\Gamma)$. If the transferred differential $\cD'$ on the component field level does not already vanish, then the resulting object will no longer be free over spacetime, i.e. it does not consist of vector bundles and thus does not fit our definition of a multiplet. It is, however, still a sheaf on spacetime which carries a $\fg$-action. Physically speaking this sheaf consists of the on-shell, gauge invariant states of the multiplet.

Let us summarize these relations by the following diagram.
\begin{equation}
\begin{tikzcd}
(A^\bu(\Gamma),\cD) \arrow[d, "HT"] & \text{Free over superspace}\\
(H^\bu(\mathrm{Gr}A^\bu), \cD')  \arrow[d, "HT"] & \text{Free over spacetime} \\
(H^\bu(A^\bu(\Gamma)),0) & \text{Not necessarily free}
\end{tikzcd}
\end{equation}

The compatibility of the differential with the filtration in fact arises from a compatibility of the left and right $\fg$-actions with the filtration, once $\fg$ is filtered in an appropriate way. Using the standard definition of a complete filtered Lie algebra~\cite{KN,Koch}, we can equip $\fg$ with a filtered structure by setting 
\deq{
  \fg = \fg^{(-1)} \supset \fg^{(0)} = \fg_+. 
}
We observe that this filtration corresponds to the one we defined above, viewing the pure spinor superfield as constructed from functions on superspace together with the degree-zero Lie algebra cohomology of~$\fg_{>0}$ (see~\S\ref{ssec: CE}). 

The associated graded super Lie algebra $\Gr(\fg)$ is then the extension of $\fg_0$ by the \emph{abelian} module consisting of $\fg_1 \oplus \fg_2$; said differently, we set the bracket between odd elements to zero. It is immediate that there is a $\Gr(\fg)$-module structure on~$\Gr A^\bu(\Gamma)$. We will be able to derive this module structure, which consists of ``all supersymmetry transformations that are independent of spacetime derivatives,'' efficiently in examples, using purely algebrogeometric information about~$\Gamma$.

\subsection{Examples of interest: Supersymmetry algebras}

We are mostly interested in multiplets for supersymmetry algebras on an affine spacetime $X=V_\R$.
Depending on the dimension, $\Spin(V)$ will have either one or two spinor representations, which we note by $S$ or~$S_\pm$ respectively; furthermore, there will be an equivariant map $\Gamma$ that witnesses $V$ as a submodule of the tensor square of the spin representation.

We construct the space $\fp_1$ by taking the tensor product of a spin representation with an auxiliary vector space~$U$, which (depending on dimension) may or may not be equipped with either a symmetric or antisymmetric bilinear form. The bracket must be constructed from the pairing $\Gamma$; if $\Gamma$ pairs one spin representation with the other (in dimension $0\bmod 4$), we tensor one spin representation with~$U$ and the other with~$U^\vee$. If $\Gamma$ is a symmetric self-pairing (as in dimensions $1$, $2$, and~$3\bmod 8$), $U$ should have  a symmetric bilinear form; similarly, if $\Gamma$ is an antisymmetric self-pairing on a spin representation (as in dimensions $5$, $6$, and~$7\bmod8$), $U$ must be a symplectic vector space. The ``degree of extended supersymmetry,'' denoted $\N$, is the dimension of~$U$ as a multiple of the smallest possible dimension (two in the symplectic case and one otherwise). In cases where a self-pairing exists on chiral spin representations (dimension $2$ and $6\bmod8$), two independent choices of $\N$ are possible, one for each chirality.
By abuse of notation, we will also write $\Gamma$ for the symmetric pairing on~$\fp_1$. 

The supertranslation algebra $\ft := \fp_{>0}$ is then an extension
\deq{
  0 \rightarrow \fp_2 \rightarrow \ft \rightarrow \fp_1 \rightarrow 0,
}
where $\fp_1$ and~$\fp_2$ are abelian graded Lie algebras and the bracket on~$\ft$ consists of the equivariant symmetric map $\Gamma$ constructed above. The full super-Poincar\'e algebra $\fp$ adds in the automorphisms of~$\ft$ in degree zero; these consist of~$\text{Lie}(\Spin(V)) = \lie{so}(V)$, together with the automorphisms of~$U$ that preserve the pairing if present: either $\lie{gl}(U)$, $\lie{so}(U)$, or~$\lie{sp}(U)$, depending on dimension. In physics, this additional automorphism is known as $R$-symmetry.

The nilpotence varieties of these algebras were studied systematically in~\cite{NV}; most  examples were already present in the previous pure spinor literature. It is worth commenting briefly on the connection to  the classical notion of a ``pure spinor'' given by Cartan. Recall that the spin representation of~$\Spin(V_\R)$ is constructed by choosing a maximal isotropic subspace $L \subset V_\C$. Then $S = \wedge^\bu(L^\vee)$, and $V_\C = L \oplus L^\vee$ acts via Clifford multiplication just by wedging and contracting. (In odd dimensions, $V_\C = L \oplus L^\vee \oplus (L^\perp/L)$, and the single generator in~$L^\perp/L$ acts diagonally by the parity operator.) 

Given the construction of the brackets in~$\fp$, it is clear that an element lying in~$\wedge^0(L^\vee)$ (tensored with any element of~$U$) is automatically square-zero, and that it will be a ``minimal'' or holomorphic supercharge (the image of~$[Q,-]$ is just $L^\perp$). Considered as a projective variety, the space of such elements thus consists of the product of the projective space $P(U)$ and the space $\OGr(n,d)$ of isotropic subspaces $L = \C^n \subset V_\C = \C^d$. (Here $n = \lfloor d/2 \rfloor$.) The latter is the space of Cartan pure spinors, the minimal nonzero $\Spin(d)$ orbit in the spin representation. However, we emphasize that the nilpotence  variety in general contains many more strata, and may even include nonminimal orbits in the spin representation, quite independently of $R$-symmetry (as in eleven dimensions). 

We will not construct all supersymmetry algebras in detail here (for discussion that uses similar style and notation, see~\cite{NV}). We will just introduce examples as we need them, beginning with the four-dimensional $\N=1$ algebra in the next section.

\subsection{Motivating example: the 4d chiral multiplet}
As an explicit example, let us consider the $\mathcal{N}=1$ supersymmetry algebra in four dimensions. 
A related discussion of the chiral multiplet already appeared in~\cite{NV}.

Since the dimension is zero modulo four, $U$ carries no pairing and can be taken to be one-dimensional. $\fp_1$ is then $S_+ \oplus S_-$, and the bracket is constructed using the isomorphism
\deq{
  S_+ \otimes S_- \cong V
}
of~$\Spin(4)$ representations. 
Because this is an isomorphism, the self-bracket of an element $Q\in\fp_1$ is zero precisely when either $Q \in S_+$ or $Q \in S_-$; as such, $Y$ consists of two coordinate planes of the form $\C^2 \subset \C^4$, intersecting at the origin. More precisely,
\begin{equation}
Y = S_+ \cup_{\{0\}} S_- .
\end{equation}

We repeat the same computation in coordinates for emphasis. A general supercharge $Q$ can be written in the form 
\begin{equation}
Q = \lambda^\alpha Q_\alpha + \bar{\lambda}^{\dot{\beta}} \bar{Q}_{\dot{\beta}} \: .
\end{equation}
Accordingly, the equation $\{Q,Q\}=0$ reduces to the four quadratic equations
\begin{equation}
\lambda^\alpha \bar{\lambda}^{\dot{\beta}}  \Gamma^\mu_{\alpha \dot{\beta}} = 0 \: .
\end{equation}
With respect to the decomposition into $S_+$ and $S_-$, the $\Gamma$-matrices are off-diagonal with blocks consisting of the Pauli-matrices $\sigma^\mu$. Multiplying matrices gives the four equations
\begin{equation}
\begin{split}
\lambda^1\bar{\lambda}^1 + \lambda^2 \bar{\lambda}^2 &= 0 \\
\lambda^1\bar{\lambda}^1 - \lambda^2 \bar{\lambda}^2 &= 0 \\
\lambda^1\bar{\lambda}^2 + \lambda^2 \bar{\lambda}^1 &= 0 \\
\lambda^1\bar{\lambda}^2 - \lambda^2 \bar{\lambda}^1 &= 0 \: .
\end{split}
\end{equation}
Adding and subtracting these equations one finally finds
\begin{equation}
\lambda^1 \bar{\lambda}^1 = \lambda^2 \bar{\lambda}^2 = \lambda^1 \bar{\lambda}^2 = \lambda^2 \bar{\lambda}^1 = 0 \: ,
\end{equation}
which implies that $\lambda^\alpha$ or $\bar{\lambda}^{\dot{\beta}}$ vanish and recovers our result from above.

To construct a multiplet, we have to choose an $\cO_Y$-module. One possible choice is $\Gamma = \mathbb{C}[\bar{\lambda}_{\dot{\alpha}}]$, which corresponds to the pushforward of the structure sheaf of~$S_+$ to~$Y$ along the inclusion map. We form the pure spinor complex:
\begin{equation}
  \left( A^\bu(\Gamma)\: , \: \cD \right) = \left( C^\infty(T) \otimes \mathbb{C}[\bar{\lambda}_{\dot{\alpha}}] \ , \ \mathcal{D} = \bar{\lambda}_{\dot{\alpha}} \frac{\partial}{\partial \bar{\theta}_{\dot{\alpha}}} + \bar{\lambda}^{\dot{\alpha}} \theta^\alpha \Gamma^\mu_{\alpha \dot{\alpha}} \partial_\mu \right)  .
\end{equation}
As emphasized above, we can relate this multiplet to the component field formulation by computing the Koszul homology of~$\Gamma$.  Using $\mathfrak{t}_1=S_+ \oplus S_-$, we see that the relevant complex can be written as 
\begin{equation}
\left( \wedge^\bullet S_+ \otimes \wedge^\bullet S_- \otimes \mathbb{C}[\bar{\lambda}_{\dot{\alpha}}] \ , \ \mathcal{D}_0 = \bar{\lambda}_{\dot{\alpha}} \frac{\partial}{\partial \bar{\theta}_{\dot{\alpha}}} \right).
\end{equation}
Here we introduced coordinates on $S_+$ denoted by $\theta_\alpha$ and on $S_-$ written as $\bar{\theta}_{\dot{\alpha}}$. Since $\theta_\alpha$ does not occur in the differential $\mathcal{D}_0$, we find that the cohomology is a tensor product
\begin{equation}
\wedge^\bullet S_+ \otimes H^\bullet(\wedge^\bullet S_- \otimes \mathbb{C}[\bar{\lambda}_{\dot{\alpha}}]) \: .
\end{equation}
However, it is easy to see that the second factor is acyclic, i.e. $H^\bullet(\wedge^\bullet S_- \otimes \mathbb{C}[\lambda_\alpha]) = \mathbb{C}$. Thus, reinstalling the spacetime dependence, the $\cD_0$-cohomology reads
\begin{equation}
\wedge^\bullet S_+ \otimes C^\infty(V) \: .
\end{equation}
We immediately see that we are dealing with two scalar fields in degrees $0$ and $2$ and a Weyl fermion in degree $1$. This is precisely the field content of the chiral multiplet. In Table~\ref{tab:4d chiral reps} we display the corresponding representatives and relate them to the component fields of the chiral multiplet.
\begin{table}
	\begin{center}
		\begin{tabular}{|c|c|}
			\hline
			Field & Representative in the $\cD_0$-cohomology \\
			\hline
			$\phi$  & $\phi$ \\
			\hline
			$\psi$ &$\psi \theta$ \\
			\hline
			$F$ &$F \theta_1 \theta_2$ \\
			\hline
		\end{tabular}
	\end{center}
	\caption{Representatives for the $\cN=1$ chiral multiplet in four dimensions organized by $\theta$-degree.}
	\label{tab:4d chiral reps}
\end{table}

It is clear that the differential $\cD'_1$ acts trivially on these component fields. Hence there are also no further terms induced by homotopy transfer. We thus obtain a multiplet described by a of super vector bundle
\begin{equation}
	E' = V \times \wedge^\bullet S_+  \: ,
\end{equation}
with differential $\cD'=0$.

As the differential $D$ vanishes, this is one of the rare cases where the supersymmetry algebra acts strictly on the component fields. Expanding $Q = \epsilon^\alpha Q_\alpha$ and $\bar{Q} = \bar{\epsilon}^{\dot{\alpha}} \bar{Q}_{\dot{\alpha}}$ we have
\begin{equation} \label{4dQ}
\begin{split}
\rho(Q) = \epsilon \cQ &= \epsilon^\alpha \frac{\partial}{\partial \theta^\alpha} - i (\epsilon \sigma^\mu \bar{\theta}) \partial_\mu \\
\rho(\bar{Q}) = \bar{\epsilon} \bar{\cQ} &= \bar{\epsilon}^{\dot{\alpha}} \frac{\partial}{\partial \bar{\theta}^{\dot{\alpha}}} +i (\theta \sigma^\mu \bar{\epsilon}) \partial_\mu \: .
\end{split}
\end{equation}
The transferrred action only has a $\rho'^{(1)}$ component, which is given explicitly by
\begin{equation}
\begin{split}
\rho'^{(1)}(Q) = p \circ \rho(Q) \circ i &= \epsilon^\alpha \frac{\partial}{\partial \theta^\alpha} \\
\rho'^{(1)}(\bar{Q}) = p \circ \rho(\bar{Q}) \circ i &= i(\theta \sigma^\mu \bar{\epsilon}) \partial_\mu \: .
\end{split}
\end{equation}
Now we can apply these to the representatives to find
\begin{equation}
\begin{matrix}
\rho'^{(1)}(Q) (\phi) = 0 & \rho'^{(1)}(\bar{Q})(\phi) = -i \bar{\epsilon} \slashed{\partial} \phi \theta \\
\rho'^{(1)}(Q) (\psi \theta) = \epsilon \psi  & \rho'^{(1)}(\bar{Q}) (\theta^\beta \psi_\beta) = i \bar{\epsilon} \slashed{\partial} \psi \theta^1 \theta^2 \\
\rho'^{(1)}(Q) (F \theta^1 \theta^2 ) = \epsilon F \theta  & \rho'^{(1)}(\bar{Q}) (\theta^1 \theta^2 F) = 0 \: .
\end{matrix}
\end{equation}
Writing these relations dually in terms of operators we obtain the usual supersymmetry transformation rules.
\begin{equation}
\begin{split}
&\delta \phi = \epsilon \psi \\
&\delta \psi = i \bar{\epsilon}\slashed{\partial} \phi + \epsilon F \\
&\delta F = -i\bar{\epsilon} \slashed{\partial} \psi \: .
\end{split}
\end{equation}

\subsection{Computational techniques: Koszul homology via free resolutions}
\label{ssec: resolve}

In the above example, we were able to compute the cohomology by hand and even could write down explicit representatives easily. In general, such computations are much more convoluted, and we will rely heavily on more advanced techniques. In this section, we show how the cohomology can be computed from the minimal free resolution of the module $\Gamma$ and the corresponding Hilbert series. This allows for a fairly direct identification of the ingredients of the multiplet. Further, using tools from the study of spectral sequences, we can write down explicit formulas for the representatives.

Let us fix a nilpotence variety $Y$ and an $R/I$-module $\Gamma$. To understand the component field description of the multiplet~$A^\bu(\Gamma)$, we are interested in the Koszul homology of $\Gamma$. The following proposition shows that we can understand this by considering a free minimal resolution of $\Gamma$ as an $R$-module.
\begin{prop}[\cite{MovshevSchwarzXu,LosevCS}] \label{prop:res}
	Let $L^\bullet \longrightarrow \Gamma \longrightarrow 0$ be the minimal free resolution of $\Gamma$ in free $R$-modules. Then
	\begin{equation}
	H^\bullet(K^\bullet(\Gamma)) \cong L^\bullet \otimes_R \mathbb{C} \: .
	\end{equation}
\end{prop}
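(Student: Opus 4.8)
The plan is to recognize that the Koszul homology $H^\bu(K^\bu(\Gamma))$ is computing a derived functor, and to identify that functor explicitly. Recall that $K^\bu(\Gamma) = (\Gamma \otimes_\C \C[\theta^\alpha], \cD_0)$ where $\cD_0 = \lambda^\alpha \partial/\partial\theta^\alpha$. The key observation is that the Koszul complex built on the variables $\lambda^\alpha$ (equivalently, the exterior algebra $\C[\theta^\alpha] = \wedge^\bu \fg_1$ with the differential contracting against the $\lambda$'s) is precisely the standard Koszul resolution of the residue field $\C = R/\fm$ as an $R$-module, where $\fm = (\lambda^1,\dots,\lambda^n)$ is the irrelevant maximal ideal. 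Thus $K^\bu(\Gamma) = \Gamma \otimes_R (R\otimes_\C \wedge^\bu\fg_1)$, and the cohomology computes $\Tor^R_\bu(\Gamma, \C)$.

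With that identification in hand, the proof reduces to the standard fact that $\Tor$ can be computed from \emph{either} resolution of \emph{either} argument. First I would take the given minimal free resolution $L^\bu \to \Gamma$ of $\Gamma$ by free $R$-modules. Then $\Tor^R_\bu(\Gamma,\C)$ is by definition the cohomology of $L^\bu \otimes_R \C$. The essential point is the \emph{minimality} of the resolution: minimality means exactly that the differentials of $L^\bu$ have entries in the maximal ideal $\fm$, so that after applying $-\otimes_R \C = -\otimes_R R/\fm$ all the differentials vanish identically. Consequently the complex $L^\bu \otimes_R \C$ has zero differential, and its cohomology is simply $L^\bu \otimes_R \C$ itself. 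Combining the two computations of $\Tor$ yields
\deq{
  H^\bu(K^\bu(\Gamma)) \cong \Tor^R_\bu(\Gamma,\C) \cong L^\bu \otimes_R \C,
}
which is the claim.

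I would carry this out in three steps: (1) exhibit $\wedge^\bu \fg_1$ with differential $\cD_0$ as the Koszul resolution of $\C$ over $R$, and hence identify $H^\bu(K^\bu(\Gamma))$ with $\Tor^R_\bu(\Gamma,\C)$; (2) note the standard balancing property of $\Tor$, allowing the resolution to be taken on the $\Gamma$ side instead; (3) invoke minimality to conclude that the differentials die after tensoring with $\C$. The grading and parity bookkeeping should be tracked throughout so that the isomorphism respects the bigrading claimed in the surrounding discussion, but this is routine.

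The main obstacle, such as it is, lies in step (1): one must be careful that the differential $\cD_0 = \lambda^\alpha\,\partial/\partial\theta^\alpha$ genuinely realizes the Koszul differential, meaning that the $\lambda$'s act on $\Gamma$ through the $\O_Y$-module (equivalently $R$-module) structure while the $\theta$-derivatives act on the exterior generators. The fact that $\Gamma$ is an $R/I$-module rather than a general $R$-module plays no role here, since Koszul homology and $\Tor$ over $R$ make sense for any $R$-module and the resolution $L^\bu$ is taken in free $R$-modules; this is exactly the remark already made in the surrounding text. Once the complex is correctly identified as $\Gamma \otimes_R^{\mathbb{L}} \C$, the rest is a direct application of the homological algebra of minimal resolutions and requires no further input.
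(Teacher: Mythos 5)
Your proof is correct and is essentially the paper's argument in different packaging: identifying $H^\bu(K^\bu(\Gamma))$ with $\Tor^R_\bu(\Gamma,\C)$ and invoking the balancing of $\Tor$ is exactly what the paper establishes in situ via the spectral sequence of the bicomplex $\wedge^\bu \ft_1^\vee \otimes L^\bu$ with differential $\cD_0 + d_L$ (using the exactness of the Koszul complex on $\lambda^1,\dots,\lambda^n$ over $R$), and both arguments conclude with the same minimality observation that $d_L \otimes_R \C = 0$. No gaps.
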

\begin{proof}
	We denote the differential on the minimal free resolution $L^\bullet$ by $d_L$. By definition we have
	\begin{equation}
	H^k(L^\bullet,d_L) = 
	\begin{cases}
	\Gamma, & \text{if } k=0 \\
	0, & \text{else}.
	\end{cases}
	\end{equation}
	This implies that there is a quasi-isomorphism
	\begin{equation}
	\left( \wedge^\bullet \mathfrak{t}_1^\vee \otimes \Gamma \ , \ \mathcal{D}_0 \right) \simeq \left( \wedge^\bullet \mathfrak{t}_1^\vee \otimes L^\bullet , \mathcal{D}_0 + d_L \right).
	\end{equation}
	Thus we may as well compute the cohomology of the complex on the right. Since the differential decomposes into two pieces there, we can use a spectral sequence for this task. Therefore we start with
	\begin{equation}
	\left( \wedge^\bullet \mathfrak{t}_1^\vee \otimes L^\bullet \ , \ \mathcal{D}_0 \right).
	\end{equation}
	It is easy to see that
	\begin{equation}
	H^k\left( \wedge^\bullet \mathfrak{t}_1^\vee \otimes R[-l] \ , \ \mathcal{D}_0 \right) =
	\begin{cases}
	\mathbb{C}, & \text{if} \ k=l \\
	0, & \text{else.}
	\end{cases}
	\end{equation}
	This means that we obtain a copy of $\mathbb{C}$ for each generator of $L^\bullet$. In total we get
	\begin{equation}
          H^\bullet\left( \wedge^\bullet \mathfrak{t}_1^\vee \otimes L^\bullet \ , \ \mathcal{D}_0 \right) = L^\bullet \otimes_R \mathbb{C},
	\end{equation}
        where the $R$-module structure on~$\C$ is obtained by applying the canonical augmentation (quotienting out the maximal ideal). 
        The differential on the first page is just the morphism induced by~$d_L$. However, since $L^\bullet$ is minimal, $d_L$ contains no constant terms and therefore induces the zero map on the first page, implying that the result is already exact. Thus we find that
	\begin{equation}
          H^\bullet\left( \wedge^\bullet \mathfrak{t}_1^\vee \otimes \Gamma \ , \ \mathcal{D}_0 \right) \cong L^\bullet \otimes_R \mathbb{C},.
	\end{equation}
        as claimed.
\end{proof}
The proposition reduces the task of computing Koszul homology to the task of finding a minimal (equivariant) free resolution of~$\Gamma$. This can easily be done with commutative algebra software such as~\textit{Macaulay2}~\cite{M2}. As a result one obtains the Betti numbers of the complex. This gives us information about the number of fields in our multiplet.
To understand which fields are part of our multiplet we have to identify the cohomology not only as a vector space, but as a representation of the Lorentz and the $R$-symmetry group. This is accomplished by introducing additional gradings for the $\lambda^i$, which allows us to extract the relevant information from the Hilbert series. Let us quickly review the main ingredients.

Let $\Gamma = \bigoplus_{i\ge 0} \Gamma_i$ be a graded $R$-module generated by finitely many elements in positive degree. The Hilbert series of $\Gamma$ is defined as the formal power series
\begin{equation}
\textit{HS}_\Gamma = \sum_{n=0}^{\infty} \text{dim}(\Gamma_n) \ T^n.
\end{equation}
Let $R=\CC[\lambda]$ be the polynomial ring in a single variable $\lambda$. Since there is only a single monomial in each degree the Hilbert series takes the form
\begin{equation}
	\textit{HS}_R = \sum_{n=0}^{\infty} T^n = \frac{1}{1-T} \: .
\end{equation}
As the dimension is multiplicative under the tensor product, the Hilbert series of a polynomial ring in $n$ variables $R = \CC[\lambda_1,\dots,\lambda_n] = \CC[\lambda_1] \otimes \dots \otimes \CC[\lambda_n]$ is just the product
\begin{equation}
	\textit{HS}_R = \frac{1}{(1-T)^n} \: .
\end{equation}
Now suppose we perform a shift $R(-d)$ with respect to the polynomial degree such that the constants are in degree $d$. We obtain for the Hilbert series
\begin{equation}
\begin{split}
	\textit{HS}_{R(-d)} &= \sum_{n=0}^{\infty} \mathrm{dim}(R(-d)_n) \ T^n \\
	&= \sum_{n=0}^{\infty} \mathrm{dim}(R_{n-d}) \ T^n \\
	&= T^{d} \textit{HS}_R \\ 
	&= \frac{T^d}{(1-T)^n} \: .
\end{split}
\end{equation}
Thus, considering a free $R$-module $\Gamma$ generated by elements in degree $d_1,\dots,d_k$ we find
\begin{equation} \label{eq:hs free}
	\textit{HS}_{\Gamma} = \frac{T^{d_1}+ \dots + T^{d_k}}{(1-T)^n} \: .
\end{equation}
The Hilbert series is additive with respect to short exact sequences. This means given a sequence
\begin{equation}
0 \longrightarrow A \longrightarrow B \longrightarrow C \longrightarrow 0 \: ,
\end{equation}
we find
\begin{equation}
\textit{HS}_B = \textit{HS}_A + \textit{HS}_C \: .
\end{equation}
If $L^\bullet$ is a free resolution of $\Gamma$, we have  a sequence
\begin{equation}
\Gamma \longleftarrow L^0 \longleftarrow L^{1} \longleftarrow \dots \longleftarrow L^{k-1} \longleftarrow L^k \longrightarrow 0 \: .
\end{equation}
Then the additivity implies
\begin{equation}
\textit{HS}_\Gamma = \sum_{j=1}^{k} (-1)^{j-1} \textit{HS}_{L^j}.
\end{equation}
Using this together with~\eqref{eq:hs free}, we can express the Hilbert series of $\Gamma$ in terms of the degrees of the basis vectors of the free resolution
\begin{equation} \label{eq:hilbertres}
\textit{HS}_\Gamma = \sum_{j=1}^{k} (-1)^{j-1} \frac{T^{d_1^j}+\dots+T^{d^j_{n_j}}}{(1-T)^n}.
\end{equation}
Coming back to our original question, we see that the Hilbert series of $\Gamma$ as a $R$-module contains all the information about the degrees of a basis of the minimal free resolution, which in turn coincides with the cohomology. All we have to do is to store the information about the transformation behavior under Lorentz and R-symmetry in the grading. Therefore, we assign to $\lambda^i$ the degree
\begin{equation}
\mathrm{deg}(\lambda^i) = (1,w_1^i,\dots,w_l^i) \: ,
\end{equation}
where $w_1^i,\dots,w_l^i$ are the weights of the Lorentz and $R$-symmetry representation. The first entry $1$ remembers the cohomological degree. The Hilbert series then becomes a polynomial in $l+1$ variables $T_0,\dots,T_l$. Equation~(\ref{eq:hilbertres}) remains valid, but we have to replace $T^{d^j_i}$ by products of $T_0,\dots,T_l$ where each factor is exponentiated with a separate degree. Initializing such a grading in \textit{Macaulay2} and computing the Hilbert series, we can read off the weights of a basis of the cohomology in each degree, allowing to identify the cohomology as a representation of Lorentz- and R-symmetry.

Examining the proof of Proposition~\ref{prop:res} closely, we can deduce a procedure to write down explicit representatives for the cohomology classes. Recall that we used the quasi-isomorphism
\begin{equation}
\left( \wedge^\bullet \mathfrak{t}_1^\vee \otimes \Gamma \ ,\ \mathcal{D}_0 \right) \simeq \left( \wedge^\bullet\mathfrak{t}_1^\vee \otimes L^\bullet \ , \ \mathcal{D}_0 + d_L \right).
\end{equation}
On the right side we have a double complex of the form shown in Table~\ref{tab: bicomplex}.
\begin{table}
\begin{equation} \label{double}
\begin{tikzcd}
\vdots \arrow[d, "\mathcal{D}_0"] & \vdots \arrow[d, "\mathcal{D}_0"] & \vdots \arrow[d, "\mathcal{D}_0"] \\
L^0 \otimes \wedge^2\ft_1^\vee  \arrow[d, "\mathcal{D}_0"] & L^1 \otimes \wedge^2\mathfrak{t}_1^\vee \arrow[l, "d_L"] \arrow[d, "\mathcal{D}_0"] & L^2 \otimes \wedge^2\ft_1^\vee \arrow[l, "d_L"] \arrow[d, "\mathcal{D}_0"] & \dots \arrow[l, "d_L"]  \\
L^0 \otimes \wedge^1\ft_1^\vee  \arrow[d, "\mathcal{D}_0"] & L^1 \otimes \wedge^1\ft_1^\vee \arrow[l, "d_L"] \arrow[d, "\mathcal{D}_0"] & L^2 \otimes \wedge^1\ft_1^\vee \arrow[l, "d_L"] \arrow[d, "\mathcal{D}_0"] & \dots \arrow[l, "d_L"] \\
L^0 \otimes \wedge^0\ft_1^\vee & L^1 \otimes \wedge^0\ft_1^\vee \arrow[l, "d_L"] & L^2 \otimes \wedge^0\ft_1^\vee \arrow[l, "d_L"] & \dots \arrow[l, "d_L"]
\end{tikzcd}
\end{equation}
\caption{The bicomplex obtained by using a free resolution to compute Koszul homology}
\label{tab: bicomplex}
\end{table}
There are two different spectral sequences computing the total cohomology: the horizontal sequence starting with the differential $d_L$ and the vertical sequence starting with $\mathcal{D}_0$. In the proof of Proposition~\ref{prop:res} we have seen that the latter already gives the exact result for the zero mode cohomology on the first page. The computational procedure amounts to coming to a better understanding of this bicomplex.

It is a fact that any bicomplex can be understood (non-canonically) as the sum of different indecomposable pieces~\cite{Stelzig}. These pieces are squares
\begin{equation}
\begin{tikzcd}
\bullet \arrow[d] & \bullet \arrow[l] \arrow[d] \\
\bullet & \bullet \arrow[l]
\end{tikzcd}
\end{equation}
and stairs of different lengths
\begin{equation}
\begin{tikzcd}
\bullet
\end{tikzcd}
\qquad
\begin{tikzcd}
\bullet \arrow[d] \\ \bullet
\end{tikzcd}
\qquad
\begin{tikzcd}
\bullet & \bullet \arrow[l]
\end{tikzcd}
\qquad
\begin{tikzcd}
\bullet \arrow[d] \\ \bullet & \bullet \arrow[l]
\end{tikzcd}
\dots
\end{equation}
Here, the bullet denotes the underlying field $\bullet = K$. Crucially, the decomposition can be chosen such that all the arrows are just identity maps. The length of a stair is the number of bullets $\bullet$ occurring. 

One can understand the behavior of spectral sequences by thinking about the ways that these indecomposable pieces contribute to cohomology. It is a matter of inspection to see that stairs of even length are acyclic at the $E_1$ page of one of the two spectral sequences of the bicomplex, but contribute two generators to the $E_1$ page of the other, that cannot be cancelled by the differential on that page just for degree reasons. It is precisely the (vertically or horizontally oriented) stairs of length $2k$ that contribute to differentials on the $E_{k}$ page of the corresponding spectral sequence. 
Stairs of odd length contribute to the total cohomology of the complex, but do so in a bidegree that depends on which spectral sequence is being considered. 
If we consider such a stair, we see that the cohomology with respect to the horizontal differential is concentrated at the upper end, while the cohomology with respect to the vertical differential lives at the lower end. 
They are thus responsible for the breaking of the bigrading to the single homological grading of the total complex.

Now note that the cohomology of our double complex~(\ref{double}) is concentrated in the bottom row $L^\bullet \otimes \wedge^0\ft_1^\vee$ (for the vertical differential $\mathcal{D}_0$) and on the left column $L^0 \otimes \wedge^\bullet\ft_1^\vee$ (for the horizontal differential $d_L$). This implies that we  have odd stairs contributing to the cohomology in the following manner:
\begin{equation}
\begin{tikzcd}
\bullet \arrow[d,blue] & \dots \\
\bullet \arrow[d,red] & \bullet \arrow[d, blue] \arrow[l,blue] & \dots \\
\bullet \arrow[d] & \bullet \arrow[d, red] \arrow[l, red] & \bullet \arrow[d,blue] \arrow[l,blue] & \dots \\
\bullet & \arrow[l] \bullet & \bullet \arrow[l, red] & \bullet \arrow[l, blue] & \dots
\end{tikzcd}
\end{equation}
Classes in the total cohomology by elements on either end of the stair. However, if we want to view the representatives as elements in~$\wedge^\bullet \ft_1^\vee \otimes \Gamma$, we have to apply the spectral sequence starting with~$d_L$, which amounts to choosing the representatives on the upper end $\wedge^\bullet \ft_1^\vee \otimes L^0$ and then projecting onto the quotient. On the other hand, a basis of the vertical $\mathcal{D}_0$-cohomology is clearly provided by the standard basis $e_i \subseteq L_k \otimes \wedge^0\ft_1^\vee = R^{n_k}$. In order to get the desired basis in $\wedge^k \ft_1^\vee \otimes L^0$ we have to walk up the corresponding stair. Since we are now working only with Koszul complexes of maximal ideals in polynomial rings, this can be done explicitly by defining a simple ``inverse'' or ``adjoint'' differential to $\mathcal{D}_0$ by the formula 
\begin{equation}
\mathcal{D}_0^\dagger = \theta^\alpha \frac{\partial}{\partial \lambda^\alpha} .
\end{equation}
Then our discussion implies the following lemma.
\begin{lem} \label{lem:reps}
	Let $\pi: L_0 \longrightarrow R/I$ be the projection. The elements $\pi((\mathcal{D}_0^\dagger d_L)^k e_i)$ form a basis of the cohomology $H^\bullet \left(\wedge^\bullet\ft_1^\vee \otimes R/I\right)$ in $\theta$-degree $k$.
\end{lem}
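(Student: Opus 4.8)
The plan is to realize the informal ``walking up the staircase'' as the explicit comparison isomorphism between the two spectral sequences of the bicomplex in Table~\ref{tab: bicomplex}, and to identify $\pi\big((\mathcal{D}_0^\dagger d_L)^k e_i\big)$ as the image of the generator $e_i$ under this isomorphism. Both spectral sequences degenerate for the reasons recorded in the proof of Proposition~\ref{prop:res}: filtering by columns (taking $\mathcal{D}_0$ first) gives $E_1=E_\infty = L^\bullet\otimes_R\mathbb{C}$, concentrated in the bottom row $\wedge^0\mathfrak{t}_1^\vee$, with the standard generators $e_i$ of $L^k$ representing a basis in total degree $k$; filtering by rows (taking $d_L$ first) gives $E_2=E_\infty = H^\bullet(\wedge^\bullet\mathfrak{t}_1^\vee\otimes R/I,\mathcal{D}_0)$, concentrated in the left column $L^0$. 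As each $E_\infty$ sits on a single line there are no extension problems, and composing the two identifications yields an isomorphism $L^k\otimes_R\mathbb{C}\cong H^k(\wedge^\bullet\mathfrak{t}_1^\vee\otimes R/I,\mathcal{D}_0)$ between pieces of the same total degree $k$.

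First I would record the algebra of the adjoint operator. A short computation gives $\mathcal{D}_0\mathcal{D}_0^\dagger + \mathcal{D}_0^\dagger\mathcal{D}_0 = N$, where $N = \sum_\alpha \lambda^\alpha\partial/\partial\lambda^\alpha + \sum_\alpha\theta^\alpha\partial/\partial\theta^\alpha$ counts the $\lambda$'s and $\theta$'s; since $N$ commutes with $\mathcal{D}_0$ and $\mathcal{D}_0^\dagger$ and is invertible in positive degree, the operator $N^{-1}\mathcal{D}_0^\dagger$ is a contracting homotopy for $\mathcal{D}_0$. This is exactly the inverse needed to climb one rung. I would then run the zigzag: starting from $c_k := e_i \in L^k\otimes\wedge^0\mathfrak{t}_1^\vee$, I build a total cocycle $z_i = \sum_{p=0}^k c_p$ with $c_p\in L^p\otimes\wedge^{k-p}\mathfrak{t}_1^\vee$ by solving $\mathcal{D}_0 c_{p-1} = -\,d_L c_p$ recursively; each equation is solvable because $d_L c_p$ is $\mathcal{D}_0$-exact (its class in the vertical cohomology vanishes---on the bottom row by minimality, $d_L c_p\in\mathfrak{m}L^{p-1}$, and in higher $\theta$-degree because that complex is acyclic), and the solution is produced by $N^{-1}\mathcal{D}_0^\dagger$. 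Tracking the recursion, the column-zero component is $c_0 = (\text{nonzero scalar})\cdot(\mathcal{D}_0^\dagger d_L)^k e_i$. Under the row-filtration comparison the total class $[z_i]$ is sent to the class of its column-zero part after applying $\pi\colon L^0\to R/I$, that is to $\big[\pi(c_0)\big]$; since $\{[z_i]\}$ is a basis of the total cohomology and the comparison is an isomorphism, the classes $\big[\pi\big((\mathcal{D}_0^\dagger d_L)^k e_i\big)\big]$ form a basis of the $\theta$-degree-$k$ part.

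To confirm that the elements as written---without the normalizing scalars---are genuine cocycles, I would check $\mathcal{D}_0\,\pi\big((\mathcal{D}_0^\dagger d_L)^k e_i\big)=0$ directly. Writing $P=\mathcal{D}_0^\dagger d_L$ and using $[\mathcal{D}_0,d_L]=0$ together with the homotopy identity gives $\mathcal{D}_0 P = N d_L - P\mathcal{D}_0$, so pushing $\mathcal{D}_0$ to the right through $P^k e_i$ produces only terms whose innermost factor is either $\mathcal{D}_0 e_i = 0$ or a $d_L$ landing in $L^0$. Since $\pi$ commutes with $\mathcal{D}_0^\dagger$ and $N$ but annihilates the image of $d_L\colon L^1\to L^0$ (exactness of the resolution, $\pi\circ d_L = 0$), every term is killed by $\pi$, and the claimed representatives are $\mathcal{D}_0$-closed in $\wedge^k\mathfrak{t}_1^\vee\otimes R/I$.

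I expect the main obstacle to be the gap between the normalized zigzag cocycle $c_0$ and the unnormalized product $(\mathcal{D}_0^\dagger d_L)^k e_i$. Because $d_L$ raises the $\lambda$-degree by varying amounts whenever $L^p$ is generated in several internal degrees, $d_L c_p$ is a sum of $N$-eigenvectors with distinct eigenvalues, and omitting the $N^{-1}$ weights genuinely changes the cochain. The cleanest remedy is to work one internal degree at a time and compare leading terms in the $\lambda$-adic filtration: the lowest-order part of both cochains comes from the linear part of $d_L$ and agrees up to the nonzero factor $k!$, so independence of leading terms forces independence of the classes; combined with the dimension count $\dim_{\mathbb{C}} H^k = \dim_{\mathbb{C}}(L^k\otimes_R\mathbb{C})$ from Proposition~\ref{prop:res}, this upgrades linear independence to the basis statement. (When the minimal resolution is linear---as it is for the Koszul nilpotence varieties arising in the main examples---no mixing occurs, $N$ is constant at each stage, and the normalized and unnormalized cochains agree up to a single scalar, so this subtlety disappears.)
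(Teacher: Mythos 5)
Your route is essentially the paper's own: what the paper offers as a proof of Lemma~\ref{lem:reps} is precisely the preceding discussion of the bicomplex in Table~\ref{tab: bicomplex}, its decomposition into squares and odd stairs, and the prescription of walking from the generator $e_i$ at the bottom-right end of a stair to a representative in $L^0\otimes\wedge^k\ft_1^\vee$ by alternately applying $d_L$ and $\mathcal{D}_0^\dagger$. Your comparison of the two degenerating spectral sequences and the explicit zigzag $\mathcal{D}_0 c_{p-1}=-d_Lc_p$ is the same argument written out carefully, and your identification of the honest contracting homotopy $N^{-1}\mathcal{D}_0^\dagger$ (via $\mathcal{D}_0\mathcal{D}_0^\dagger+\mathcal{D}_0^\dagger\mathcal{D}_0=N$) is a genuine improvement: the paper silently conflates the normalized zigzag representative with the unnormalized product $(\mathcal{D}_0^\dagger d_L)^k e_i$, and you are right that these differ whenever $N$ has several eigenvalues at some stage.

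Both of your patches for this discrepancy, however, have soft spots. In the cocycle check, $\pi$ does \emph{not} commute with $\mathcal{D}_0^\dagger=\theta^\alpha\,\partial/\partial\lambda^\alpha$, because $\partial/\partial\lambda^\alpha$ does not preserve $I=\ker\pi$ (already $\partial_{\lambda^\alpha}(\lambda\gamma^\mu\lambda)\notin I$); in the terms of $\mathcal{D}_0P^ke_i$ with at least one $P$ to the left of the $Nd_L$, the outermost $\mathcal{D}_0^\dagger$, being a derivation, produces contributions $g_a\,\theta^\alpha\partial_{\lambda^\alpha}(d_Le_a^{(1)})$ that are not manifestly in $\ker\pi$; their vanishing in examples rests on Fierz-type identities, not on formal commutation. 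In the leading-term argument, you assume $d_L$ has a nonvanishing linear part, which fails exactly in the paper's main examples: for $\Gamma=\cO_Y$ with $I$ quadratic, $(d_L)_1$ is quadratic, so the lowest $\lambda$-degree of $\pi\bigl((\mathcal{D}_0^\dagger d_L)^ke_i\bigr)$ is not zero and the factor is not $k!$. The clean repair is homogeneity rather than linearity: since $e_i$ and all the $d_L$ are graded, every stage of the zigzag is homogeneous of fixed internal degree, so $N$ acts on the component along a generator $e_a^{(p)}$ of internal degree $d_a^{(p)}$ by the single scalar $d^{(k)}_i-d_a^{(p)}$. Whenever each $L^p$ is generated in a single internal degree (one nonzero entry per column of the Betti table---true for the resolutions the paper actually uses, even though they are not linear), the normalized and unnormalized walks agree up to a nonzero scalar for each $e_i$, which yields the lemma as stated; when some $L^p$ mixes internal degrees, the formula genuinely requires the $N^{-1}$ insertions.
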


\subsection{Homotopy transfer to component fields}

The new differential acting on the component fields, as well as the action of the supersymmetry algebra and, if present, an $L_\infty$ structure are obtained from the respective structures via homotopy transfer. For this we need homotopy data
\begin{equation} \label{eq:homotopy data 1}
\begin{tikzcd}
\arrow[loop left]{l}{h}(\Gr A^\bu, \cD_0)\arrow[r, shift left, "p"] &(H^\bullet(\Gr A^\bu)  , \, 0)\arrow[l, shift left, "i"] \: .
\end{tikzcd}
\end{equation}
Using Lemma~\ref{lem:reps}, we can define an inclusion map
\begin{equation}
  i: H^\bullet(\Gr A^\bu) \hookrightarrow \left( A^\bu , \cD_0 \right)
\end{equation}
by sending a cohomology class to this representative. This inclusion is a quasi-isomorphism. In addition, choosing a complementary subspace inside $A$ gives the projection $p$. (We always work equivariantly with respect to Lorentz and $R$-symmetry.) 

\subsubsection*{The differential}
Recall that we decomposed the differential on~$A^\bu$ as the sum of two pieces, of filtered weight zero and two, respectively:
\deq{\cD = \lambda^\alpha \pdv{ }{\theta^\alpha} + \lambda^\alpha \theta^\beta \Gamma^i_{\alpha\beta} \pdv{ }{x^i} = \cD_0 + \cD_1.
}
We can thus view $\cD_1$ as defining a deformation of the differential on~$\Gr A^\bu$, which in turn equips $H^\bu(\Gr A^\bu)$ with a new differential $\cD'$ that is obtained by homotopy transfer of $D_\infty$ structure~\cite{Dotsenko}\cite{operadsBook}\cite{Lapin}. This uses the choice of a homotopy datum to write all of the higher differentials of a spectral sequence as terms in a single differential, acting on the $E_1$ page, whose cohomology is~$E_\infty$. In formulas, we have
\begin{equation}
	\cD' = \sum_{n=0}^{\infty} \cD'_n
\end{equation}
where the pieces are given by
\begin{equation}
	\cD'_n = p \circ \left( (\cD_1 h)^n \cD_1 \right) \circ i .
\end{equation}
(Note that, due to our conventions for the filtration, only differentials on \emph{even} pages are non-trivial; the differential on page~$2n$ is represented by~$\cD'_n$ above.)
Furthermore, we can fix new homotopy data~\cite{Lapin}
\begin{equation}\label{eq:homotopy data 2}
\begin{tikzcd}
\arrow[loop left]{l}{h'}(A, \cD)\arrow[r, shift left, "p'"] &(H^\bullet(A, \cD_0) \, , \, \cD')\arrow[l, shift left, "i'"] \: ,
\end{tikzcd}
\end{equation}
where
\begin{equation}
\begin{split}
	i' &= \sum_{n = 0}^{\infty} i'_n  =  \sum_{n = 0}^{\infty} (h \cD_1)^n \circ i \\
	p' &=  \sum_{n = 0}^{\infty} p'_n  =  p \circ \sum_{n=0}^{\infty} (\cD_1 h)^n \\
	h' &=  \sum_{n = 0}^{\infty} h'_n = h \circ \sum_{n=0}^{\infty} (\cD_1 h)^n \: .
\end{split}
\end{equation}
We can use this homtopy data to transfer further structures, such as the action of the supersymmetry algebra or an $L_\infty$ structure, from $A^\bu(\Gamma)$ to the component field description. Note that, in terms of sum-over-trees formulas, homotopy transfer with respect to the new homotopy data from~\eqref{eq:homotopy data 2} is expressed in terms of~\eqref{eq:homotopy data 1} simply by allowing for unary vertices which are decorated by $\cD_1$.

\subsubsection*{The supersymmetry action}
The supersymmetry action is obtained by a homotopy transfer of $L_\infty$ module structure. As a result one obtains an map of super $L_\infty$ algebras
\begin{equation}
	\rho' \colon \fp \rightsquigarrow \big(\cD(E')\; , \; [\cD',-] \big) ,
\end{equation}
whose component maps can be obtained via sum over trees formulas. For example $\rho'^{(2)}$ is given by
\begin{equation}
	\rho'^{(2)}(x_1,x_2) = p' \circ \left( \rho(x_1) \circ h' \circ \rho(x_2) \pm \rho(x_2)\circ h' \circ \rho(x_1) \right) \circ i \: .
\end{equation}
Interestingly, there is a close link between the resolution differential and the action of the supersymmetry algebra. This connection was already conjectured in~\cite{BerkovitsSupermembrane}, where it was noticed that the non-derivative supersymmetry transformations and their closure terms appear in the resolution differential of eleven-dimensional supergravity. Using our knowledge on the representatives and the homotopy transfer description of the action of the supersymmetry transformations we can make this observation precise and also provide a proof.

For this note that the strict part of a non-derivative supersymmetry transformation acts by
\begin{equation}
	\cQ_0 := \rho_{\partial_x = 0}(Q) = \epsilon^\alpha \frac{\partial}{\partial \theta^\alpha} \: .
\end{equation}
In addition it is easy to see that
\begin{equation} \label{eq:anticomm}
	\{\cQ_0 , \cD_0^\dagger \} = \epsilon^\alpha \frac{\partial}{\partial \lambda^\alpha}
\end{equation}
and obviously
\begin{equation}
	[\cQ_0 , d_L] = 0 \: .
\end{equation}
Now suppose $\cQ_0$ acts on a representative in $\theta$-degree $k$
\begin{equation}
\begin{split}
	\rho'^{(1)}_{\partial_x = 0} (f) &= p \circ \cQ_0 \circ i (f) \\
	&= p \circ \cQ_0 \circ \pi (\cD_0^\dagger d_L)^k (f^i e_i^{(k)}) \: .	
\end{split}
\end{equation}
Here $(e_i^{(k)})$ denotes a basis of $L^k \otimes_R \CC$ and $\pi: L^0 \longrightarrow \Gamma$ the projection. Note that $\cQ_0 \circ \pi = \pi \circ \cQ_0$. In the following, we abbreviate the components of the resolution differential by $d_k := (d_L)_k$. Now we can use the anticommutator relation~\eqref{eq:anticomm} to bring $\cQ_0$ to the right. We find
\begin{equation}
	\rho'^{(1)}_{\partial_x = 0}(f) = p \circ \pi \left( \sum_{j=1}^{k} \cD_0^\dagger d_1 \dots \cD_0^\dagger d_{j-1} \epsilon \frac{\partial}{\partial \lambda} d_j \cD_0^\dagger \dots d_k (f) \right) \: .
\end{equation}
Since we already know the explicit form of the representatives, we can carry out the projection to $\cD_0$-cohomology directly. The only remaining term is the following.
\begin{equation}
		\rho'^{(1)}_{\partial_x = 0}(f) = \pi \left((\cD_0^\dagger d_L)^{k-1} \epsilon \frac{\partial}{\partial \lambda} d_k (f^i e_i^{(k)}) \right) \: .
\end{equation}
Furthermore, only the part of $(d_L)_k$ linear in $\lambda$ can contribute in $\cD_0$-cohomology. Then $\epsilon \frac{\partial}{\partial \lambda}$ simply replaces $\lambda$ with $\epsilon$ in the $d_k$. Let us denote the resulting map by $d^\epsilon_k$ and its components by $(d^\epsilon_k)_i^{\ j}$. Then we find
\begin{equation}
\begin{split}
	\rho'^{(1)}_{\partial_x = 0} (f) &= \pi \left( (\cD_0^\dagger d_L)^{k-1} (d^\epsilon_k)_i^{\ j} e_j^{(k-1)} f^i \right) \\
	&= \pi \left( (\cD_0^\dagger d_L)^{k-1} (e_j^{(k-1)}) (d^\epsilon_k)_i^{\ j} f^i \right)
\end{split}
\end{equation}
Identifying the representative in degree $k-1$ and writing the transformation rule dually in terms of operators, we find
\begin{equation}
	\delta g^j = (d^\epsilon_k)_i^{\ j} f^i \: ,
\end{equation}
where $g^j$ denotes the operator corresponding to the respective representative in $\theta$ degree $k-1$. This shows that linear parts in the resolution differential precisely correspond to the strict part of the non-derivative supersymmetry transformations.

This generalizes to the higher components of the supersymmetry action. For $n \geq 2$, the non-derivative part of $\rho^{(n)}$ acts is given by
\begin{equation}
	\rho'^{(n)} = p \circ \left( \cQ_0 \circ h \circ \cQ_0  \right)^{n-1} \circ i \: .
\end{equation}
For example one finds for $\rho'^{(2)}$
\begin{equation}
\begin{split}
	\rho'^{(2)}_{\partial_x = 0} (Q,Q)(f) &= p \circ \cQ_0 \circ h \circ \cQ_0 \circ i (f) \\
	& = p \circ \cQ_0 \circ h \circ \pi \left( \sum_{j=1}^{k} \cD_0^\dagger d_1 \dots \cD_0^\dagger d_j \epsilon \frac{\partial}{\partial \lambda} d_j \cD_0^\dagger \dots d_k (f) \right)
\end{split}
\end{equation}
Now assuming that the homotopy $h$ acts via $h \circ \pi = \pi \circ \cD_0^\dagger$ we find using $(\cD_0^\dagger)^2 = 0$
\begin{equation}
\begin{split}
	\rho'^{(2)}_{\partial_x = 0} (Q,Q)(f) &= p \circ \pi \left( \cQ_0 \cD_0^\dagger \epsilon \frac{\partial}{\partial \lambda} d_1 \cD_0^\dagger \dots d_k (f) \right) \\
	&= p \circ \pi \left( \epsilon \frac{\partial}{\partial \lambda} \cQ_0 \cD_0^\dagger d_1 \cD_0^\dagger \dots d_k (f) \right) \: ,
\end{split}
\end{equation}
where we used that $\epsilon \frac{\partial}{\partial \lambda}$ commutes with both $\cD_0^\dagger$ and $\cQ_0$.

Now we can again use the anticommutator relation~\eqref{eq:anticomm} to find
\begin{equation}
	\rho'^{(2)}_{\partial_x = 0} (Q,Q)(f) = p \circ \pi \left( \epsilon \frac{\partial}{\partial \lambda} \sum_{j=1}^{k} \cD_0^\dagger d_1 \dots \cD_0^\dagger d_j \epsilon \frac{\partial}{\partial \lambda} d_j \cD_0^\dagger \dots d_k (f) \right) \: .
\end{equation}
Carrying out the projection $p$ on $\cD_0$-cohomology, we see that only one term survives.
\begin{equation}
	\rho'^{(2)}_{\partial_x = 0} (Q,Q)(f) = \pi \left( (\cD_0^\dagger d)^{k-1} (e^{(k-1)}_j) (d_k^{\epsilon^2})^{\ j}_i f^i \right) \: ,
\end{equation}
where $d_k^{\epsilon^2}$ denotes the quadratic part of the resolution differential with $\lambda$'s replaced by $\epsilon$'s. Written in terms of operators this gives a transformation rule
\begin{equation}
	\delta g^j = (d^{\epsilon^2}_k)_i^{\ j} f^i \: .
\end{equation}

Using a similar calculation as above one sees that only the part of order $n$ in the resolution differential contributes to a supersymmetry transformation and we obtain supersymmetry transformation rules of the form
\begin{equation}
	\delta g^j = (d^{\epsilon^n}_k)_i^{\ j} f^i \: .
\end{equation}
Interestingly this provides a direct link between the polynomial degree of the terms in the resolution differential and the homotopy action of the supersymmetry algebra. That is, if the resolution differential is at most quadratic, then the $L_\infty$ module structure will contain at most $\rho'^{(2)}$ corrections.

\subsubsection*{$L_\infty$ structures}
If $(A,\cD)$ carries an $L_\infty$ structure with differential $\cD$, this structure can be transferred as well. For this one uses the usual sum over trees formulas. As we will see below, the transferred $L_\infty$ structure on the component fields can encode the structure of gauge transformations and in some cases also interactions. Note that the new $L_\infty$ structure has $\mu_1' = \cD'$ the transferred differential. We will see this explicitly in the case of ten-dimensional super Yang--Mills theory.

\subsection{An example of the  technique: the 4d gauge multiplet}
To illustrate these techniques, we are going to perform all the necessary calculations for the $d=4$, $\mathcal{N}=1$ vector multiplet by hand. Let $Y = Y(4;1)$ be the nilpotence variety of the $\mathcal{N}=1$ super Poincar\'{e} algebra in four dimensions. We choose the structure sheaf $\cO_Y$ as our equivariant module. Using \textit{Macaulay2} we can compute the minimal free resolution. Its Betti numbers are displayed in the following table.
\begin{center}
	\begin{tabular}{c|ccccc} 
		& $0$ & $1$ & $2$ & $3$ \\
		\hline
		$0$ & $1$ & $-$ & $-$ & $-$ \\ 
		$1$ & $-$ & $4$ & $4$ & $1$ \\
	\end{tabular}
	\captionof{table}{Betti numbers of the minimal free resolution. The horizontal axis denotes degree in $\theta$, while the vertical axis counts powers in $\lambda$.} \label{t:fieldcontent}
\end{center}
To analyze the field content of the multiplet as representations of the Lorentz group, we assign gradings to the generators $\lambda$ and $\bar{\lambda}$, corresponding to their weights under
\begin{equation}
\mathfrak{so}(4) \cong \mathfrak{su}(2) \times \mathfrak{su}(2) \: .
\end{equation}
Concretely this means that we assign the grading
\begin{equation}
\begin{split}
\mathrm{deg}(\lambda_1)&=(1,1,0) \qquad \mathrm{deg}(\lambda_2)=(1,-1,0) \\
\mathrm{deg}(\bar{\lambda}_1)&=(1,0,1) \qquad \mathrm{deg}(\bar{\lambda}_2)=(1,0,-1) \: .
\end{split}
\end{equation}
Then we examine the numerator of the Hilbert series. We organize the terms by degree in the variable $T_0$, which indicates the total degree in the complex. In degree 0 we simply obtain $1$, which means the field in total degree 0 is a scalar. In degree 2 we find the term
\begin{equation}
-T_0^2(T_1 T_2 + T_1 T_2^{-1} + T_1^{-1} T_2 + T_1^{-1} T_2) \: .
\end{equation}
Reading off the highest weights we see that the corresponding representation of $SU(2) \times SU(2)$ is
\begin{equation}
[1,1] = [1,0] \otimes [0,1] \: ,
\end{equation}
which shows that the field in degree 2 is a vector. In degree 3 we obtain
\begin{equation}
T_0^3(T_1 + T_1^{-1} + T_2 + T_2^{-2}) \: .
\end{equation}
Correspondingly, the representation in degree 3 is a direct sum
\begin{equation}
[1,0] \oplus [0,1] \: .
\end{equation}
Hence the field in degree 3 is a Dirac fermion. Finally the term of order 4 is just $-T_0^4$ indicating that the field in degree $4$ is a scalar. This means that we recover the usual field content of the $d=4$, $\mathcal{N}=1$ vector multiplet. \\ \\
To find representatives with the procedure explained above, we need the differential on the free resolution. The minimal free resolution is of the form
\begin{equation}
R \otimes \left( \CC \xleftarrow{(d_L)_1} V \xleftarrow{(d_L)_2} S_+ \oplus S_- \xleftarrow{(d_L)_3} \CC  \right) \: .
\end{equation}
The differential can be described by the matrices
\begin{equation} \label{eq: res diff}
\begin{split}
(d_L)_1 &= \begin{pmatrix}
\lambda_1\bar{\lambda}_1 & \lambda_1\bar{\lambda}_2 & \lambda_2\bar{\lambda}_1 & \lambda_2\bar{\lambda}_2
\end{pmatrix} \\
(d_L)_2 &= \begin{pmatrix}
0 & -\bar{\lambda}_2 & 0 & -\lambda_2 \\
0 &  \bar{\lambda}_1 & \lambda_2 & 0 \\
-\bar{\lambda}_2 & 0 & 0 & \lambda_1 \\
\bar{\lambda}_1 & 0 & \lambda_1 & 0 
\end{pmatrix} \\
(d_L)_3 &= \begin{pmatrix}
\lambda_1 \\ -\lambda_2 \\ -\bar{\lambda}_1 \\ \bar{\lambda}_2
\end{pmatrix} \: .
\end{split}
\end{equation}
Choosing a basis $e_{\alpha \dot{\alpha}}$ of $V$ and $(s_\alpha, \bar{s}_{\dot{\alpha}})$ of $S_+ \oplus S_-$, these maps can be conveniently packaged as follows.
\begin{equation}
\begin{matrix}
(d_L)_1 &: & V & \longrightarrow & \CC &,& A & \mapsto & \lambda^\alpha \bar{\lambda}^{\dot{\alpha}} A_{\alpha \dot{\alpha}} \\
(d_L)_2 &: & S_+ \oplus S_- & \longrightarrow & V &,& (\psi, \bar{\psi}) & \mapsto & (\lambda^\alpha \bar{\psi}^{\dot{\alpha}} + \psi^\alpha \bar{\lambda}^{\dot{\alpha}}) e_{\alpha \dot{\alpha}} \\
(d_L)_3 &: & \CC & \longrightarrow & S_+ \oplus S_- &,& D & \mapsto & (\lambda^\alpha s_\alpha - \bar{\lambda}^{\dot{\alpha}} \bar{s}_{\dot{\alpha}}) D
\end{matrix}
\end{equation}
Note that we can apply the isomorphism $S_+ \otimes S_- \cong V$ by a change of basis $e_\mu = (\sigma_\mu)^{\alpha \dot{\alpha}} e_{\alpha \dot{\alpha}}$. With this description, it is easy to identify representatives in $\cD_0$-cohomology. For example, the vector is represented by
\begin{equation}
A \xmapsto{(d_L)_1} (\lambda \sigma^\mu \bar{\lambda}) A_\mu \xmapsto{\mathcal{D}_0^\dagger} (\lambda \sigma^\mu \bar{\theta} + \bar{\lambda} \sigma^\mu \theta) A_\mu \: .
\end{equation}
For the fermions we find
\begin{equation}
\psi \xmapsto{(d_L)_2} \psi^\alpha \bar{\lambda}^{\dot{\alpha}} e_{\alpha \dot{\alpha}} \xmapsto{\mathcal{D}_0^\dagger} \psi^\alpha \bar{\theta}^{\dot{\alpha}} e_{\alpha \dot{\alpha}} \xmapsto{(d_L)_1} \psi^\alpha \bar{\theta}^{\dot{\theta}} \lambda_\alpha \bar{\lambda}_{\dot{\alpha}} \xmapsto{\mathcal{D}_0^\dagger} \psi^\alpha \bar{\theta}^{\dot{\alpha}} (\lambda_\alpha \bar{\theta}_{\dot{\alpha}} + \theta_\alpha \bar{\lambda}_{\dot{\alpha}})
\end{equation}
A similar calculation gives the complex conjugate representative for $\bar{\psi}$. Finally we can apply the procedure to the auxiliary field.
\begin{equation}
D \xmapsto{\cD_0^\dagger \circ (d_L)_3 } (\theta s - \bar{\theta} \bar{s}) D \xmapsto{(d_L)_2} (\theta^\alpha \bar{\lambda}^{\dot{\alpha}} - \lambda^\alpha \bar{\theta}^{\dot{\alpha}} ) e_{\alpha \dot{\alpha}} D \xmapsto{\mathcal{D}_0^\dagger} 2 \theta^\alpha \bar{\theta}^{\dot{\alpha}} e_{\alpha \dot{\alpha}} D \xmapsto{(d_L)_1} 2 (\theta \lambda) (\bar{\theta} \bar{\lambda}) D \xmapsto{\mathcal{D}_0^\dagger} 2 (\theta^2 \bar{\lambda} \bar{\theta} + \bar{\theta}^2 \lambda \theta) D
\end{equation}
We can summarize these representatives in Table~\ref{tab:4d reps}.
\begin{table}[h]
	\caption{Representatives for the $4D$ $\cN=1$ vector multiplet organized by $\theta$-degree.}
	\begin{center}
		\begin{tabular}{|c|c|}
			\hline
			Field & Representative in the $\cD_0$-cohomology \\
			\hline
			$c$  & $c$ \\
			\hline
			$A$ &$(\lambda \sigma^\mu \bar{\theta} + \theta \sigma^\mu \bar{\lambda}) A_\mu$ \\
			\hline
			$\psi$ &$\psi^\alpha \bar{\theta}^{\dot{\alpha}} (\lambda_\alpha \bar{\theta}_{\dot{\alpha}} + \theta_\alpha \bar{\lambda}_{\dot{\alpha}})$ \\
			$\bar{\psi}$ & $\bar{\psi}^{\dot{\alpha}} \theta^\alpha (\bar{\lambda}_{\dot{\alpha}} \theta_\alpha + \bar{\theta}_{\dot{\alpha}} \lambda_\alpha)$ \\
			\hline
			$D$ & $(\theta^2 \bar{\lambda} \bar{\theta} + \bar{\theta}^2 \lambda \theta) D$ \\
			\hline
		\end{tabular}
	\end{center}
	\label{tab:4d reps}
\end{table}
Note that these representatives are not unique. Other choices are possible; for example one can simplify these representatives by eliminating terms in the image of $\cD_0$. For instance the antisymmetric expression
\begin{equation}
\lambda_\alpha \bar{\theta}_{\dot{\alpha}} - \bar{\lambda}_{\dot{\alpha}} \theta_\alpha
\end{equation}
is clearly in the image of $\cD_0$. This implies that we could represent the vector equally well by $\lambda_\alpha \bar{\theta}_{\dot{\alpha}}$. Similar observations also hold for the other fields.

Let us now study the structure of the multiplet defined by $\cD_0$-cohomology.

\subsubsection*{The differential} By degree reasons, only the first order part $\cD'_1$ of the transferred differential $\cD'$ can act non-trivially on the component fields. Recall
\begin{equation}
\cD'_1 = p \circ \cD_1 \circ i \: , 
\end{equation}
where
\begin{equation}
	\mathcal{D}_1 = (\lambda \sigma^\mu \bar{\theta} + \bar{\lambda} \sigma^\mu \theta)  \partial_\mu \: .
\end{equation}
The only non-vanishing contribution arises by acting on the ghost. There we find
\begin{equation}
\mathcal{D}_1 c = (\lambda \sigma^\mu \bar{\theta} + \bar{\lambda} \sigma^\mu \theta) \partial_\mu c \:.
\end{equation}
Identifying the representative of the gauge field, we see that the differential is simply the de Rham differential
\begin{equation}
c \mapsto dc \: .
\end{equation}
Written dually in terms of operators this gives the BRST differential
\begin{equation}
Q_{\textit{BRST}} A_\mu = \partial_\mu c \: .
\end{equation}
The following picture summarizes the complex on the component field level.
\begin{equation}
\begin{tikzcd}[row sep=0.7cm, column sep=0.7cm]
\Omega^0(\mathbb{R}^4) \arrow[dr, "d"] \\  & \Omega^1(\mathbb{R}^4) & \Gamma(\mathbb{R}^4,S_+ \oplus S_-) & \Gamma(\mathbb{R}^4, \mathbb{C}) \: .  \\
\end{tikzcd}
\end{equation}

\subsubsection*{The action of the supersymmetry algebra} 
As explained above, we can read off the non-derivative part of the supersymmetry transformations directly from the resolution differential. This gives transformation rules
\begin{equation}
\begin{split}
\delta c &= (\epsilon \sigma^\mu \bar{\epsilon}) A_\mu \\
\delta A_\mu &= \epsilon \sigma_\mu \bar{\psi} + \psi \sigma_\mu \bar{\epsilon} \\
\delta \psi &= \epsilon D  \\
\delta \bar{\psi} &= -\bar{\epsilon} D\\
\delta D &= 0 \: .
\end{split}
\end{equation}
Note that there is one higher order component indicating that the action of the supersymmetry algebra is not strict. We will come back to this in a moment.

Now, let us investigate the contribututions containing derivatives. By degree reasons there cannot appear any higher order contributions containing derivatives, such that we can focus on the strict part. The derivative part of $\rho'^{(1)}$ acts on the representatives by
\begin{equation}
\cQ_1 = \epsilon \sigma^\mu \bar{\theta} \partial_\mu + \theta \sigma^\mu \bar{\epsilon} \partial_\mu \: .
\end{equation}
For example we can act on the fermions to find
\begin{equation}
\begin{split}
\cQ_1(\psi) =& (\bar{\epsilon} \sigma_\mu \theta) \partial_\mu \psi^\alpha \bar{\theta}^{\dot{\alpha}} (\lambda_\alpha \bar{\theta}_{\dot{\alpha}} + \theta_\alpha \bar{\lambda}_{\dot{\alpha}}) \\
=& (\bar{\epsilon}^{\dot{\beta}} \sigma^\mu_{\beta \dot{\beta}} \partial_\mu \psi^\alpha ) (\lambda_\alpha \theta^\beta \bar{\theta}^2 + \theta_\alpha \theta^\beta \bar{\lambda}\bar{\theta}) \: .
\end{split}
\end{equation}
Projecting to cohomology this equals
\begin{equation}
\bar{\epsilon} \slashed{\partial} \psi (\lambda\theta \bar{\theta}^2 + \theta^2 \bar{\lambda} \bar{\theta}) \: ,
\end{equation}
such that we can identify a transformation rule
\begin{equation}
\delta D = \bar{\epsilon} \slashed{\partial} \psi \: .
\end{equation}
A similar calculation also holds for the complex conjugate $\bar{\psi}$, as well as for the gauge field and yield the usual supersymmetry transformation rules.

This describes the entire $L_\infty$ module structure of the superymmetry algebra on the four-dimensional, $\cN=1$ vector multiplet. The $\rho'^{(1)}$ part resembles the well known supersymmetry transformations from standard physics textbooks. In addition there is one higher correction. Recall that we found a transformation rule
\begin{equation}
\delta c = (\epsilon \sigma^\mu \bar{\epsilon}) A_\mu \: .
\end{equation}
This corresponds to a map $\rho'^{(2)}$ given by
\begin{equation}
\rho'^{(2)}: \mathfrak{t} \otimes \mathfrak{t} \otimes \Omega^1 \longrightarrow \Omega^0 \qquad (Q_1 \otimes Q_2 \otimes A) \mapsto \iota_{\{Q_1,Q_2\}} A \: .
\end{equation}
We can immediately check that $\rho'^{(2)}$ indeed defines a homotopy correcting for the failure of $\rho'^{(1)}$ to be strict. We clearly have
\begin{equation}
\rho'^{(1)}(Q)(c) = \rho'^{(1)}(\bar{Q}) = 0 \: .
\end{equation}
However the anticommutator of $Q$ and $\bar{Q}$ gives a translation which acts via the Lie derivative
\begin{equation}
\{Q,\bar{Q}\}(c) = L_{\{Q,\bar{Q}\}} (c) \: .
\end{equation}
Thus, according to~(\ref{L-action}) we have to check
\begin{equation}
L_{\{Q,\bar{Q}\}} (c) = -[D , \rho'^{(2)}(Q,\bar{Q})] (c) \: .
\end{equation}
Plugging in $D = d$ we obtain
\begin{equation}
\begin{split}
L_{\{Q,\bar{Q}\}} (c)=& -(d\circ \iota_{\{Q_1,Q_2\} } - \iota_{\{Q_1,Q_2\}} \circ d)(c) \\ 
=& (\iota_{\{Q_1,Q_2\}} \circ d)(c) \: ,
\end{split}
\end{equation}
where the first term vanishes by degree reasons. We immediately see that this is indeed satisfied due to Cartan's magic formula. This discussion illustrates that the $\rho'^{(2)}$-term indeed provides a homotopy for the failure of $\rho'^{(1)}$ to be strict. In terms of physics terminology, $\rho'^{(2)}$ is a closure term for the supersymmetry action, which closes only up to gauge transformations.

\subsubsection*{$L_\infty$ structure}
To treat the non-abelian vector multiplet we can tensor the entire construction with a Lie algebra $\fh$. We notice that $\cO_Y$ is not only an $\cO_Y$-module, but in fact an algebra. Hence $A^\bu(\cO_Y)$ carries an algebra structure such that the tensor product $A^\bu(\cO_Y) \otimes \fh$ comes equipped with an $L_\infty$-structure given by
\begin{equation}
\mu_1 = \mathcal{D} \otimes \mathrm{id}_\fh \qquad \mu_2 = m_2 \otimes
[-,-] \: .
\end{equation}
Here $m_2$ denotes the multiplication in $A^\bu(\cO_Y)$. Since the differential does not interfere with the Lie algebra at all, the component fields of the multiplet take values in $H^\bu(\cO_Y) \otimes \fh$. This is just the field content of the abelian version only now taking values in the Lie algbera $\fh$. The transfer of the $L_\infty$ algebra structure to the component fields is very simple. The differential only acts on the ghost fields via the de Rham differential.
\begin{equation}
	\mu'_1 = d \otimes \mathrm{id}_\fh	: \Omega^0 \otimes \fh \longrightarrow \Omega^1 \otimes \fh \: .
\end{equation}
In addition to the differential, only two-ary brackets arise.
\begin{equation}
	\begin{matrix}
	\mu'_2 :& \Omega^0 \otimes \fh \times \Omega^0 \otimes \fh & \longrightarrow & \Omega^0 \otimes \fh & \mu'_2(c,c) = [c,c] \: \phantom{.} \\
	\mu'_2 :& \Omega^0 \otimes \fh \times \Omega^1 \otimes \fh & \longrightarrow & \Omega^1 \otimes \fh  & \mu'_2(c,A) = [c,A] \: \phantom{.}\\
	\mu'_2 :& \Omega^0 \otimes \fh \times \Gamma(X,S_+ \oplus S_-) \otimes \fh & \longrightarrow & \Gamma(X,S_+ \oplus S_-) \otimes \fh & \mu'_2(c,\psi) = [c,\psi] \: .
	\end{matrix}
\end{equation}
We can also write these dually as a BRST operator.
\begin{equation}
\begin{split}
Q_{\textit{BRST}}c &= -\frac{1}{2} [c,c] \\
Q_{\textit{BRST}}A &= dc + [A,c] \\
Q_{\textit{BRST}}\psi &= [\psi,c] \\
Q_{\textit{BRST}}D &= [D,c] \: .
\end{split}
\end{equation}
Hence we recover the usual BRST complex of the $d=4$, $\mathcal{N}=1$ gauge multiplet. To equip the multiplet with a BRST datum, we could write the usual component field action for the gauge multiplet. In the terminology of~\S\ref{sec: prelim} this action then makes the multiplet a BRST theory.

\subsection{Scheme-theoretic properties: three-dimensional $\N=1$ supersymmetry}
\label{ssec: schemes}
In three dimensions we have the isomorphism $\mathrm{Spin}(3) \cong SU(2)$. The vector representation $V$ corresponds to the three-dimensional representation of $SU(2)$, while the spinor representation $S$ is given by the two-dimensional representation. The anticommutator is provided by the isomorphism
\begin{equation}
\sym^2(S) \cong V \: .
\end{equation}
Therefore the nilpotence variety is simply a point
\begin{equation}
Y = \{ 0 \} \: .
\end{equation}
Even though the nilpotence variety, regarded as a set, is just a point it still may carry an interesting structure as a scheme which allows for the construction of different multiplets. Expanding the equation $\{Q,Q\} = 0$ in coordinates $(\lambda^1,\lambda^2)$ we obtain the equations
\begin{equation}
(\lambda^1)^2 = \lambda^1 \lambda^2 = (\lambda^2)^2 = 0 \: .
\end{equation}
Clearly, the only solution to these equations is $\lambda^1 = \lambda^2 = 0$. However, as we announced earlier we view the $Y$ as the affine scheme $Y = \mathrm{Spec}(\cO_Y)$, where $I$ is the ideal generated by the above elements. Then, by definition, the global sections of its sheaf of rings are $\cO_Y = R/I$. Note that $R/I \ncong \CC$, which we would have used as a ring of functions when considering $Y$ as an affine variety. As we will see momentarily using $R/I$, or equivalently viewing $Y$ as the scheme $\mathrm{Spec}(R/I)$, allows us to construct different multiplets from $R/I$-modules even though $Y$ is just a point.
\subsubsection*{The gauge multiplet}
First of all we can consider $R/I$ itself as an equivariant module. This gives rise to the gauge multiplet in three dimensions. The minimal free resolution has the following Betti numbers.
\begin{center}
	\begin{tabular}{c|ccccc} 
		& $0$ & $1$ & $2$ \\
		\hline
		$0$ & $1$ & $-$ & $-$ \\ 
		$1$ & $-$ & $3$ & $2$ \\
	\end{tabular}
	\captionof{table}{Betti numbers of the minimal free resolution of $R/I$.}
\end{center}
In terms of representations, the free resolution takes the form
\begin{equation}
R \otimes \left( \CC \xleftarrow{(d_L)_1} V \xleftarrow{(d_L)_2} S \right)
\end{equation}
with the differentials being described by
\begin{equation}
\begin{matrix}
(d_L)_1 &: & V & \longrightarrow & \CC &,& A & \mapsto & (\lambda \sigma^\mu \lambda) A_\mu \\
(d_L)_2 &: & S & \longrightarrow & V &,& \psi & \mapsto & (\lambda \sigma^\mu \psi) e_\mu  \: .
\end{matrix}
\end{equation}
Thus, we find that the multiplet contains a one-form field together with its ghost as well as a fermion. The only differential acting on the component fields is the de Rham differential
\begin{equation}
	c \mapsto dc
\end{equation}
which encodes the gauge invariance of the one-form. The non-derivative supersymmetry transformations can be read off from the resolution differential and take the usual form.
\begin{equation}
\begin{split}
	\delta c &= (\epsilon \sigma^\mu \epsilon) A_\mu \\
	\delta A_\mu &= \epsilon \sigma_\mu \psi \\
	\delta \psi &= 0 \: . 
\end{split}
\end{equation}

\subsubsection*{The free superfield}
In addition, we can also consider $\CC = R/(\lambda^1, \lambda^2)$ as an $R/I$-module. This yields the free superfield whose Betti numbers we display in the following table.
\begin{center}
	\begin{tabular}{c|ccccc} 
		& $0$ & $1$ & $2$ \\
		\hline
		$0$ & $1$ & $2$ & $1$ \\ 
	\end{tabular}
	\captionof{table}{Betti numbers for the free superfield.}
\end{center}
Indeed, the Koszul homology of this module is just an exterior algebra $\wedge^\bullet S$ and we just recover the usual superspace description of the free superfield.

\section{From multiplets to theories}
\label{sec: data}
In~\S\ref{sec: bv and brst}, we introduced the notions of BV and BRST data for multiplets. Under certain conditions on  the  module $\Gamma$, the Koszul homology is naturally equipped with  a perfect pairing that equips the corresponding multiplet with a BV datum.
This provides an interesting link between the physics of supersymmetric multiplets and the algebraic geometry of $\O_Y$-modules. In fact, the pure spinor formalism provides many such links between algebrogeometric properties of the module $\Gamma$ and physical properties of the multiplet.

\subsection{Commutative algebra and dualizing complexes}
To approach this topic let us start with a short survey of the relevant notions from commutative algebra. To keep things simple we will work in a basic setting where $R = \CC[\lambda_1,\dots,\lambda_n]$ is the polynomial ring in $n$ variables and the modules will be $R/I$-modules for some ideal $I$. The main source for our discussion is~\cite{HGorenstein}.
\begin{dfn}
	A quotient ring $R/I$ is called a complete intersection, if $I$ can be generated by $r = \mathrm{codim}(R/I) = n - \mathrm{dim}(R/I)$ elements, i.e. $I=(f_1,\dots,f_r)$.
\end{dfn}
Intuitively this definition means that there are no non-trivial relations among the $f_i$. Equivalently we can say that $f_1,\dots,f_r$ forms a regular sequence on $R$. To be clear we recall the definition.
\begin{dfn}
	Let $S$ be a commutative ring and $M$ a $S$-module. A sequence $(x_1,\dots,x_k) \subset S$ is called regular on $M$ if $x_i$ is not a zero divisor in $M/(x_1,\dots,x_{i-1})$ for all $i=1,\dots,k$.
\end{dfn}
One can define a notion of ``size'' for modules by asking for the maximal length of a regular sequence in $M$. The resulting number is called the depth of $M$.
\begin{dfn}
	The depth of a $S$-module $M$ is the maximal length of a regular sequence in $M$ and will be denoted by $\mathrm{depth}(M)$.
\end{dfn}
On general grounds one can show that for any module $\mathrm{depth}(M) \leq \mathrm{dim}(M)$. There is an important class of modules for which equality holds. These are called Cohen--Macaulay modules.
\begin{dfn}
	A module $M$ is called Cohen--Macaulay if $\mathrm{depth}(M)=\mathrm{dim}(M)$.\footnote{Here the correct notion of dimension is the Krull dimension.} 
\end{dfn}
Let us now consider the case where $M = R/I$ is a quotient of a polynomial ring. In this case we can apply the Auslander--Buchsbaum formula
\begin{equation}
\mathrm{depth}(R/I) + l_R(R/I) = n \: ,
\end{equation}
where $l_R(R/I)$ is the length\footnote{The length of a free resolution $L^\bu_R = (L_0 \leftarrow L_1 \leftarrow \dots \leftarrow L_k \leftarrow 0)$ is $k$.} of the minimal resolution $L^\bu$ of $R/I$ by free $R$-modules. So we find that $R/I$ is Cohen--Macaulay if and only if
\begin{equation}
l_R(R/I) = n - \mathrm{dim}(R/I) = \mathrm{codim}_R(R/I) \: .
\end{equation}
This means that we can identify Cohen--Macaulay rings conveniently by their minimal free resolutions: $R/I$ is Cohen--Macaulay if and only if the length equals the codimension. \\ \\
For a quotient ring $R/I$ we can define a dualizing complex by
\begin{equation}
  \omega^\bullet_{R/I} = \RHom_R^\bullet(R/I,R) = \Hom_R(L^\bu,R). 
\end{equation}
We note that the cohomology $H^\bu(\RHom^\bu_R(R/I,R)) \cong \Ext^\bu_R(R/I,R).$
If $R/I$ is Cohen--Macaulay, this cohomology is concentrated in a single degree, namely $\codim(R/I)$. Thus the dualizing complex is in fact quasi-isomorphic to a dualizing module (often also called the canonical module). If the canonical module is trivial (free of rank one), the scheme $\Spec(R/I)$ can be thought of as analogous to a Calabi--Yau space. This property is called Gorenstein.
\begin{dfn}
	A quotient ring $R/I$ is called Gorenstein if $R/I$ is Cohen--Macaulay and the dualizing module $\mathrm{Ext}^{n-d}_R(R/I,R)=R/I$, where $d = \mathrm{dim}(R/I)$.\footnote{This is not the most general definition, but it suits our setting. In general a ring $S$ is called Gorenstein, if $S$ has finite injective dimension as an $S$-module. There is also a notion of Gorenstein modules in the literature, but we do not need this level of generality for our discussion.} 
\end{dfn}
Clearly, the Gorenstein property is stronger than the Cohen--Macaulay property. However, to be a complete intersection is an even stronger condition. We thus have the following chain of implications.
\begin{equation}
\text{Complete intersection} \implies \text{Gorenstein} \implies \text{Cohen--Macaulay}
\end{equation}
The key property of Gorenstein rings which is relevant for us is that their minimal free resolutions are self-dual: If $R/I$ is Gorenstein and $(L^\bullet,d_L)$ is a minimal free resolution, then the dual complex $((L^\bullet)^\vee,(d_L)^\vee)$ is, by definition, a resolution of the dualizing module, which, by assumption, is again $R/I$. Thus $(L^\bullet,d_L)$ and $((L^\bullet)^\vee,(d_L)^\vee)$ are both minimal free resolutions for $R/I$ and hence, due to the uniqueness of the minimal free resolution, they must be isomorphic. 

In fact one can recognize Gorenstein rings conveniently by examining their minimal free resolution:
\begin{prop}
	$R/I$ is a Gorenstein ring $\iff$ The length of the minimal free resolution $L^\bullet$ of $R/I$ is $l_R(R/I) = \mathrm{codim_R(R/I)} =: k$ and $L^k = R$.
\end{prop}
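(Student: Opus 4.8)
The plan is to reduce the entire statement to the behaviour of the minimal free resolution $L^\bullet$ under the duality functor $\Hom_R(-,R)$, leaning on the two facts already established: that $R/I$ is Cohen--Macaulay exactly when $l_R(R/I) = \codim_R(R/I)$, and that for such a ring the dualizing complex $\omega^\bullet_{R/I} = \Hom_R(L^\bullet, R)$ has cohomology concentrated in the single degree $k := \codim_R(R/I)$, equal to the canonical module $\omega_{R/I} = \Ext^k_R(R/I,R)$. Throughout I write $\mathfrak m = (\lambda_1,\dots,\lambda_n)$ for the (irrelevant) maximal ideal.

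The observation I would establish first is that, once $R/I$ is Cohen--Macaulay, the dualized complex $\Hom_R(L^\bullet,R)$, read from its top end, is itself a \emph{minimal} free resolution of $\omega_{R/I}$. Freeness is automatic, the location and value of its cohomology is precisely the Cohen--Macaulay input, and minimality follows because the matrices representing $d_L$ have entries in $\mathfrak m$ (minimality of $L^\bullet$) and dualizing merely transposes these matrices. Tracking the homological degrees, the free module surjecting onto $\omega_{R/I}$ is $(L^k)^\vee$, while the final term is $(L^0)^\vee = R$. Hence the minimal number of generators of $\omega_{R/I}$ equals $\operatorname{rank}(L^k)$; in particular $L^k = R$ is free of rank one if and only if $\omega_{R/I}$ is cyclic.

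With this in hand the forward implication is essentially the self-duality already noted before the statement: if $R/I$ is Gorenstein it is Cohen--Macaulay, so $l_R(R/I) = \codim_R(R/I) = k$, and $\omega_{R/I} \cong R/I$ is cyclic, whence $\operatorname{rank}(L^k) = 1$ and $L^k = R$. For the converse, the hypothesis $l_R(R/I) = \codim_R(R/I)$ forces $R/I$ to be Cohen--Macaulay, and $L^k = R$ then makes $\omega_{R/I}$ cyclic, say $\omega_{R/I} \cong R/J$ with $I \subseteq J$ (the inclusion because $\omega_{R/I}$ is an $R/I$-module, so $I$ annihilates it).

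The main obstacle is the last step of the converse: upgrading ``cyclic'' to the honest isomorphism $\omega_{R/I} \cong R/I$ demanded by the definition of Gorenstein. The key input is that a Cohen--Macaulay quotient is equidimensional, which makes its canonical module \emph{faithful}, i.e. $\Ann_{R/I}(\omega_{R/I}) = 0$. Since $\omega_{R/I} \cong (R/I)/(J/I)$ has annihilator $J/I$, faithfulness gives $J/I = 0$, so $J = I$ and $\omega_{R/I} \cong R/I$, proving $R/I$ Gorenstein. I would either deduce equidimensionality from the unmixedness of Cohen--Macaulay rings or simply cite it from~\cite{HGorenstein}; this faithfulness statement is the only place where more than formal bookkeeping with the resolution is required.
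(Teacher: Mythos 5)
Your argument is correct, and it actually supplies more than the paper does: the paper states this proposition without proof, treating it as standard commutative algebra, and the only argument in its vicinity is the preceding remark that a Gorenstein ring has a self-dual minimal free resolution --- which, combined with the Auslander--Buchsbaum criterion already established, yields only the forward implication $L^k \cong (L^0)^\vee = R$. Your key intermediate observation, that for Cohen--Macaulay $R/I$ the dual complex $\Hom_R(L^\bullet,R)$ read from its top end is a \emph{minimal} free resolution of $\omega_{R/I}=\Ext^k_R(R/I,R)$ (minimality surviving because dualizing transposes matrices with entries in $\mathfrak{m}$), so that $\operatorname{rank} L^k$ equals the minimal number of generators of the canonical module, is exactly the right mechanism, and your converse (cyclic canonical module implies Gorenstein) is the genuinely nontrivial direction that the paper's self-duality remark does not touch. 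The one step to tighten is the faithfulness of $\omega_{R/I}$: unmixedness/equidimensionality gives full support, but full support does not imply vanishing annihilator for a general module, so the clean route is the standard isomorphism $R/I \xrightarrow{\ \sim\ } \Hom_{R/I}(\omega_{R/I},\omega_{R/I})$, from which any $x\in\Ann_{R/I}(\omega_{R/I})$ annihilates the identity endomorphism and hence vanishes; equivalently, one may simply cite faithfulness of the canonical module of a Cohen--Macaulay quotient. With that substitution your proof is complete and is the standard textbook argument for this characterization.
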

Note that this extends the above statement on the Cohen--Macaulay property. The self-duality of the minimal free resolution induces isomorphisms $L^i \cong (L^{k-i})^\vee$ and thus a non-degenerate pairing
\begin{equation}
L^i \times L^{k-i} \longrightarrow R \: .
\end{equation}
Tensoring both sides with $\CC$ we obtain a pairing
\begin{equation}
(L^i \otimes_R \CC) \times (L^{k-i} \otimes_R \CC) \longrightarrow \CC \: .
\end{equation}
As we explained in~\S\ref{ssec: resolve}, $L^\bullet \otimes_R \CC$ can be identified with Koszul homology and thus with the component fields of the multiplet. As such, if we feed a Gorenstein ring into the pure spinor superfield formalism, we can equip the component fields of the resulting multiplet with a local pairing (a density valued pairing on sections of a vector bundle on spacetime). 
The parity and homological degree will depend on the properties of the free resolution. These pairings are often of physical interest.

\subsection{Supplemental structures on multiplets}
In some cases these pairings can be used to equip multiplets obtained in the pure spinor superfield formalism with a BV datum. Here the prime example is ten-dimensional super Yang--Mills theory which we will discuss below. However, this is not the only relevant case. There are other examples of multiplets obtained from Gorenstein rings where the pairing does not give rise to a BV structure; nevertheless, the natural pairings may still be interesting.

As an easy example, let us once again come back to the chiral multiplet for $\cN =1$ supersymmetry in four dimensions. Recall that we obtained the chiral multiplet from the module $\Gamma = \CC[\bar{\lambda}_{\dot{\alpha}}] = \CC[\lambda_\alpha,\bar{\lambda}_{\dot{\alpha}}]/(\lambda_\alpha)$. This is obviously a complete intersection ring, and thus in particular Gorenstein. The minimal free resolution is of the form
\begin{equation}
  R \longleftarrow R^2 \longleftarrow R,
\end{equation}
and it is clear what the pairing looks like: $L^0 = R$ pairs with $L^2 = R$, while $L^1 = R^2$ pairs with itself. Since this is a perfect pairing on Koszul homology, we obtain
a local pairing on the component fields. Recall that the scalar field was represented simply by $\phi$, the fermion by $\psi = \psi_\alpha \theta^\alpha$ and the auxiliary by $F = F \theta^1 \theta^2$. Thus we get a pairing which is simply induced by the algebra structure on $\wedge^\bullet S_+$ and the projection on the $\theta_1 \theta_2$ component
\begin{equation}
\langle a, b\rangle = (ab)|_{\theta_1 \theta_2} \: .
\end{equation}
So this pairing gives rise to F-term Lagrangians for the chiral multiplet through the following local pairing on component fields
\begin{equation}
\langle \phi , F\rangle_\mathrm{loc}= \phi F , \quad \langle \psi , \psi \rangle_\mathrm{loc} = \psi^\alpha \psi_\alpha \: .
\end{equation}
Similar pairings of course exist for other chiral multiplets with more supersymmetry. Furthermore, we could consider the free superfield; in general, this is constructed by taking $\Gamma$ to be the structure sheaf of the cone point, which arises from the canonical augmentation of the graded ring $R/I$. In four dimensions, this module is just $\CC = \CC[\lambda_\alpha,\bar{\lambda}_{\dot{\alpha}}]/(\lambda_\alpha,\bar{\lambda}_{\dot{\alpha}})$. Then one gets a pairing which projects on the $\theta^2 \bar{\theta}^2$ component. In physics, this pairing gives rise to D-terms.

\subsection{Constructing cotangent theories: six-dimensional $\cN = (1,0)$}
\label{sec: 6d vector}
If a ring is not Gorenstein, there is no perfect pairing on Koszul homology, and the corresponding multiplet cannot obviously be equipped with a BV structure. (We note that this does \emph{not} mean that such multiplets are never on-shell; in six-dimensional $\N=(2,0)$ supersymmetry~\cite{CederwallM5,twist20} and ten-dimensional type IIB supersymmetry~\cite{NV}, BV multiplets with degenerate pairings naturally arise. Details on the pairing are given in~\cite{twist20} at the level of the component fields; we do not study the cochain-level origin of such degenerate pairings here, but hope to return to this question in future work.)
  
  For a Cohen--Macaulay module $\Gamma$ giving rise to a multiplet $(E,D,\rho)$, however, another interesting observation applies: We can consider the dualizing module $\omega_\Gamma$ in the pure spinor superfield formalism. If $(L,d_L)$ is the minimal free resolution of $\Gamma$, then $(L^\vee, (d_L)^\vee)$ is the corresponding minimal free resolution of $\omega_\Gamma$. With the obvious pairing between $L$ and $L^\vee$ we can equip the multiplet corresponding to the direct sum $L \oplus L^\vee [k]$ with a BV datum (for an appropriate shift $k$). In the terminology of Definition~\ref{admitsBRST} the resulting BV multiplet is off-shell and $\omega_\Gamma$ gives rise to the antifield multiplet of $(E,D,\rho)$.

On the other hand, if the input module is not Cohen--Macaulay, the cohomology of the dualizing complex will not be concentrated in a single degree, such that we cannot take a single dualizing module to produce an antifield multiplet. Rather, the antifield multiplet will be represented by a dg module. We will see this below in the case of four-dimensional minimal supersymmetry.

Let us now consider the example of six-dimensional $\N=(1,0)$ supersymmetry. There is an accidental isomorphism $\Spin(6) \cong SU(4)$, under which the spinor representation $S_+$ is identified with the fundamental representation of $SU(4)$ and $S_- = (S_+)^\vee$ with the antifundamental representation. The supertranslation algebra
takes the form
\begin{equation}
	\ft = V \oplus \Pi( S_+ \otimes U) \: ,
\end{equation}
where $U= (\CC^2,\omega)$ is a symplectic vector space. The $R$-symmetry group is thus $\Sp(1) \cong SU(2)$; corresponding indices will be denoted by $i,j$. There is an isomorphism
\begin{equation}
	\wedge^2 S_+ \cong V \: ,
\end{equation}
which is used to express the bracket as
\begin{equation}
	\{-,-\} = \wedge \otimes \omega \:.
\end{equation}
Since $\wedge$ is an isomorphism, an element is square-zero precisely when it is of rank one as an element of~$S_+ \otimes U$. 

In a basis, the supertranslation algebra takes the form
\begin{equation}
	\{Q_\alpha^i , Q_\beta^j\} = \Gamma^\mu_{\alpha \beta} \varepsilon^{ij} P_\mu \: .
\end{equation}
Using coordinates $\lambda^\alpha_i$, the defining equations of the nilpotence variety $Y(6;1,0)$ are given by the $2\times 2$ minors of the matrix
\begin{equation}
\begin{pmatrix}
\lambda^1_1 & \lambda^2_1 & \lambda^3_1 & \lambda^4_1 \\
\lambda^1_2 & \lambda^2_2 & \lambda^3_2 & \lambda^4_2
\end{pmatrix},
\end{equation}
which cut out the space  of  rank-one matrices.
As such $Y$ is a determinantal variety. Taking its structure sheaf $\mathcal{O}_Y$ as the equivariant module in the pure spinor formalism, we recover the $d=6$, $\mathcal{N}=(1,0)$ gauge multiplet. The Betti numbers of the free resolution are displayed in the following table.
\begin{center}
	\begin{tabular}{c|ccccc} 
		& $0$ & $1$ & $2$ & $3$ \\
		\hline
		$0$ & $1$ & $-$ & $-$ & $-$ \\ 
		$1$ & $-$ & $6$ & $8$ & $3$ \\
	\end{tabular}
	\captionof{table}{Betti numbers of the free resolution of $\mathcal{O}_Y$.}
\end{center}
Working equivariantly, one finds that these correspond to a one-form with zero-form gauge invariance, fermions in $S_+ \oplus S_-$, and a triplet of scalars in the adjoint of the $R$-symmetry group $SU(2)$. We immediately see that the Koszul homology corresponds to the field content of the BRST complex of the gauge multiplet. Since the length of the resolution equals the codimension, $R/I$ is Cohen--Macaulay. This can also be seen as a consequence of the following result on determinantal varieties.
\begin{lem}
	Let $R=\mathbb{C}[(x_{ij})]$ for $i=1\dots n$ and $j=1 \dots m$ and $I$ the ideal generated by the $r\times r$ minors of the matrix with entries $x_{ij}$. Then $R/I$ is a Cohen--Macaulay ring. Further $R/I$ is a Gorenstein ring if and only if $m=n$ or $r=1$.
\end{lem}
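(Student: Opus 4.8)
The plan is to deduce both assertions from the classical structure theory of generic determinantal rings, specialized to our ground field $\mathbb{C}$. Write $X=(x_{ij})$ for the generic $n\times m$ matrix and set $c=(n-r+1)(m-r+1)$; the genuinely interesting range is $2\le r\le\min(m,n)$, since for $r=1$ one has $I_1=(x_{ij})$ and $R/I_1=\mathbb{C}$, which is trivially Gorenstein, while for $r>\min(m,n)$ the ideal is zero and $R/I_r=R$ is a polynomial ring. I would begin by recording the standard structural facts that $I_r$ is prime, that $R/I_r$ is a \emph{normal} domain, and that $\operatorname{codim}(R/I_r)=c$, so that $\dim(R/I_r)=nm-c$.

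For the Cohen--Macaulay claim, the cleanest route is via standard monomial theory: $R/I_r$ carries the structure of an algebra with straightening law (a Hodge algebra) on the poset of minors of $X$ of size at most $r-1$, ordered in the usual way. One then shows that the order complex of this poset is shellable and invokes the general principle that an algebra with straightening law whose underlying poset is Cohen--Macaulay is itself Cohen--Macaulay; this reproduces the theorem of Hochster--Eagon. A more self-contained alternative proceeds by their method of principal radical systems, inducting on $r$ and on the matrix size by inverting a single entry $x_{ij}$, under which the localization of $R/I_r$ becomes a determinantal ring for an $(n-1)\times(m-1)$ matrix tensored with a Laurent polynomial ring. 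Either way one concludes $\operatorname{depth}(R/I_r)=\dim(R/I_r)$; in particular the length of the minimal free resolution equals $c=\operatorname{codim}(R/I_r)$, matching the criterion recorded before this lemma.

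For the Gorenstein dichotomy, I would use that a normal Cohen--Macaulay domain admits a canonical module $\omega$ realized as a rank-one maximal Cohen--Macaulay module, whose isomorphism class is pinned down by its image in the divisor class group $\operatorname{Cl}(R/I_r)$. The first step is the computation $\operatorname{Cl}(R/I_r)\cong\mathbb{Z}$ in the range $2\le r\le\min(m,n)$, with generator the class $[\mathfrak{p}]$ of the height-one prime obtained by adjoining one further row (or column) to the vanishing minors. The decisive step is then the identification of the canonical class, namely $[\omega]=(m-n)\,[\mathfrak{p}]$ up to sign, which is the content of the canonical-module computation for determinantal rings (Bruns--Vetter). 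Since $R/I_r$ is Gorenstein precisely when $\omega$ is free, i.e. when $[\omega]=0$ in $\operatorname{Cl}(R/I_r)\cong\mathbb{Z}$, and since $[\mathfrak{p}]$ is a generator, this vanishes if and only if $m=n$. Combining with the degenerate case $r=1$ yields the stated criterion.

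The main obstacle is the computation of the canonical class $(m-n)$: establishing $\operatorname{Cl}(R/I_r)\cong\mathbb{Z}$ and locating $[\omega]$ inside it requires genuine divisor-theoretic input on the singular locus of the determinantal variety, whereas Cohen--Macaulayness, though also nontrivial, follows fairly mechanically once the straightening framework is in place. In practice both inputs are classical, and I would ultimately cite Hochster--Eagon and Bruns--Vetter rather than reprove them here.
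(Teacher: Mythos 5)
The paper states this lemma without proof, as a classical fact about generic determinantal rings, so there is no in-text argument to compare against; your outline correctly reconstructs the standard one (Hochster--Eagon for Cohen--Macaulayness, and for the Gorenstein dichotomy the computation $\operatorname{Cl}(R/I_r)\cong\mathbb{Z}$ together with the identification of the canonical class as $(m-n)[\mathfrak{p}]$, due to Svanes and recorded in Bruns--Vetter), which is exactly what the paper implicitly relies on. Your remark that the stated equivalence tacitly assumes $r\le\min(m,n)$ (otherwise $I=0$ and $R/I=R$ is Gorenstein regardless) is a fair observation about the lemma's phrasing rather than a gap in the proof.
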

As we are dealing with $4\times 2$ matrices, $R/I$ is not Gorenstein; hence, we cannot expect to equip the multiplet with a  BV datum, but only with  a  BRST datum. However, $R/I$ is Cohen--Macaulay, which means that the dualizing complex is represented by a single sheaf. Thus, we can produce the corresponding antifield multiplet from that sheaf by applying the pure  spinor formalism. 
The dualizing module  is
\begin{equation}
\mathrm{Ext}^{\mathrm{codim}(Y)}_R (R/I, R) = \mathrm{Ext}^{3}_R (R/I, R) \: .
\end{equation}
Due to the Cohen--Macaulay property, this is the only non-vanishing Ext module. Its free resolution has the Betti numbers
\begin{center}
	\begin{tabular}{c|ccccc} 
		& $0$ & $1$ & $2$ & $3$ \\
		\hline
		$-4$ & $3$ & $8$ & $6$ & $-$ \\ 
		$-3$ & $-$ & $-$ & $-$ & $1$ \\
	\end{tabular}
	\captionof{table}{Betti numbers of the free resolution of $\mathrm{Ext}^{3}_R (R/I, R)$.}
\end{center}
Forming the direct sum of the structure sheaf and the dualizing sheaf and shifting appropriately, we obtain a multiplet with the following Betti numbers.
\begin{center}
	\begin{tabular}{c|ccccccc} 
		& $0$ & $1$ & $2$ & $3$ & $4$ & $5$ \\
		\hline
		$0$ & $1$ & $-$ & $-$ & $-$ & $-$ & $-$ \\ 
		$1$ & $-$ & $6$ & $8$ & $3$ & $-$ & $-$ \\
		$2$ & $-$ & $-$ & $3$ & $8$ & $6$ & $-$ \\ 
		$3$ & $-$ & $-$ & $-$ & $-$ & $-$ & $1$ \\
	\end{tabular}
	\captionof{table}{Betti numbers of the BV multiplet.}
\end{center}
This is the expected field content of the BV description for the six-dimensional gauge multiplet. The component multiplet can be equipped with a BV datum by writing the usual action as known from the component formalism. The resulting BV multiplet is off-shell; in fact, it is constructed as the cotangent theory of the corresponding BRST theory. The supersymmetry algebra closes without use of the equations of motion and the antifields can be separated from the fields. Doing this, one recovers the BRST multiplet in components.

One could also consider equipping the pure spinor superfield multiplet with a BRST datum. This was done in~\cite{Ced-6d}, where Cederwall considered a differential operator mapping pure spinor superfields for the structure sheaf to pure spinor superfields for the canonical module. This operator allows one to write a quadratic action functional for the structure sheaf multiplet, which defines a BRST datum for that multiplet. 

\subsection{Failure to be Cohen--Macaulay: the example of four-dimensional $\N=1$}
\label{ssec: failure}
As we have seen above, the pure spinor superfield formalism applied to the structure sheaf of the $d=4$, $\cN = 1$ nilpotence variety yields the BRST description of the gauge multiplet. The absence of antifields and BV differential is not particularly surprising: The failure of the supersymmetry action to be strict solely comes from gauge transformations; the equations of motions do not need to be imposed. Nevertheless one can ask if and how the corresponding antifield multiplet can be realized independently in the pure spinor superfield formalism. For this purpose, let us compute the dualizing complex of $R/I$. A model for the dualizing complex is given by
\begin{equation}
  \omega^\bullet_{R/I} = \mathrm{RHom}^\bullet_R(R/I,R) = \Hom_R(L^\bu, R).
\end{equation}
To compute this complex explicitly, we can use the minimal free resolution $L^\bullet\rightarrow R/I$ from above. By definition, the differential of the dualizing complex is
the dual map $d^\vee_L$ of the resolution differential $d_L$. In terms of matrices this means that $d^\vee_L$ is represented by the transposed matrices of~\eqref{eq: res diff}. From these matrices, the cohomology can be computed explicitly. We find that
\begin{equation}
  H^i(\omega_{R/I}^\bu) = 
  \begin{cases}
    \mathrm{coker}\left( (\lambda_1,-\lambda_2,-\bar{\lambda}_1, \bar{\lambda}_2) \right) \cong \CC , & i = 3; \\
    \CC[\lambda_1,\lambda_2] \oplus \CC[\bar{\lambda}_1,\bar{\lambda}_2] , & i = 2; \\
    0, & \text{otherwise}.
  \end{cases}
\end{equation}
Note that the codimension of $Y$ is two. If the dualizing complex were to resolve a single module, then $H^\bu(\omega_{R/I})$ should be concentrated in degree two. Instead, we find a copy of two \emph{disjoint} $\C^2$'s; $Y$ itself, of course, consists of two $\C^2$'s intersecting at the origin. This discrepancy is accounted for homologically by the presence of a single copy of the skyscraper sheaf (functions on $0 \in \C^2$) in degree three. 
At the end of the day, this means that we cannot find a single (non-dg) dualizing module for $R/I$ to feed into the pure spinor superfield formalism to obtain the antifield multiplet. This phenomenon will occur whenever $R/I$ is not a Cohen--Macaulay ring.

\subsection{A partial dictionary}
\label{ssec: dictionary} 
In this section we summarize some features of  the  correspondence between algebrogeometric properties of $\O_Y$-modules and physical features of the corresponding multiplets. This dictionary is of course by no means complete, but it should serve to provide a quick overview.
\begin{itemize}[listparindent=\parindent,leftmargin=0pt,itemindent=\parindent,label={---}]
  \item \emph{$\Gamma = \cO(S')$ for some hyperplane $S' \subseteq Y$.}

    $S'$ is a complete intersection of linear equations. The resulting multiplet is an exterior algebra $\wedge^\bullet S'$, concentrated in homological degree zero. No differentials are transferred to the component field level. The representation of the supersymmetry algebra is strict. Examples include chiral superfields ($S' = S_{\pm}$), which always exist in dimension $0\pmod 4$, and free superfields ($S' = \{0\}$), in any dimension and with any amount of supersymmetry. We emphasize that the free superfield always corresponds to the canonical augmentation of the graded ring $R/I$. 
      \item \emph{$\Gamma = \cO_Y$ is a complete intersection of quadratic equations.}

        The Koszul homology is an exterior algebra generated by the elements $\lambda \gamma^\mu \theta$ in homological degree one. The resulting multiplet can be identified with the de Rham complex $\Omega^\bullet(\RR^d)$ on spacetime. The transferred differential acts as the de Rham differential on the component fields; as such, translations act homotopically trivially. Tensoring with a Lie $(d-3)$-algebra, one obtains the BV complex of higher Chern--Simons theory. Odd elements in the supersymmetry algebra act by zero. Examples include the structure sheaves of the three-dimensional $\cN=8$ and four-dimensional $\cN = 4$ supersymmetry algebras; see~\cite{CederwallBLG} and~\cite{CederwallMax4D}, respectively. This sheaf is used, together with another equivariant sheaf, in the construction of the pure spinor resolution of the Bagger--Lambert--Gustavsson model in~\cite{CederwallBLG}.
      \item \emph{$\Gamma = \cO_Y$ is a Gorenstein ring, but not a complete intersection.}

        The resulting component multiplet is equipped with a local pairing, inherited from the perfect pairing on Koszul homology.
        For appropriate values of the spacetime dimension and the codimension of~$Y$, this local pairing defines an odd symplectic structure,  which may be used to construct a  BV datum on the multiplet. The underlying cochain complex always starts with
	\begin{equation}
	\Omega^0 \xlongrightarrow{d} \Omega^1 \longrightarrow \dots;
	\end{equation}
        as such, it always contains at least a one gauge field. By duality, the multiplet ends with the corresponding antifields,
	\begin{equation}
          \dots \longrightarrow \Omega^{d-1} \xlongrightarrow{d} \Omega^d .
	\end{equation}
        Examples include ten-dimensional super Yang--Mills theory and eleven-dimensional supergravity~\cite{Ced-towards,Ced-11d}; see also~\cite{spinortwist,MaxTwist} for treatments using a language close to this work.
        \item \emph{$\Gamma$ is Cohen--Macaulay, but not Gorenstein.}

          The resulting multiplet will not carry a pairing and thus cannot be equipped with a nondegenerate BV datum. We can interpret the multiplet as a BRST multiplet and obtain the corresponding antifield multiplet from the dualizing module. Here, the structure sheaf of six-dimensional $\cN = (1,0)$ supersymmetry is an example. To understand theories of physical interest, though, it may be necessary to consider degenerate pairings (six-dimensional $\N=(2,0)$ supersymmetry and type IIB  supergravity are examples).
        \item \emph{$\Gamma$ is not Cohen--Macaulay.}

          The resulting multiplet usually looks like a BRST multiplet. However, there is really only a dualizing complex instead of a dualizing module. As such, we cannot obtain the antifield multiplet from a single (non-dg) $\O_Y$-module via the pure spinor superfield formalism. An example is the gauge multiplet in four-dimensional $\N=1$ supersymmetry, as discussed above. It would be interesting to consider extending the formalism to dg sheaves on~$Y$.
        \item \emph{$\Gamma$ is a Golod ring.}
        
          A ring is Golod if and only if all Massey products on Koszul homology vanish~\cite{Frankhuizen}. Recall that, if $\Gamma$ is assumed to be a ring, the tensor product $A^\bu(\Gamma) \otimes \fh$ carries an $L_\infty$ structure. Transferring the $L_\infty$ structure to the component fields and then compactifying to a point yields an $L_\infty$ structure which is given by the $A_\infty$ structure on Koszul homology tensored with the Lie algebra $\fh$. The Golod property of $\Gamma$ implies that this $L_\infty$ structure is strict. For example, the presence of the three-ary product in ten-dimensional super Yang--Mills theory, which after compactification to a point gives rise to a corresponding product in the IKKT matrix model~\cite{MovshevMaxSYM}, witnesses the fact that the ring of functions of the ten-dimensional $\cN=1$ nilpotence variety is not Golod.
        
\end{itemize}

\section{Ten-dimensional super Yang--Mills theory}
\label{sec: 10d}
In this section, let us give a detailed analysis of ten-dimensional super Yang--Mills theory in the pure spinor superfield formalism. This is the initial example which sparked interest in the formalism~\cite{BerkovitsSuperparticle, CederwallM5}. As we will see, the multiplet obtained from the structure sheaf $\cO_Y$ can be naturally equipped with the full structure of a perturbative interacting BV theory within the pure spinor superfield formalism. 

\subsection{Field content and representatives}
We will denote the two 16-dimensional spin representations of $\mathrm{Spin}(10)$ by $S_+$ and $S_-$. The vector representation is, as always, denoted by $V$. The defining ideal of the nilpotence variety simply reads
\begin{equation}
I = (\lambda \gamma^\mu \lambda) \: .
\end{equation}
One finds for the minimal free resolution of $R/I$ the following Betti numbers.
\begin{center}
	\begin{tabular}{c|cccccc} 
		& $0$ & $1$ & $2$ & $3$ & $4$ & $5$ \\
		\hline
		$0$ & $1$ & $-$ & $-$ & $-$ & $-$  $-$ \\ 
		$1$ & $-$ & $10$ & $16$ & $-$ &$-$ & $-$ \\
		$2$ & $-$ & $-$ & $-$ & $16$ & $10$ & $-$ \\
		$3$ & $-$ & $-$ & $-$ & $-$ & $-$ & $1$ \\
	\end{tabular}
	\captionof{table}{Betti numbers for the ten-dimensional super Yang--Mills multiplet.}
\end{center}
More concretely, the minimal free resolution of $R/I$ in $R$-modules takes the form
\begin{equation}
L^\bullet = R \otimes \left( \CC \xlongleftarrow{(d_L)_1} V \xlongleftarrow{(d_L)_2} S_+ \xlongleftarrow{(d_L)_3} S_- \xlongleftarrow{(d_L)_4} V \xlongleftarrow{(d_L)_5} \CC \right) \: .
\end{equation}
The resolution differential can be described explicitly. Let us choose a basis $e_\mu$ of $V$ and $s_\alpha$ of $S_+$. The corresponding dual basis of $(S_+)^\vee = S_-$ is denoted by $s^\alpha$.
\begin{equation}
\begin{matrix}
(d_L)_1 &: & V & \longrightarrow & \CC &,& A & \mapsto & (\lambda \gamma^\mu \lambda) A_\mu \\
(d_L)_2 &: & S_+ & \longrightarrow & V &,& \chi & \mapsto & (\lambda \gamma^\mu \chi)  e_\mu \\
(d_L)_3 &: & S_- & \longrightarrow & S_+ &,& \chi^+ & \mapsto & (\lambda \gamma^\mu \lambda) (\chi^+ \gamma_\mu s) - 2(\chi^+ \lambda) (\lambda s) \\
(d_L)_4 &: & V & \longrightarrow & S_- &,& A^+ & \mapsto & (\lambda \gamma^\mu s) A^+_\mu \\
(d_L)_4 &: & \CC & \longrightarrow & V &,& c^+ & \mapsto & (\lambda \gamma^\mu \lambda) c^+ e_\mu \\
\end{matrix}
\end{equation}
We can perform our procedure to find the representatives. For example, starting with the gauge field,
\begin{equation}
A \xmapsto{(d_L)_1} (\lambda \gamma^\mu \lambda) A_\mu \xmapsto{\mathcal{D}_0^\dagger} (\lambda \gamma^\mu \theta) A_\mu \: ,
\end{equation}
so that the elements $(\lambda \gamma^\mu \theta) A_\mu$ represent the one-form in $\cD_0$-cohomology. For the gaugino we obtain
\begin{equation}
\chi \xmapsto{(d_L)_2} (\gamma^\mu \lambda)_\alpha \chi^\alpha e_\mu \xmapsto{\mathcal{D}_0^\dagger} (\gamma^\mu \theta)_\alpha \chi^\alpha e_\mu \xmapsto{(d_L)_1} (\lambda \gamma_\mu \lambda)(\gamma^\mu \theta)_\alpha \chi^\alpha \xmapsto{\mathcal{D}_0^\dagger} (\lambda \gamma_\mu \theta)(\gamma^\mu \theta)_\alpha \chi^\alpha \: .
\end{equation}
This means that the gaugino is represented by $(\lambda \gamma_\mu \theta)(\gamma^\mu \theta)_\alpha \chi^\alpha$ in $\cD_0$-cohomology. This procedure can also be applied to the antifields
\begin{equation}
\chi^+ \xmapsto{(d_L)_3 \circ \mathcal{D}_0^\dagger} (\lambda\gamma^\mu \theta)(\chi^+ \gamma_\mu s) \xmapsto{(d_L)_2 \circ \mathcal{D}_0^\dagger} (\lambda \gamma^\mu \theta) (\gamma^\nu \theta)_\alpha (\gamma_\mu \chi^+)^\alpha e_\nu \xmapsto{(d_L)_1 \circ \mathcal{D}_0^\dagger} (\lambda \gamma^\mu \theta) (\lambda \gamma^\nu \theta) (\gamma_\nu \theta)^\alpha (\gamma_\mu \chi^+)_\alpha
\end{equation}
We can simplify the last term to find
\begin{equation}
(\lambda \gamma^\mu \theta) (\lambda \gamma^\nu \theta) (\gamma_\nu \theta)^\alpha (\gamma_\mu \chi^+)_\alpha = (\lambda \gamma^\mu \theta) (\lambda \gamma^\nu \theta) (\gamma_{\mu \nu} \theta)^\alpha \chi^+_\alpha \: ,
\end{equation}
where $\gamma_{\mu \nu} = \gamma_{[\mu} \gamma_{\nu ]}$ denotes the antisymmetrized product of two gamma matrices. Similarly one can track down a representative for the antifield of the one-form field. The result is
\begin{equation}
(\lambda\gamma^\rho \theta) (\lambda \gamma^\nu \theta) (\theta \gamma_{\mu\nu\rho} \theta) A^{+\mu} \: .
\end{equation}
Finally, the antighost can be represented by
\begin{equation}
(\lambda\gamma^\mu \theta)(\lambda\gamma^\nu \theta) (\lambda \gamma^\rho \theta)(\theta \gamma_{\mu\nu\rho} \theta)c^+ \: .
\end{equation}
These representatives were already listed in~\cite{MovshevInvariants}. Let us summarize the results in the following table.
\begin{table}[h]
	\caption{Representatives for the $10D$ $\cN=1$ vector multiplet organized by $\theta$-degree.}
	\begin{center}
		\begin{tabular}{|c|c|}
			\hline
			Field & Representative in the $\cD_0$-cohomology \\
			\hline
			$c$  & $c$ \\
			\hline
			$A$ &$(\lambda \gamma^\mu \theta ) A_\mu$ \\
			\hline
			$\chi$ &$(\lambda \gamma_\mu \theta)(\chi \gamma^\mu \theta)$ \\
			\hline
			$\chi^+$ & $(\lambda \gamma^\mu \theta) (\lambda \gamma^\nu \theta) (\gamma_{\mu \nu} \theta \chi^+)$ \\
			\hline
			$A^+$ & $(\lambda\gamma^\rho \theta) (\lambda \gamma^\nu \theta) (\theta \gamma_{\mu\nu\rho} \theta) A^{+\mu}$ \\
			\hline
			$c^+$ & $(\lambda\gamma^\mu \theta)(\lambda\gamma^\nu \theta) (\lambda \gamma^\rho \theta)(\theta \gamma_{\mu\nu\rho} \theta)c^+$ \\
			\hline
		\end{tabular}
	\end{center}
	\label{tab:10d reps}
\end{table}

\subsection{The differential}
The first order part of the transferred differential is given by
\begin{equation}
\cD'_1 = p \circ (\lambda \gamma^\mu \theta) \partial_\mu \circ i\: .
\end{equation}
We immediately see that $\cD'_1$ acts on the ghost as the de Rham differential. 

Furthermore, the differential, $\cD'_1$ acts on the gaugino as the Dirac operator,
\begin{equation}
\chi \mapsto \slashed{\partial} \chi
\end{equation}
encoding the field equation for the gaugino.

Interestingly, this multiplet contains a second order contribution to the differential arising from homotopy transfer. As we will see momentarily, this encodes the equation of motion for the gauge field. Recall that the second order contribution to the transferred differential $\cD'$ is given by
\begin{equation}
\cD'_2 = p \circ (\cD_1 \circ h \circ \cD_1) \circ i \: .
\end{equation}
To apply $\cD'_2$ to the gauge field we need to know how the homotopy $h$ acts on expressions of the form
\begin{equation}
(\lambda \gamma^\mu \theta) (\lambda \gamma^\nu \theta) \: .
\end{equation}
Note that the naive guess $h \circ \pi = \pi \circ \cD_0^\dagger$ does not work in this case since
\begin{equation}
\cD_0^\dagger(\lambda \gamma^\mu \theta) (\lambda \gamma^\nu \theta) = 0
\end{equation}
by the symmetry of the bracket. However, the result is easily found by a representation theoretic argument. As $h$ acts as a scalar, we are looking for a representative inside
\begin{equation}
\wedge^2 V \subset \wedge^3 S_+ \otimes S_+ \: .
\end{equation}
It is easy to check that there is only one such summand in the decomposition of the right hand-side into irreducibles. This representation is spanned by the elements
\begin{equation}
(\lambda \gamma^\rho \theta) (\theta \gamma_{\mu\nu\rho} \theta) \: . 
\end{equation}
We set,
\begin{equation}
h\left( (\lambda \gamma^\mu \theta) (\lambda \gamma^\nu \theta) \right) = (\lambda \gamma^\rho \theta) (\theta \gamma_{\mu\nu\rho} \theta) \: .
\end{equation}
Equipped with this knowledge we find
\begin{equation}
\begin{split}
\cD'_2 \left( (\lambda \gamma^\mu \theta) A_\mu \right) =& p\left( \cD_1 \circ h \left( (\lambda \gamma^\mu \theta) (\lambda \gamma^\nu \theta) (dA)_{\mu \nu} \right) \right) \\
=& p\left( (\lambda \gamma^\sigma \theta) (\lambda \gamma_\rho \theta) (\theta \gamma^{\mu\nu\rho} \theta) \partial_\sigma (dA)_{\mu \nu} \right) \: .
\end{split}
\end{equation}
Projection to the cohomology gives
\begin{equation}
(\lambda \gamma_\nu \theta) (\lambda \gamma_\rho \theta) (\theta \gamma^{\mu\nu\rho} \theta) \partial^\sigma (dA)_{\sigma \mu}
\end{equation}
This shows that the transferred differential $\cD'_2$ acts via
\begin{equation}
A \mapsto \star d \star d A \: .
\end{equation}

The differentials appearing in the multiplet can be summarized by the following diagram.
\begin{equation}
\begin{tikzcd}[row sep=0.7cm, column sep=0.7cm]
\Omega^0(\RR^{10}) \arrow[dr, "d"] \\
& \Omega^1(\RR^{10}) \arrow[drrr, "\star d \star d" ' near start]  & \Gamma(\RR^{10}, S_+) \arrow[dr, crossing over, "\slashed{\partial}" ] &  \\
&  &  & \Gamma(\RR^{10}, S_-) & \Omega^{1}(\RR^{10}) \arrow[dr, "\star d \star"] \\
&  &  &  & & \Omega^0(\RR^{10}) \\
\end{tikzcd}
\end{equation}

\subsection{The supersymmetry action}
We can read off the non-derivative supersymmetry transformations directly from the resolution differential.
\begin{equation}
\begin{split}
\delta c =& (\epsilon \gamma^\mu \epsilon) A_\mu \\
\delta A_\mu =& \epsilon \gamma_\mu \chi \\
\delta \chi =& (\epsilon \gamma^\mu \epsilon) \chi^+ \gamma_\mu - 2 \epsilon (\chi^+ \epsilon) \\
\delta \chi^+ =& \epsilon \gamma^\mu A^+_\mu \\
\delta A_\mu^+ =& (\epsilon \gamma_\mu \epsilon) c^+ \\
\delta c^+ =&  0
\end{split}
\end{equation}
Note that there are two types of closure terms present. For the gauge field, there are again transformation witnessing that the supersymmetry algebra is represented only up to gauge transformations. We already encountered this type of transformation in our discussion of the four-dimensional gauge multiplet. In addition, there are now second order transformations for the gaugino, signaling that the supersymmetry algebra is represented only on-shell.

\subsection{The $L_\infty$ structure}
We can define a dg Lie algebra structure by tensoring $A^\bu(\cO_Y)$ with a Lie algebra $\fh$. Homotopy transfer gives rise to an $L_\infty$ structure on the component field multiplet. As we will see, this $L_\infty$ structure, together with the pairing, equips the ten-dimensional super Yang--Mills multiplet with the usual structure as an interacting BV theory.

The binary bracket $\mu'_2$ is given by
\begin{equation}
\begin{tikzpicture}
\begin{feynman}
\vertex at (-2,0) {$\mu'_2 \ = $};
\vertex(a) at (0,0);
\vertex(b) at (-1,1) {$i'$};
\vertex(c) at (-1,-1) {$i'$};
\vertex(d) at (1,0) {$p'$ \: .};
\diagram* {(a)--(b), (a)--(c), (a)--(d)};
\end{feynman}
\end{tikzpicture}
\end{equation}
Expressing this in terms of the unprimed homotopy data, there will be obviously a diagram of the form
\begin{equation}
\begin{tikzpicture}
\begin{feynman}
\vertex(a) at (0,0);
\vertex(b) at (-1,1) {$i$};
\vertex(c) at (-1,-1) {$i$};
\vertex(d) at (1,0) {$p$ \: .};
\diagram* {(a)--(b), (a)--(c), (a)--(d)};
\end{feynman}
\end{tikzpicture}
\end{equation}
As we already explored in the case of the four-dimensional $\cN=1$ vector multiplet, this diagram encodes the structure of gauge transformations on the component fields. In particular it yields brackets
\begin{equation}
\begin{matrix}
	\mu'_2 :& \Omega^0 \times \Omega^0 & \longrightarrow & \Omega^0 & \mu'_2(c,c) = [c,c] \: \phantom{.} \\
	\mu'_2 :& \Omega^0 \times \Omega^1 & \longrightarrow & \Omega^1  & \mu'_2(c,A) = [c,A] \: \phantom{.}\\
	\mu'_2 :& \Omega^0 \times \Gamma(X,S_+) & \longrightarrow & \Gamma(X,S_+) & \mu'_2(c,\psi) = [c,\psi] \: .
\end{matrix}
\end{equation}

Furthermore, considering degree bounds, we see that only two more diagrams can contribute, namely
\begin{equation}
\begin{tikzpicture}
\begin{feynman}
\vertex(a) at (1,0);
\vertex(b1) at (-1.5,1) {$i$};
\vertex(b)[dot] at (-0.5,1){};
\vertex(c) at (-0.5,-1) {$i$};
\vertex(d) at (2,0) {$p$};
\diagram* {(b1)--(b)--[edge label = $h$](a), (a)--(c), (a)--(d)};
\end{feynman}
\end{tikzpicture}
\begin{tikzpicture}
\begin{feynman}
\vertex at (-2,0) {and};
\vertex(a) at (0,0);
\vertex(b) at (-1,1) {$i$};
\vertex(c) at (-1,-1) {$i$};
\vertex(d)[dot] at (1,0){};
\vertex(d1) at (2,0){$p$ \: .};
\diagram* {(a)--(b), (a)--(c), (a)--[edge label = $h$](d), (d1)--(d)};
\end{feynman}
\end{tikzpicture}
\end{equation}
Here we marked the unary vertices with a dot, signaling the application of $\cD_1$.

From the first type of diagram we obtain
\begin{equation}
\begin{split}
p\left( (\lambda \gamma^\sigma \theta) h((\lambda \gamma^\mu \theta)(\lambda \gamma^\nu \theta) ) \right) [A_\sigma , \partial_\mu A_\nu] = p\left( (\lambda \gamma^\sigma \theta) (\lambda \gamma_\rho \theta) (\theta \gamma^{\mu \nu \rho} \theta) \right) [A_\sigma , \partial_\mu A_\nu]
\end{split}
\end{equation}
Using the antisymmetry in $\mu$ and $\nu$ and projecting onto $\cD_0$-cohomology this gives
\begin{equation}
(\lambda \gamma_\nu \theta) (\lambda \gamma_\rho \theta) (\theta \gamma^{\mu \nu \rho} \theta) \: [A^\sigma, (dA)_{\mu \sigma}] \: .
\end{equation}

The second diagram gives a contribution of the form
\begin{equation}
p\left( (\lambda \gamma^\sigma \theta) \partial_\sigma h\left( (\lambda \gamma^\mu \theta) (\lambda \gamma^\nu \theta) \right) [A_\mu , A_\nu] \right) = p \left( (\lambda \gamma^\sigma \theta) (\lambda \gamma_\rho\theta) (\theta \gamma^{\mu \nu \rho} \theta) \right) \partial_\sigma [A_\mu, A_\nu]
\end{equation}
Projection to the cohomology gives
\begin{equation}
(\lambda \gamma_\nu \theta) (\lambda \gamma_\rho\theta) (\theta \gamma^{\mu \nu \rho} \theta) \: \partial^\sigma [A_\mu, A_\sigma] \: .
\end{equation}
Together this gives a transferred binary product
\begin{equation}
\mu'_2 : \Omega^1 \times \Omega^1 \longrightarrow \Omega^1 \qquad \mu'_2(A,A)_\mu = [A^\sigma, (dA)_{\mu \sigma}] + \partial^\sigma [A_\mu,A_\sigma] \: .
\end{equation}
By degree reasons, there are no $\cD_1$ insertions allowed for $\mu'_3$. Hence the only contributing diagram is of the form 
\begin{equation}
\begin{tikzpicture}
\begin{feynman}
\vertex at (-2,0) {$\mu'_3 \ = $};
\vertex(a) at (-1,1) {$i$};
\vertex(b) at (-1,0) {$i$};
\vertex(c) at (-1,-1) {$i$};
\vertex(d) at (0,0.5);
\vertex(e) at (1,0);
\vertex(f) at (2,0) {$p$};
\diagram* {(a)--(d), (b)--(d), (d)--[edge label = $h$](e), (c)--(e), (f)--(e)};
\end{feynman}
\end{tikzpicture}
\end{equation}
This diagram gives a contribution of the form
\begin{equation}
p \left( (\lambda \gamma_\rho \theta)(\theta \gamma^{\mu \nu \rho} \theta) (\lambda \gamma^\sigma \theta) [A_\sigma , [A_\mu, A_\nu]] \right)
= (\lambda \gamma_\rho \theta) (\lambda \gamma_\nu \theta) (\theta \gamma^{\mu \nu \rho} \theta) \: [A^\sigma, [A_\mu, A_\sigma]] \:  .
\end{equation}
This gives a product
\begin{equation}
\mu'_3 : \Omega^1 \times \Omega^1 \times \Omega^1 \longrightarrow \Omega^1 \qquad \mu'_3(A,A,A)_\mu = [A^\sigma, [A_\mu, A_\sigma]] \: .
\end{equation}
Thus, we see that the transferred $L_\infty$ structure equips the multiplet with the usual interactions as expected for ten-dimensional super Yang--Mills theory.

\subsection{The pairing}

The ring $R/I$ is Gorenstein, which implies that the minimal free resolution, and hence the component field formulation of the multiplet is equipped with a local (in the sense of Definition \ref{dfn:BVdatum}) pairing. 
At the level of Koszul homology, the pairing is induced by multiplication and projection to the subspace spanned by the top class
\begin{equation}
(\lambda\gamma^\mu \theta)(\lambda\gamma^\nu \theta) (\lambda \gamma^\rho \theta)(\theta \gamma_{\mu\nu\rho} \theta) \: .
\end{equation}
This equips the component field multiplet with a BV structure.

We thus obtained the usual description of ten-dimensional super Yang--Mills theory as an interacting BV theory solely by homotopy transfer from the pure spinor superfield description.

\section{A bestiary of multiplets from modules}

In this final section we construct a variety of equivariant $R/I$-modules and examine the structure of the associated supersymmetric multiplets. We offer some observations connecting certain of these multiplets to constructions in the physics literature, along with some other speculations of various kinds.

\subsection{Presentations of modules and shift symmetry}
Any module $\Gamma$ over any ring $S$ can be described using a free presentation, that is an exact sequence
\begin{equation}
	0 \xlongleftarrow{} \Gamma \xlongleftarrow{} F_0 \xlongleftarrow{\varphi} F_1  \: ,
\end{equation}
where $F_0$ and $F_1$ are free $S$-modules. The module can then be identified as the cokernel of the map $\varphi$
\begin{equation}
\label{eq:shift}
\Gamma \cong \coker \varphi = F_0 / \mathrm{Im}(\varphi) \: .
\end{equation}
As $F_0$ and $F_1$ are free, we can think of $\varphi$ as a matrix with entries in $S$, these entries give the relations to obtain $\Gamma$ as a quotient from $F_0$.
In fact a free presentation is just the start of a free resolution. By resolving kernels we can extend a free presentation to a free resolution
\begin{equation}
	0 \xlongleftarrow{} \Gamma \xlongleftarrow{} F_0 \xlongleftarrow{\varphi_0} F_1  \xlongleftarrow{\varphi_1} F_2 \xlongleftarrow{\varphi_2} \dots \: .
\end{equation}
For $R=\CC[\lambda_1, \dots, \lambda_n]$ it is very easy to study such maps $\varphi$; these just correspond to matrices whose entries are polynomials in $\lambda$. The cokernels of such maps are then $R$-modules. For the pure spinor superfield formalism, it is crucial to use $R/I$-modules as this ensures that the differential $\cD$ squares to zero. Suppose we have an $R$-module defined by a free presentation
\begin{equation}
	\varphi : R^n \longrightarrow R^k \qquad \Gamma = \mathrm{coker}(\varphi) \: .
\end{equation} 
The $R$-module $\Gamma$ descends to a $R/I$-module if the image of $\varphi$ contains $I^k$, that is if the following diagram commutes. Thus we can conveniently construct $R/I$-modules by studying suitable maps between free $R$-modules. If the map $\varphi$ is also equivariant with respect to the action of the Lorentz group on $R$, then the resulting module is also equivariant. Hence, such equivariant maps between free $R$-modules precisely give rise to the desired input for the pure spinor superfield formalism. In the physics literature this procedure was used to construct multiplets in the pure spinor superfield formalism under the name shift symmetry.\footnote{The name ``shift symmetry'' arises from writing out the equivalence relation \eqref{eq:shift} as
$$f_0 \approx f_0 + \varphi(f_1) $$ with explicit representatives $f_0 \in F_0$ and $f_1 \in F_1$.}
See~\cite{Ced-towards, Ced-11d, Cederwall:2011vy, Cederwall}.

We can immediately give a free presentation for the quotient rings $R/I$ which we previously considered. The map
\begin{equation}
	\varphi: R^d \longrightarrow R, \qquad \varphi = \begin{pmatrix}
	\lambda \gamma^0 \lambda & \dots & \lambda \gamma^{d-1} \lambda \: 
	\end{pmatrix}.
\end{equation}
realizes the free presentation $\mathrm{coker}(\varphi) = R/I$.

\subsection{Motivating example of a nontrivial sheaf: the six-dimensional hypermultiplet}

As an example to demonstrate this technique, let us construct the six-dimensional hypermultiplet. We already constructed the six-dimensional vector multiplet from the structure sheaf of the nilpotence variety in~\S\ref{sec: 6d vector}. Recall that for six-dimensional $\cN=(1,0)$ supersymmetry, the odd part of the supertranslation algebra is
\begin{equation}
	S_+ \otimes U \: ,
\end{equation}
where $S_+$ is the fundamental representation of $\mathfrak{su}(4)$ and $U \cong \CC^2$ carries the fundamental representation of $\mathfrak{su}(2)$. The polynomial ring $R$ is nothing but the symmetric algebra on $S_+ \otimes U$ and comes with the natural action of $\mathfrak{su}(4) \times \mathfrak{su}(2)$. There is a unique equivariant map
\begin{equation}
	S_+ \otimes R \longrightarrow U \otimes R
\end{equation}
which is linear in $\lambda$. Choosing a basis for $U$ and $S_+$, this map is represented by
\begin{equation}
\varphi: S_+ \otimes R \longrightarrow U \otimes R \qquad \varphi = 
	\begin{pmatrix}
	\lambda^1_1 & \lambda^2_1 & \lambda^3_1 & \lambda^4_1 \\
	\lambda^1_2 & \lambda^2_2 & \lambda^3_2 & \lambda^4_2
	\end{pmatrix}.
\end{equation}
It is easy to check that the image of $\varphi$ indeed contains $I^2$, thus we can consider $\Gamma = \mathrm{coker}(\varphi)$ as an equivariant $R/I$-module in the pure spinor superfield formalism.

We display the Betti numbers of the minimal free resolution in Table~\ref{t:hyper}.
\begin{center}
	\begin{tabular}{c|ccccc} 
		& $0$ & $1$ & $2$ & $3$ \\
		\hline
		$0$ & $2$ & $4$ & $-$ & $-$ \\ 
		$1$ & $-$ & $-$ & $4$ & $2$ \\
	\end{tabular}
	\captionof{table}{Betti numbers for the six-dimensional hypermultiplet} \label{t:hyper}
\end{center}

The representations appearing in the minimal free resolution can be computed using \textit{Macaulay2} via the highest weight package. The minimal free resolution of $\Gamma$ in $R$-modules takes the form
\begin{equation}
	L^\bullet = R \otimes \left( U \xlongleftarrow{\varphi} S_{+} \xlongleftarrow{\epsilon} \wedge^3 S_{+}   \xlongleftarrow{\varphi^{T}} U \otimes \wedge^4 S_{+}  \right) \: ,
\end{equation}
and is a special case of the Buchsbaum--Rim complex \cite{Buchsbaum} (see~\cite[Appendix A.2.6]{MR1322960}  for a textbook presentation and a description of the differential $\epsilon$ in terms of the
$2\times2$ minors of $\varphi$).

Choosing a basis $e^i$ for $U$ and $s_{\alpha}$ a basis for $S_{+}$, we can write out the differentials in the complex as:
\begin{equation}
\label{eq:differentials6dhyper}
\begin{matrix}
d_1&:& \wedge^1 S_{+}  \longrightarrow  U  & \psi & \mapsto& \lambda_i^{\alpha} \psi_{\alpha} e^i \\
d_2 &:& \wedge^3 S_{+} \longrightarrow  \wedge^1 S_{+} & \psi^{+} &\mapsto& (\lambda^{\alpha}_{i} \lambda^{\beta}_{ j} \epsilon^{ij} )\psi_{\alpha \beta \gamma}^{+} s_{\gamma} \\
d_3 &:&  U \otimes \wedge^4 S_{+} \longrightarrow  \wedge^3 S_{+} & \phi^{+} &\mapsto& \lambda_{\alpha}^i\phi_i^{+} s^{\alpha}. \\
\end{matrix}
\end{equation}
In the last differential we identify $s^{\alpha}$ with $\epsilon^{\alpha \beta \gamma \delta} s_{\beta} \wedge s_{\gamma} \wedge s_{\delta}$ along the isomorphism $\wedge^3 S_{+} \cong S_-$.
The differential  $(\lambda^{\alpha}_{i} \lambda^{\beta}_{ j} \epsilon^{ij} )$ is the differential $\epsilon$ appearing in the Buchsbaum--Rim complex.

As expected, the hypermultiplet  consists of two scalars that form a doublet under $\mathfrak{su}(2)$ as well as fermions in $S_+$ that are neutral under $\mathfrak{su}(2)$ and their corresponding antifields. The two maps are expected to encode the respective equations of motions. We are thus dealing with an on-shell representation of the supersymmetry algebra. The multiplet can be equipped with a pairing which yields a BV structure.

We can use the zig-zag procedure to find representatives for the fields in the multiplet. These are expressed in terms of the basis $e_i$ of $U$.

\begin{table}[h]
	\caption{Representatives for the hypermultiplet in six dimensions organized by $\theta$-degree.}
	\begin{center}
		\begin{tabular}{|c|c|}
			\hline
			Field & Representative in the $\cD_0$-cohomology \\
			\hline
			$\phi$  & $\phi_i e^i$ \\
			\hline
			$\psi$ & $\psi_\alpha \theta^\alpha_i e^i$ \\
			\hline
			$\psi^+$ &$\lambda^\alpha_i \theta^\beta_j \theta^\gamma_k \varepsilon^{ij} \psi^+_{\alpha \beta  \gamma} e^l$ \\
			\hline
			$\phi^+$ & $\lambda^\alpha_i \theta^\beta_j \epsilon^{ij} \theta^\delta_l \theta^\gamma_k \varepsilon_{\alpha \beta \gamma \delta} \phi^{+ k} e^l$ \\
			\hline
		\end{tabular}
	\end{center}
	\label{tab:6d hyper reps}
\end{table}
From the resolution differential, we can easily read off the non-derivative supersymmetry transformations.
\begin{equation}
	\begin{split}
	\delta \phi_i &= \epsilon^\alpha_i \psi_\alpha \\
	\delta \psi_\alpha &= \epsilon^\beta_i \epsilon^\gamma_j \varepsilon^{ij} \psi^+_{\alpha \beta \gamma} \\
	\delta \psi^+_\alpha &= \epsilon_\alpha^i \phi^+_i \\
	\delta \phi^+_i &= 0
	\end{split}
\end{equation}
Again, we see the quadratic transformation involving the fermion and its antifield showing that the supersymmetry algebra only closes up to the equations of motion.

Consequently, the equations of motions are encoded in the transferred differential $\cD'$. There is a first order term $\cD'_1$ acting on the fermion. Given the representatives, it is easy to see that $\cD'_1$ acts by the Dirac operator
\begin{equation}
	\psi \mapsto \slashed{\partial} \psi \: .
\end{equation}
Further, there is a second order differential $\cD'_2$ induced via homotopy transfer which encodes the field equation of the scalar field and which acts via
\begin{equation}
	\cD'_2 = p\circ \left( \cD_1 \circ h \circ \cD_1\right) \circ i \: .
\end{equation}
Acting on the scalar, we find
\begin{equation}
	\cD'_2 \phi = p\left( \cD_1 \; h \; (\lambda_{[\alpha}^{[i} \theta_{\beta]}^{j]} \;  \partial^{[\alpha \beta]} \; \phi^i e_i)\right) \: .
\end{equation}
By degree reasons, applying the homotopy $h$ to the element in the brackets yields an expression in $\theta^2$. On purely representation theoretic grounds, we can see that there is a unique (up to a non-zero prefactor) expression which comes into question, namely
\begin{equation}
	\theta^{(i}_{[\alpha} \theta^{j)}_{\beta]} \; \partial^{[\alpha \beta]} \phi_{(i} e_{j)} \: .
\end{equation}
As a check, we may apply the differential $\cD_0$ to that representative. There we obtain
\begin{equation}
		\lambda^{(i}_{[\alpha} \theta^{j)}_{\beta]} \; \partial^{[\alpha \beta]} \phi_{(i} e_{j)} \: ,
\end{equation}
which, at first sight, does not look like the original element we started with. However, recall that we are working in the module $\Gamma$ which is the quotient $R^2/\mathrm{Im}(\varphi)$. In particular this means that $\lambda^i e_i = 0$ and hence
\begin{equation}
	0 = \lambda^i \theta^j e_i \phi_j = \lambda^{[i} \theta^{j]} e_{[i} \phi_{j]} + \lambda^{(i} \theta^{j)} e_{(i} \phi_{j)} \: ,
\end{equation}
such that we indeed get back our original expression (up to a non-zero prefactor). Moving on, we then easily find
\begin{equation}
	\cD'_2 \phi = (\lambda^\alpha_i \theta^\beta_j \epsilon^{ij} \theta^\delta_l \theta^\gamma_k \varepsilon_{\alpha \beta \gamma \delta}) \; \partial^\mu \partial_\mu \phi^{k} e^l \: ,
\end{equation}
such that the transferred differential indeed encodes the Laplace equation.

Summarizing, the multiplet has the following structure.
\begin{equation}
\begin{tikzcd}[row sep=0.7cm, column sep=0.7cm]
\Omega^0 \otimes U \arrow[drrr, "\star d \star d" ' near start] & \Omega^0 \otimes S_+ \arrow[dr,crossing over, "\slashed{\partial}"] \\
& & \Omega^0 \otimes S_- & \Omega^0 \otimes U  \\
\end{tikzcd}
\end{equation}

This multiplet was defined in Equation (3.2) of~\cite{Ced-6d} using shift symmetry.

\subsection{Lie algebra cohomology}
\label{ssec: CE}
Another natural source for equivariant modules are the Lie algebra cohomology groups of the super translation algebra $\ft$. This was already noted in~\cite{NV}. Recall that the Chevalley--Eilenberg complex takes the form
\begin{equation}
	\mathrm{C}^\bullet(\ft) = \left( \mathrm{Sym}^\bullet(\ft^\vee[1]) \; , \; \d_{CE} \right) \: .
\end{equation}
The Chevalley--Eilenberg differential is induced by the dual of the bracket, which is extended to the whole algebra according to the Leibniz rule. For the supertranslation algebra, the $\ZZ \times \ZZ/2$ grading of the Chevalley--Eilenberg complex lifts to a $\ZZ \times \ZZ$ grading by viewing the supertranslations as a graded Lie algebra as we have done above; recall that $V = \ft_+$ then sits in degree two and~$S = \ft_-$ in degree one.
If we totalize this bigrading, generators in~$V^\vee$ sit in degree $-1$ and  generators in~$S^\vee$ in degree zero. We can thus identify 
$\sym^q(S^\vee) = R = \CC[\lambda^\alpha]$, and write
\begin{equation}
	\mathrm{C}^{-p}(\ft) = \wedge^p(V^\vee) \otimes R  .
\end{equation}
Denoting a basis on $V^\vee$ by $v^\mu$, the Chevalley--Eilenberg differential acts on the generators by
\begin{equation}
\begin{split}
	\d_{CE} v^\mu &= \lambda^\alpha \Gamma^\mu_{\alpha \beta} \lambda^\beta \\
	\d_{CE} \lambda^\alpha &= 0 \: .
\end{split}
\end{equation}
Now two observations turn out to be crucial. First, the zeroth Chevalley--Eilenberg cohomology is nothing else then the structure sheaf of the nilpotence variety
\begin{equation}
	H^0(\mathrm{C}^\bu(\ft) ) = R/I = \cO_Y \: .
\end{equation}
Second, as the Chevalley--Eilenberg complex comes with the structure of a cdgsa, the cohomology is equipped with a multiplication which preserves the grading. Hence, all cohomology groups are $H^0(\mathrm{C}^\bu(\ft)) = R/I$-modules and can thus be used as input data for the pure spinor superfield formalism.

The analysis of examples suggests some speculations about dualities between the multiplets associated to Chevalley--Eilenberg cohomology groups in different degrees. For a start, it seems to be the case that the Chevalley--Eilenberg cohomology groups are concentrated in negative degrees up to $n := \dim(V)-\mathrm{codim}(Y)$. In all examples we have checked there is an isomorphism
\begin{equation}
	\mathrm{Ext}^{\mathrm{codim}(Y)} (R/I, R) \cong H^{-n} (\clie^\bu(\ft)) \: .
\end{equation}
In addition, for the example of ten-dimensional $\cN=1$ supersymmetry, we further observe dualities ``up to a copy of the free superfield'' for the multiplets associated to $H^i(\clie^\bu(\ft))$ and $H^{-n-i}(\clie^\bu(\ft))$.
\subsubsection*{Three-dimensional $\cN=1$}
As a motivating example let us consider again $\cN =1$ supersymmetry in three dimensions. Using \textit{Macaulay2} one can compute the Chevalley--Eilenberg cohomology. Only $H^0$ and $H^{-1}$ are non-vanishing. The zeroth cohomology is $R/I$ and thus gives rise to the gauge multiplet from~\S\ref{ssec: schemes}. As the length of the minimal free resolution is two---which equals the codimension of $Y$---we immediately see that $R/I$ is Cohen--Macaulay. The first cohomology group is represented as the cokernel of the map
\begin{equation}
	\varphi: R^3 \longrightarrow R^2 \qquad \varphi =
	\begin{pmatrix}
	\lambda_1 & 0 & \lambda_2 \\
	0 & \lambda_2 & \lambda_1
	\end{pmatrix}
\end{equation}
The resulting multiplet is the antifield multiplet of the gauge multiplet.
\begin{center}
	\begin{tabular}{c|ccccc} 
		& $0$ & $1$ & $2$ \\
		\hline
		$0$ & $2$ & $3$ & $-$ \\ 
		$1$ & $-$ & $-$ & $1$ \\
	\end{tabular}
	\captionof{table}{Betti numbers for the antifield multiplet.}
\end{center}
Note that, as discussed in~\S\ref{sec: data} we could have also obtained the antifield multiplet from $\mathrm{Ext}^2(R/I,R)$.

\subsubsection*{Four-dimensional $\cN =1$} The Chevalley--Eilenberg cohomology is concentrated in degrees zero, minus one and, minus two. As the zeroth cohomology is just $\cO_Y$, the corresponding multiplet is the gauge multiplet. The first cohomology group yields a multiplet with the following Betti numbers.
\begin{center}
	\begin{tabular}{c|ccccc} 
		& $0$ & $1$ & $2$ & $3$ & $4$\\
		\hline
		$0$ & $4$ & $7$ & $-$ \\ 
		$1$ & $-$ & $-$ & $6$ & $4$ & $1$
	\end{tabular}
	\captionof{table}{Betti numbers $H^{-1}(\clie^\bu(\ft))$.}
\end{center}
Decomposing the minimal free resolution equivariantly, we find
\begin{equation}
L^\bullet = R \otimes \left( S_+ \oplus S_- \xlongleftarrow{(d_L)_1} \wedge^2 V \oplus \C \xlongleftarrow{(d_L)_2} \wedge^3 V \oplus \C^2 \xlongleftarrow{(d_L)_3} S_+ \oplus S_- \xlongleftarrow{(d_L)_4} \C \right) \: .
\end{equation}
Thus we see that this multiplet contains a two-form. It would be interesting to interpret this as a field-strength multiplet.

The second Chevalley--Eilenberg cohomology yields two copies of the chiral multiplet.
\begin{samepage}
	\begin{center}
		\begin{tabular}{c|ccccc} 
			& $0$ & $1$ & $2$ \\
			\hline
			$0$ & $2$ & $4$ & $2$ \\ 
		\end{tabular}
		\captionof{table}{Betti numbers for $H^{-2}(\clie^\bu(\ft))$.}
	\end{center}
\end{samepage}
Note that this precisely matches with $\mathrm{Ext}^2(R/I,R)$ as described in~\S\ref{ssec: failure}.

\subsubsection*{Ten-dimensional $\cN=1$}
\label{ssec:tendstrans}
Let us further study the multiplets associated to the ten-dimensional Che\-val\-ley--Ei\-len\-berg cohomology of the ten-dimensional $\cN=1$ supertranslation algebra. These cohomology groups were already computed equivariantly in~\cite{MovshevSchwarzXu}
The multiplet associated to the first Chevalley--Eilenberg cohomology has the following Betti numbers.
\begin{center}
	\begin{tabular}{c|cccccccccccccccccccc} 
		& $0$ & $1$ & $2$ & $3$ & $4$ & $5$ & $6$ & $7$ & $8$ & $9$ & $10$ & $11$ & $12$ & $13$ & $14$ & $15$ & $16$  \\
		\hline
		$0$ & $16$ & $45$ & $-$ & $-$ & $-$ & $-$ & $-$ & $-$ & $-$ & $-$ & $-$ & $-$ & $-$ & $-$ & $-$ & $-$ & $-$  \\
		$1$ & $-$ & $16$ & $250$ & $720$ & $1874$ & $4368$ & $8008$ & $11440$ & $12870$ & $11440$ & $8008$ & $4368$ & $1820$ & $560$ & $120$ & $16$ & $1$ \\
		$2$ & $-$ & $-$ & $-$ & $-$ & $16$ & $10$ & $-$ & $-$ & $-$ & $-$ & $-$ & $-$ &$-$ & $-$ & $-$ & $-$ & $-$
	\end{tabular}
	\captionof{table}{Betti numbers for $H^{-1}(\clie^\bu(\ft))$.}
	\label{t: 10d H1}
\end{center}
We notice that the graded rank (with respect to the homological degree) of the associated vector bundle over spacetime---which, in physical terms, corresponds to the number of degrees of freedom---is given by
\begin{equation}
\begin{split}
	&-16 -45 \\
	&+ 16 + 250 + 720 + 1874 + 4368 + 8008 + 11440 + 12870 + 11440 + 8008 + 4368 + 1820 + 560 + 120 + 16 + 1 \\ 
	&-16 -10 \\
	&= 65792 = (2^{15} + 2^{15}) + (128 + 128) \: .
\end{split}
\end{equation}
This precisely matches the number of degrees of freedom of the supercurrent multiplet constructed in~\cite{crazylargesuperfield}. Further, recall that the free superfield just corresponds to the exterior algebra $\wedge^\bu S$ on 16 generators. Hence, its Betti numbers are precisely binomial coefficients $\binom{16}{i}$. We note that Table~\ref{t: 10d H1} contains precisely such coefficients, except for a missing $1$ in degree $(0,1)$. However, we can add a trivial pair in degrees $(0,1)$ and $(1,0)$. Then we can subtract the respective Betti numbers of the free superfield to obtain the following table.
\begin{center}
	\begin{tabular}{c|cccccccccccccccccccc} 
		& $0$ & $1$ & $2$ & $3$ & $4$ & $5$  \\
		\hline
		$0$ & $16$ & $45+1$ & $-$ & $-$ & $-$ & $-$  \\
		$1$ & $-$ & $-$ & $130$ & $160$ & $154$ & $-$ \\
		$2$ & $-$ & $-$ & $-$ & $-$ & $16$ & $10$
	\end{tabular}
	\captionof{table}{Subtracted Betti table.}
\end{center}
This is precisely the dual of the Betti table of $H^{-4}(\clie^\bu(\ft))$, which is displayed in Table~\ref{t:H4}. We remark that this ``almost-duality'' phenomenon is closely analogous to the structure sheaf of 4d $\N=1$; it reflects the failure of the module to be Cohen--Macaulay. We further note that the fields in the first row are a spinor, a two-form, and a scalar; it is  tempting to interpret this as a field-strength multiplet, containing the gaugino $\chi$ and the field strength $F$ of  the gauge  field, and subject to certain constraints.

The multiplet associated to $H^{-2}(\clie^\bu(\ft))$ is the stress-energy tensor multiplet or supercurrent multiplet.
Its Betti table is displayed in Table~\ref{t:H2}.
\begin{center}
	\begin{tabular}{c|cccccccccccccccccccc} 
		& $0$ & $1$ & $2$ & $3$ & $4$ & $5$ & $6$ & $7$ & $8$ & $9$ & $10$ & $11$ & $12$ & $13$ & $14$  \\
		\hline
		$0$ & $120$ & $720$ & $2130$ & $4512$ & $8008$ & $11440$ & $12870$ & $11440$ & $8008$ & $4368$ & $1820$ & $560$ & $120$ & $16$ & $1$  \\
		$1$ & $-$ & $-$ & $-$ & $136$ & $160$ & $45$ & $-$ & $-$ & $-$ & $-$ & $-$ & $-$ & $-$ & $-$ & $-$ \\
	\end{tabular}
	\captionof{table}{Betti numbers for $H^{-2}(\clie^\bu(\ft))$.}
	\label{t:H2}
\end{center}

The supercurrent multiplet can be constructed as
$$J_{\mu\nu\rho} = \tr \chi \gamma_{\mu\nu\rho} \chi.$$
Here $\gamma$ just represents the isomorphism $\wedge^2(S_+) \cong \wedge^3(V)$, and $\chi$ is the spinor superfield describing on-shell Yang-Mills theory \cite{Howe:1987ik} that corresponds to $H^1(\clie^\bu(\ft))$.  Alternatively it can be described as an abstract superfield satisfying the constraints \cite{Howe:1987ik}
\begin{align*}
D_{\alpha} J_{abc} = (\gamma_{[a} J^1_{bc]})_{\alpha} + (\gamma_{[ab} J^1_{c]})_{\alpha} + (\gamma_{[abc]} J^1)_{\alpha} 
\end{align*}
where the superfields $J^1_{bc \alpha}, J^1_{c \alpha},$ and $J^1_{\alpha}$ are three superfields in the representation $[0,1,0,1,0], [1,0,0,0,1]$
and $[0,0,0,1,0]$.  The total dimension of the constraints is $560 + 144 + 16 = 720.$ 
The leading component of $J_{abc}$ is in the $\wedge^3 V$ representation $[0,0,1,0,0]$ of dimension 120.

Again, introducing trivial pairs and subtracting precisely yields the dual of the Betti table of $H^{-3}(\clie^\bu(\ft))$, which we display in Table~\ref{t: H3}.

\begin{center}
	\begin{tabular}{c|cccccccccccccccccccc} 
		& $0$ & $1$ & $2$ & $3$ & $4$ & $5$ & $6$  \\
		\hline
		$0$ & $45$ & $160$ & $136$ & $-$ & $-$ & $-$ & $-$  \\
		$1$ & $-$ & $-$ & $144$ & $310$ & $160$ & $-$ & $-$ \\
		$2$ & $-$ & $-$ & $-$ & $-$ & $-$ & $16$ & $1$
	\end{tabular}
	\captionof{table}{Betti numbers for $H^{-3}(\clie^\bu(\ft))$.}
	\label{t: H3}
\end{center}

\begin{center}
	\begin{tabular}{c|cccccccccccccccccccc} 
		& $0$ & $1$ & $2$ & $3$ & $4$ & $5$  \\
		\hline
		$0$ & $10$ & $16$ & $-$ & $-$ & $-$ & $-$  \\
		$1$ & $-$ & $54$ & $160$ & $130$ & $-$ & $-$ \\
		$2$ & $-$ & $-$ & $-$ & $-$ & $46$ & $10$
	\end{tabular}
	\captionof{table}{Betti numbers for $H^{-4}(\clie^\bu(\ft))$.}
	\label{t:H4}
\end{center}
Finally, $H^{-5}(\clie^\bu(\ft)) \cong R/I$ again yields the vector multiplet. Note that $Y$ is Gorenstein and of codimension five, such that $\mathrm{Ext}^5(R/I,R) \cong R/I$.
\begin{center}
		\begin{tabular}{c|cccccc} 
		& $0$ & $1$ & $2$ & $3$ & $4$ & $5$ \\
		\hline
		$0$ & $1$ & $-$ & $-$ & $-$ & $-$  $-$ \\ 
		$1$ & $-$ & $10$ & $16$ & $-$ &$-$ & $-$ \\
		$2$ & $-$ & $-$ & $-$ & $16$ & $10$ & $-$ \\
		$3$ & $-$ & $-$ & $-$ & $-$ & $-$ & $1$ \\
	\end{tabular}
	\captionof{table}{Betti numbers for $H^{-5}(\clie^\bu(\ft))$.}
\end{center}

\subsection{Six-dimensional multiplets from line bundles}
Recall that the six-dimensional nilpotence variety can be identified with $\CC P^1 \times \CC P^3$ using the Segre embedding. Line bundles on $\CC P^n$ are classified by a single integer $j \in \ZZ$ and are denoted by $\cO(j)$. Using the projections
\begin{equation}
\begin{tikzcd}
\CC P^1 \times \CC P^3 \arrow[r, "\pi_3"] \arrow[d, "\pi_1"] & \CC P^3 \\
\CC P^1
\end{tikzcd}
\end{equation}
we can define a family of line bundles
\begin{equation}
	\cO(i,j) := \pi_3^*\cO(i) \otimes \pi_1^* \cO(j)
\end{equation}
on $\CC P^1 \times \CC P^3$.
This family has been investigated in the physics literature \cite{Kuzenko:2017zsw}.

Let us here list the corresponding multiplets for some integers $i$ and $j$.
Clearly $\cO(0,0)$ is just the structure sheaf of the nilpotence variety and hence the corresponding multiplet is the vector multiplet. $\cO(0,1)$ is the hypermultiplet, which we studied above. $\cO(0,2)$ is the antifield multiplet of the vector.

For $\cO(0,3)$ a multiplet with the following Betti numbers arises.
\begin{center}
	\begin{tabular}{c|ccccc} 
		& $0$ & $1$ & $2$ & $3$ \\
		\hline
		$0$ & $4$ & $12$ & $12$ & $4$ \\
	\end{tabular}
	\captionof{table}{Betti numbers for $\cO(0,3)$.}
\end{center}
The minimal free resolution of the module in $R$-modules takes the form
\begin{equation}
	L^\bullet = R \otimes \left( \CC^4 \xlongleftarrow{(d_L)_1} \CC^3 \otimes S_{+} \xlongleftarrow{(d_L)_2} \CC^2 \otimes \wedge^2 S_{+}   \xlongleftarrow{(d_L)_3} \CC^1 \otimes \wedge^3 S_{+}  \right) \: ,
\end{equation}

The multiplet for $\cO(0,4)$ is a building block in the construction of the ``relaxed hypermultiplet'' \cite{Howe:1982tm}. 
\begin{center}
	\begin{tabular}{c|ccccc} 
		& $0$ & $1$ & $2$ & $3$ & $4$\\
		\hline
		$0$ & $5$ & $16$ & $18$ & $8$ & $1$\\
	\end{tabular}
	\captionof{table}{Betti numbers for $\cO(0,4)$.}
\end{center}
The minimal free resolution of the module in $R$-modules takes the form
\begin{equation}
	L^\bullet = R \otimes \left( \CC^5 \xlongleftarrow{(d_L)_1} \CC^4 \otimes S_{+} \xlongleftarrow{(d_L)_2} \CC^3 \otimes \wedge^2 S_{+}   \xlongleftarrow{(d_L)_3} \CC^2 \otimes \wedge^3 S_{+}  \xlongleftarrow{(d_L)_4} \CC^1 \otimes \wedge^4 S_{+}  \right) \: ,
\end{equation}
The minimal free resolutions are ``twisted Lascoux'' complexes which are described with their differentials in \cite{MR3223878}.\footnote{The bestiary of multiplets from Lascoux complexes was partly inspired by the bestiary of fauna depicted in the Lascaux cave and the work of Tristan H\"ubsch.}

\subsection{Conormal modules}
Denoting the defining ideal of the nilpotence variety by $I$, the conormal module is defined as the quotient $I/I^2$. This gives another interesting module to consider as an input for the pure spinor superfield formalism. The resulting multiplets seem to often correspond to supergravity theories. We demonstrate this in low dimensions.
\subsubsection*{Three-dimensional $\cN =1$}
The resulting multiplet has the following Betti numbers.
\begin{center}
	\begin{tabular}{c|ccccc} 
		& $0$ & $1$ & $2$ \\
		\hline
		$2$ & $3$ & $2$ & $-$ \\ 
		$3$ & $-$ & $5$ & $4$ \\
	\end{tabular}
	\captionof{table}{Betti numbers for the conormal module in three-dimensional $\cN=1$.}
\end{center}
Investigating the Hilbert series, we find that all occuring representations are irreducible representations of the spin group $\mathrm{Spin}(3) \cong SU(2)$. Thus the first line contains a vector and a spinor, while the second line can be identified with a symmetric traceless tensor and the four-dimensional part of the decomposition
\begin{equation}
	S \otimes V \cong [1] \oplus [3] \: .
\end{equation}
\subsubsection*{Four-dimensional $\cN=1$}
In four dimensions the conormal module yields a multiplet with the following Betti numbers.
\begin{center}
	\begin{tabular}{c|ccccc} 
		& $0$ & $1$ & $2$ & $3$ \\
		\hline
		$2$ & $4$ & $4$ & $1$ & $-$ \\ 
		$3$ & $-$ & $9$ & $12$ & $4$ \\
	\end{tabular}
	\captionof{table}{Betti numbers for the conormal module in four-dimensional $\cN=1$.}
\end{center}
Investigating the Hilbert series we find that the representations in the first line are a vector, a Dirac spinor and a scalar. The nine-dimensional representation in the second line is a symmetric traceless tensor. The twelve dimensional representation has Dynkin labels $[2,1] \oplus [1,2]$. Thus the multiplet consists of one spin-2, two spin-$\frac{3}{2}$ and a single spin-1 field. In terms of Dynkin labels, the multiplet takes the following form.
\begin{equation}
\begin{tikzcd}[row sep=0.7cm, column sep=0.7cm]
\left[1,1 \right] \arrow[dr] & \left[1,0\right] \oplus \left[0,1\right] \arrow[dr] & \left[0,0\right] \arrow[dr] \\
 & \left[2,2\right] & \left[2,1\right] \oplus [1,2]  & \left[1,1\right]  \\
\end{tikzcd}
\end{equation}
This matches the field content of the massive spin-two multiplet in four-dimensional $\cN=1$ supersymmetry.
\subsubsection*{Ten-dimensional $\N=1$}
In this case, by a pleasing coincidence, the conormal module coincides with the module $H^{-4}(\clie^\bu(\ft))$ constructed above. The resolution was studied in~\cite[Corollary 4.4]{MR3833474}.

\subsection{Dimensional reduction and restriction to strata; the 4d $\N=2$ tensor multiplet}
There are interesting relations between the nilpotence varieties of supersymmetry algebras in different dimensions, for instance the nilpotence variety of a higher dimensional supersymmetry algebra may sit inside the nilpotence variety of a lower dimensional one. The resulting multiplets will then be related by dimensional reduction. We illustrate this by considering the relation between six-dimensional $\cN=(1,0)$ and four-dimensional $\cN=2$ supersymmetry. Recall that we described the nilpotence variety for six-dimensional $\cN=(1,0)$ supersymmetry by the $2\times 2$-minors of a $2 \times 4$-matrix with entries $\lambda^\alpha_i$. As explained in~\cite{NV} one obtains the nilpotence variety for four-dimensional $\cN=2$ supersymmetry by replacing
\begin{equation}
	\lambda^\alpha_i \longrightarrow (\lambda^\beta_i , \bar{\lambda}^{\dot{\beta}}_i) \: ,
\end{equation}
and throwing away the two minors which do not mix the different chiralities. Hence there is an inclusion
\begin{equation}
	i: Y(6;1,0) \hookrightarrow Y(4;2) \: ,
\end{equation}
whose image we denote by $Y_0$. In fact the global structure of $Y(4;2)$ is easily described. It consists of three strata; in addition to $Y_0$ there are two copies of $(S_{\pm} \otimes U) \cong \C^4$ corresponding to solutions where $\lambda = 0$ or $\bar{\lambda} = 0$ respectively:
\begin{equation}
	Y(4;2) = Y_0 \cup Y_1 \cup Y_2 \cong Y(6;1,0) \cup (S_+ \otimes U) \cup(S_- \otimes U) \: .
\end{equation}
Pushing forward the structure sheaf $\cO_{Y(6;1,0)}$ along $i$ we thus obtain $\cO_{Y_0}$. As we already discussed at multiple occasions, the structure sheaf $\cO_{Y(6;1,0)}$ produces the vector multiplet. Clearly, considering $\cO_{Y_0}$ in the pure spinor superfield formalism gives a multiplet with the same Betti numbers; only the weights have to be adapted to four dimensions. Resolving $\cO_{Y_0}$ equivariantly, we see that the six-dimensional vector splits up into a four-dimensional vector and two scalars. The fermion gives two Dirac fermions in four dimensions and the scalars remain scalars. Hence, the resulting multiplet is precisely the $\cN=2$ vector multiplet in four-dimensions as one can obtain it from dimensional reduction. A similar phenomenon holds in general: given a multiplet in dimension $d$, we can push the corresponding sheaf forward along the dimensional reduction map to obtain the dimensionally reduced multiplet.

Interestingly, considering $\cO_{Y(4;2)}$ as an input in the pure spinor superfield machinery gives a multiplet with the following Betti numbers.
\begin{center}
	\begin{tabular}{c|ccccc} 
		& $0$ & $1$ & $2$ & $3$ & $4$\\
		\hline
		$0$ & $1$ & $-$ & $-$ & $-$ & $-$\\ 
		$1$ & $-$ & $4$ & $-$ & $-$ & $-$\\
		$2$ & $-$ & $-$ & $9$ & $8$ & $2$
	\end{tabular}
	\captionof{table}{Betti numbers for the structure sheaf for four-dimensional $\cN=2$.}
\end{center}
Working equivariantly, the minimal free resolution gives
\begin{equation}
L^\bullet = R \otimes \left( \CC \xlongleftarrow{(d_L)_1} V \xlongleftarrow{(d_L)_2} \wedge^2 V \oplus \C^3   \xlongleftarrow{(d_L)_3} (S_+ \otimes U) \oplus (S_- \otimes U)  \xlongleftarrow{(d_L)_4} \C_{2} \oplus \C_{-2} \right) \: ,
\end{equation}
where $\C^3$ carries the adjoint representation of $SU(2)_R$ and has $U(1)_R$-charge $0$ while the two scalars in the top degree have $U(1)_R$-charges $+2$ and $-2$ as indicated by the subscript. This is the field content of a tensor multiplet as described in~\cite{deWitTensor,JurcoReview}.

Of course we can also restrict to the other strata. The minimal free resolutions are then exterior algebras $\wedge^\bu (S_\pm \otimes U)$, the resulting multiplets are thus chiral multiplets as described in~\cite{deWitChiral}.

\subsection{Outro}
There are, of course, many more constructions possible to obtain equivariant modules and all of these may be applied in the context of the pure spinor superfield formalism. For example it may be interesting to study tensor products, symmetric or exterior powers. Further, there are geometric constructions, such as tangent and cotangent sheaves, just to name a few. It would be particularly interesting to find physical interpretations for such constructions; one would also hope to better understand functorial properties and to develop the pure spinor formalism into an appropriate equivalence of categories,\footnote{We plan to return to such structural properties of pure spinor superfields in future work with C.~Elliott.} thus finally bringing order to the bestiary of supersymmetric multiplets.

\appendix

\section{Homotopy transfer for $L_\infty$ modules} \label{ap:modules}
Let $(L,\tilde{\mu}_k)$ be a (super) $L_\infty$ algebra and $(V,d_V,\rho^{(j)})$ an $L_\infty$ module for $L$. As was explained in~\cite{Lada}, the $L_\infty$ module structure gives rise to an $L_\infty$-structure on $L\oplus V$. Explicitly we can define (setting $\rho^{(0)} = d_V$)
\begin{equation} \label{L-mod sum}
\mu_k((x_1,v_1),\dots,(x_k,v_k)) = \left( \tilde{\mu}_k(x_1,\dots,x_k) , \sum_{i=1}^{k} \pm \rho^{(k-1)}(x_1,\dots,\hat{x}_i,\dots,x_k)v_i \right) \: .
\end{equation}
For example, if $(L,[.,.])$ is a (super) Lie algebra and $\rho$ is a strict action, we find
\begin{equation}
\begin{split}
\mu_1((x,v)) &= (0,d_V v) \\
\mu_2((x_1,v_1),(x_2,v_2) ) &= ([x_1,x_2] \: , \: \rho(x_1)v_2 - \rho(x_2)v_1 )
\end{split}
\end{equation}
All higher order operations vanish. Now suppose we have homotopy data
\begin{equation}
\begin{tikzcd}
\arrow[loop left]{l}{h}(V,d_V)\arrow[r, shift left, "p"] &(W, d_W)\arrow[l, shift left, "i"]
\end{tikzcd}
\end{equation}
and want to transfer an $L_\infty$ module structure on $V$ to a new $L_\infty$ module structure on $W$. The fact that these $L_\infty$ module structures can be thought of as $L_\infty$-structures on $L \oplus V$ and $L \oplus W$ suggests to extend the above homotopy data to
\begin{equation}
\begin{tikzcd}
\arrow[loop left]{l}{\mathrm{id} \oplus h}(L\oplus V,d_V)\arrow[r, shift left, "\mathrm{id} \oplus p"] &(L \oplus W, d_W)\arrow[l, shift left, "\mathrm{id} \oplus i"]
\end{tikzcd}
\end{equation}
and then to use the usual homotopy transfer for $L_\infty$-structures. Let us denote the transferred $L_\infty$-structure on $L\oplus W$ by $\mu'_k$. We can read off the transferred module action $\rho'^{(k)}$ as follows. Let
\begin{equation}
\pi : L \oplus V \longrightarrow V
\end{equation}
be the obvious projection. Then~(\ref{L-mod sum}) implies
\begin{equation} \label{rho and mu}
\rho'^{(k)}(x_1,\dots,x_k)w = \pi\left(\mu'_{k+1}((x_1,0),\dots,(x_k,0), (x_{k+1},w) ) \right) \: .
\end{equation}
As usual, the transferred $L_\infty$-structure $\mu'_k$ can be calculated by sum over trees formulas. Using this, one can also derive sum over tree formulas for the induced action $\rho'$. For our purposes we are only interested in the case where $L = \mathfrak{g}$ is a (super) Lie algebra and $\rho$ is a strict action. As explained above, this means that $(\mathfrak{g} \oplus V, \mu_k)$ is a dg-Lie algebra. In this case the $L_\infty$-structure on $\mathfrak{g}\oplus W$ is computed by the sum over all rooted binary trees by decorating each leaf with the inclusion $i$, each internal line with the homotopy $h$, and the root by the projection $p$. A vertex means the application of the product $\mu_2$. In the case of the binary product one writes:
\begin{equation}
\begin{tikzpicture}
\begin{feynman}
\vertex at (-2,0) {$\mu'_2 \ = $};
\vertex(a) at (0,0);
\vertex(b) at (-1,1) {$i$};
\vertex(c) at (-1,-1) {$i$};
\vertex(d) at (1,0) {$p$ \: .};
\diagram* {(a)--(b), (a)--(c), (a)--(d)};
\end{feynman}
\end{tikzpicture}
\end{equation}
In formulas this means
\begin{equation}
\mu'_2\left((x_1,w_1),(x_2,w_2)\right) = \left( [x_1,x_2] \: , \: p(\rho(x_1)i(w_2) \pm \rho(x_2)i(w_1)) \right) \: .
\end{equation}
Accordingly we find for the $L_\infty$ module action $\rho'$
\begin{equation}
\rho'^{(1)} = p \circ \rho \circ i \: .
\end{equation}
In the case of $\mu'_3$ we can write
\begin{equation}
\begin{tikzpicture}
\begin{feynman}
\vertex at (-2,0) {$\mu'_3 \ = $};
\vertex(a) at (-1,1) {$i$};
\vertex(b) at (-1,0) {$i$};
\vertex(c) at (-1,-1) {$i$};
\vertex(d) at (0,0.5);
\vertex(e) at (1,0);
\vertex(f) at (2,0) {$p$};
\diagram* {(a)--(d), (b)--(d), (d)--[edge label = $h$](e), (c)--(e), (f)--(e)};
\end{feynman}
\end{tikzpicture}
\begin{tikzpicture}
\begin{feynman}
\vertex at (-2,0) {$\pm$};
\vertex(a) at (-1,1) {$i$};
\vertex(b) at (-1,0) {$i$};
\vertex(c) at (-1,-1) {$i$};
\vertex(d) at (0,-0.5);
\vertex(e) at (1,0);
\vertex(f) at (2,0) {$p$};
\diagram* {(a)--(e), (b)--(d), (d)--[edge label = $h$](e), (c)--(d), (f)--(e)};
\end{feynman}
\end{tikzpicture}
\begin{tikzpicture}
\begin{feynman}
\vertex at (-2,0) {$\pm$};
\vertex(a) at (-1,1) {$i$};
\vertex(b) at (-1,0) {$i$};
\vertex(c) at (-1,-1) {$i$};
\vertex(d) at (0, 0.5);
\vertex(e) at (1,0);
\vertex(e') at (-0.48,0);
\vertex(e'') at (-0.18,0);
\vertex(f) at (2,0) {$p$ \: .};
\diagram* {(a)--(d), (c)--(d), (b)--(e'),(e'')--(e), (d)--[edge label = $h$](e), (f)--(e)};
\end{feynman}
\end{tikzpicture}
\end{equation}
This gives for $\rho'^{(2)}$
\begin{equation} \label{rho2}
\rho'^{(2)}(x_1,x_2) = p \circ \left( \rho(x_1)h\rho(x_2) \pm \rho(x_2)h\rho(x_1) \right) \circ i \: .
\end{equation}
In this manner we can also obtain a general sum over trees representation for $\rho'^{(k)}$ in terms of $\rho$. Using equations~(\ref{rho and mu}) and~(\ref{L-mod sum}) we see that $\rho'^{(k)}$ can be obtained from binary rooted trees with $k+1$ leaves by the following rules. Label the first $k$ leaves by elements  $x_1,\dots,x_k$ and the last one by the inclusion $i$. Keep only those trees where there are no vertices connecting two elements of $\mathfrak{g}$. As usual, each internal line carries the homotopy $h$ and the root is decorated by $p$. A vertex now means ``apply $\rho(x_i)$''. For example we can write for $\rho'^{(2)}$:
\begin{equation}
\begin{tikzpicture}
\begin{feynman}
\vertex at (-4,0) {$\rho'^{(2)}(x_1,x_2)$ \ \ $=$};
\vertex(a) at (-1,1) {$x_1$};
\vertex(b) at (-1,0) {$x_2$};
\vertex(c) at (-1,-1) {$i$};
\vertex(d) at (0,-0.5);
\vertex(e) at (1,0);
\vertex(f) at (2,0) {$p$};
\diagram* {(a)--(e), (b)--(d), (d)--[edge label = $h$](e), (c)--(d), (f)--(e)};
\end{feynman}
\end{tikzpicture}
\begin{tikzpicture}
\begin{feynman}
\vertex at (-2,0) {$\pm$};
\vertex(a) at (-1,1) {$x_1$};
\vertex(b) at (-1,0) {$x_2$};
\vertex(c) at (-1,-1) {$i$};
\vertex(d) at (0, 0.5);
\vertex(e) at (1,0);
\vertex(e') at (-0.48,0);
\vertex(e'') at (-0.18,0);
\vertex(f) at (2,0) {$p$ \: .};
\diagram* {(a)--(d), (c)--(d), (b)--(e'),(e'')--(e), (d)--[edge label = $h$](e), (f)--(e)};
\end{feynman}
\end{tikzpicture}
\end{equation}
Clearly this recovers~(\ref{rho2}).

\printbibliography
\end{document}